\newcommand{\probdef}[4]{
  {\parskip=0pt\par\nopagebreak}
    \begin{center}
    \centering
    \begin{tikzpicture}%
      \node[draw=black!20, rounded corners, inner sep=2ex,text width=#1] {
        \begin{minipage}{\textwidth}%
          \begin{tabular*}{\textwidth}{@{\hspace{.1em}} >{\itshape} p{1.2cm} p{0.85\textwidth} @{}}%
            Input: & #3
            \\
            Output: & #4
          \end{tabular*}%
        \end{minipage}
      };
      \coordinate(x) at (current bounding box.north west);
      \node [draw=white,rectangle,inner sep=3pt,anchor=north west,fill=white] 
      at ($(x)+(6pt,.75em)$) {#2};
    \end{tikzpicture}
    {\par\noindent\ignorespacesafterend}
  \end{center}
}
\theoremstyle{plain} 
\newtheorem{theorem}{Theorem}[section]
\newtheorem{lemma}[theorem]{Lemma}
\newtheorem{corollary}[theorem]{Corollary}
\newtheorem{proposition}[theorem]{Proposition}
\newtheorem{reduction}{Rule}[section]
\tikzstyle{filled vertex}  = [{circle,draw=blue,fill=black!50,inner sep=1pt}]  
\tikzstyle{uvertex} = [{violet, draw, fill=violet!50,inner sep=2pt}]  
\title{Improved Kernels for Edge Modification Problems}
\author{Yixin Cao\thanks{Department of Computing, Hong Kong Polytechnic University, Hong Kong, China.  {\tt yixin.cao@polyu.edu.hk, yuping.ke@connect.polyu.hk}.}
    \and
    Yuping Ke\footnotemark[1] 
}
\begin{document}
\maketitle

\begin{abstract}
  In an edge modification problem, we are asked to modify at most $k$ edges to a given graph to make the graph satisfy a certain property.
  Depending on the operations allowed, we have the completion problems and the edge deletion problems.
  A great amount of efforts have been devoted to understanding the kernelization complexity of these problems.
  We revisit several well-studied edge modification problems, and develop improved kernels for them:
  \begin{itemize}
  \item a $2 k$-vertex kernel for the cluster edge deletion problem, 
  \item a $3 k^2$-vertex kernel for the trivially perfect completion problem,
  \item a $5 k^{1.5}$-vertex kernel for the split completion problem and the split edge deletion problem, and
  \item a $5 k^{1.5}$-vertex kernel for the pseudo-split completion problem and the pseudo-split edge deletion problem.
  \end{itemize}
  Moreover, our kernels for split completion and pseudo-split completion have only $O(k^{2.5})$ edges.
  Our results also include a $2 k$-vertex kernel for the strong triadic closure problem, which is related to cluster edge deletion.
\end{abstract}

\section{Introduction}\label{sec:intro}

In an edge modification problem, we are asked to modify at most $k$ edges to a given graph $G$ to make the graph satisfy a certain property.  In particular, we have  edge deletion problems and completion problems when the allowed operations are edge additions and, respectively, edge deletions.  There is also a more general version that allows both operations.  The present paper will be focused on a single type of modifications.
For most graph properties, these edge modification problems are known to be NP-complete~\cite{yannakakis-81-edge-deletion,sharan-02-thesis,mancini-08-thesis}.
A graph $G$ having a certain property is equivalent to that $G$ belongs to some specific graph class.  Cai~\cite{cai-96-hereditary-graph-modification} observed that if the desired graph class can be characterized by a finite number of forbidden induced subgraphs, then these problems are fixed-parameter tractable. 

One is then naturally interested in the kernelization complexity of edge modification problems toward these \emph{easy} graph classes.  Given an instance $(G, k)$, a {\em kernelization algorithm} produces in polynomial time an equivalent instance $(G', k')$---$(G, k)$ is a yes-instance if and only if $(G', k')$ is a yes-instance---such that $k' \leq k$.  The output instance $(G', k')$ is a \textit{polynomial kernel} if the size of $G'$ is bounded from above by a polynomial function of $k'$.  Although progress has been made in this regard, we get stuck for several important graph classes.
We have evidence that some of them do not have polynomial kernels, under certain complexity assumptions, and it is believed that those that do have are exceptions \cite{marx-20-incompressibility}.  This makes a sharp contrast with the vertex deletion problems (deleting vertices instead of edges), for which a polynomial kernel is guaranteed when the number of forbidden induced subgraphs is finite \cite{flum-grohe-06}.
We refer the reader to the recent survey of Crespelle et al.~\cite{crespelle-20-survey-edge-modification}, particularly its Section 2.1 and Table 1, for the most relevant results.

We revisit several well-studied edge modification problems, and develop improved kernels for them.  Our results are summarized in Table~\ref{table:results}.  All the destination graph classes can be defined by a small number of forbidden induced subgraphs (listed in Figure~\ref{fig:small-graphs}).
It is worth mentioning that the edge deletion problem to a graph class is polynomially equivalent to the completion problem to its complement graph class (consisting of the complements of all graphs in the original graph class).
Moreover, some graph classes, e.g., split graphs ($\{2K_2, C_4, C_5\}$-free), are self-complementary, and thus the edge deletion problem and the completion problem toward such a class are equivalent.

\begin{table}[ht]
  \centering
  \begin{tabular}{l r l }
    \toprule
    problem & previous result & our result
    \\ \midrule
    cluster edge deletion & $4 k$ \cite{gruttemeier-20-strong-triadic-closure} & $2 k$
    \\
    trivially perfect completion & $O(k^7)$ \cite{drange-18-kernel-trivially-perfect} & $3 k^2$
    \\
    split completion (edge deletion) & $O(k^2)$ \cite{ghosh-15-split-completion} & $5 k^{1.5}$
    \\
    pseudo-split completion (edge deletion) & - & $5 k^{1.5}$
    \\ \midrule
    strong triadic closure & $4 k$ \cite{gruttemeier-20-strong-triadic-closure} & $2 k$
    \\
    \bottomrule
  \end{tabular}
  \caption{Main results of this paper, shown as the number of vertices in the kernels.
  }
  \label{table:results}
\end{table}

\begin{figure}[h]
  \centering\small
  \begin{subfigure}[b]{.15\linewidth}
    \centering
    \begin{tikzpicture}[every node/.style={filled vertex},scale=.5]
      \node (a) at (-1,0) {};
      \node (c) at (1,0) {};
      \node (b) at (-1,2) {};
      \draw (b) -- (a) -- (c);
    \end{tikzpicture}
    \caption{$P_3$}
  \end{subfigure}  
  \,
  \begin{subfigure}[b]{.15\linewidth}
    \centering
    \begin{tikzpicture}[every node/.style={filled vertex},scale=.5]
      \node (a) at (-1,0) {};
      \node (c) at (1,0) {};
      \node (b) at (-1,2) {};
      \node (d) at (1,2) {};
      \draw (d) -- (c) (b) -- (a);
    \end{tikzpicture}
    \caption{$2 K_2$}
  \end{subfigure}  
  \,
  \begin{subfigure}[b]{.15\linewidth}
    \centering
    \begin{tikzpicture}[every node/.style={filled vertex},scale=.5]
      \node (a) at (-1,0) {};
      \node (c) at (1,0) {};
      \node (b) at (-1,2) {};
      \node (d) at (1,2) {};
      \draw (b) -- (a) -- (c) -- (d);
    \end{tikzpicture}
    \caption{$P_4$}
  \end{subfigure}  
  \,
  \begin{subfigure}[b]{.15\linewidth}
    \centering
    \begin{tikzpicture}[every node/.style={filled vertex}, scale=.5]
      \node (a) at (-1,0) {};
      \node (c) at (1,0) {};
      \node (b) at (-1,2) {};
      \node (d) at (1,2) {};
      \draw (a) -- (b) -- (d) -- (c) -- (a);
    \end{tikzpicture}
    \caption{$C_4$}
  \end{subfigure}  
  \,
  \begin{subfigure}[b]{.15\linewidth}
    \centering
    \begin{tikzpicture}[every node/.style={filled vertex}, scale=.6]
      \coordinate (v0) at ({90}:1) {};
      \foreach \i in {1,..., 5} {
        \coordinate (v\i) at ({90 - \i * (360 / 5)}:1) {};
        \draw let \n1 = {\i - 1} in (v\n1) -- (v\i);
      }
      \foreach \i in {1,..., 5}
      \node at (v\i) {};
    \end{tikzpicture}
    \caption{$C_5$}
  \end{subfigure}  
  \caption{Forbidden induced graphs.  Note that $2 K_{2}$ and $C_{4}$ are complements of each other, while the complements of $P_{4}$ and $C_{5}$ are themselves.}
  \label{fig:small-graphs}
\end{figure}
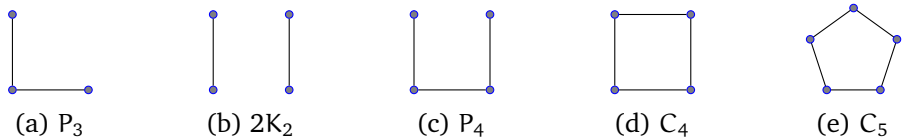

A cluster graph is a disjoint union of cliques.  Since cluster graphs are precisely $P_3$-free graphs, edge modification problems to cluster graphs are the simplest of all nontrivial edge modification problems.  Note that edge modification problems toward $P_2$-free graphs, i.e., edgeless graphs, are trivial.  Also trivial is the cluster completion problem: the minimum solution is to add edges to make every component of the input graph complete.  Both cluster edge editing and cluster edge deletion are NP-complete and have received wide attentions.
After a sequence of results, Cao and Chen~\cite{cao-12-kernel-cluster-editing} devised a $2k$-vertex kernel for the cluster edge editing problem.
Their algorithm actually implies a $2 k$-vertex kernel for the cluster edge deletion problem.  We record this simple result here for future reference.
Less trivially, we show that the same algorithm produces a kernel of the same size for the strong triadic closure problem, which, though originally not posed as an edge modification problem, is closely related to cluster edge deletion \cite{konstantinidis-18-strong-triadic-closure}.
As in \cite{cao-12-kernel-cluster-editing}, both algorithms work for the weighted versions of the problems as well.

The second problem is the trivially perfect completion problem.
Drange and Pilipczuk~\cite{drange-18-kernel-trivially-perfect} presented an $O(k^7)$-vertex kernel for this problem, and they posed as a ``challenging question'' to improve it to $O(k^3)$.  We propose a very simple kernelization algorithm, which has only two simple reduction rules, and the resulting kernel contains at most $2 k^2 + 2 k$ vertices. 
The forbidden induced subgraphs of trivially perfect graphs are $P_4$ and $C_4$.  Note that adding the edge to connect the two ends of a $P_{4}$ merely turns it into a $C_{4}$.  Thus, in each $P_4$ or $C_4$, there are two missing edges such that every solution needs to contain at least one of them.  Note that each vertex of the $P_{4}$ or $C_{4}$ is an end of one of the two missing edges.  Our first rule is the most routine for this kind of problems, namely, adding a missing edge if it is one of the two possible missing edges in $k+1$ or more $P_4$'s and $C_4$'s.  Our second rule removes all vertices that are not contained in any $P_4$ or $C_4$ of $G$.
Now the analysis is similar as Buss and Goldsmith's kernelization algorithm for the vertex cover problem \cite{buss-93-nondeterminism-within-p}.
Since every solution contains at least one of the pair of potential missing edges (for some $P_4$ or $C_4$), and since each potential edge is in at most $k$ pairs, there cannot be more than $k^{2} + k$ potential edges in a yes-instance.  On the other hand, every vertex is in a $P_4$ or $C_4$, hence an end of some potential edge.  We are thus safe to return a trivial no-instance when $|V(G)| > 2 k^{2} + 2 k$.
Toward this result we also obtain some nontrivial observations on minimal solutions of the problem with respect to modules of the input graph.

A graph is a split graph if its vertex set can be partitioned into a clique and an independent set.  Split graphs are $\{2 K_2, C_4, C_5\}$-free graphs.  The split completion problem, which is equivalent to split edge deletion, is NP-complete \cite{natanzon-01-edge-modification}, while somewhat surprisingly, the split edge editing problem can be solved in polynomial time \cite{hammer-81-splittance}.  Guo~\cite{guo-07-kernel-edge-deletion} presented an $O(k^{4})$-vertex kernel for the split completion problem, which was improved to $O(k^{2})$ by Ghosh et al.~\cite{ghosh-15-split-completion}.  For the convenience of presentation, we work on the edge deletion problem.  We consider the partition of the vertex set after applying an optimal solution.   We observe that for most of the vertices we know to which side they have to belong.  It is nevertheless not safe to directly delete these ``decided'' vertices.  We thus work on the annotated version, where we mark certain vertices that have to be in the independent set.
Guo~\cite{guo-07-kernel-edge-deletion} has proved that it is safe to remove a vertex that is not contained in any $2 K_2$, $C_4$, or $C_5$.  We show that a similar rule can be applied to annotated instances, and after its application, there can be at most $O(k^{1.5})$ vertices in a yes-instance.  Finally, a simple step that removes the marks concludes the algorithm.  Our kernel for split completion has only $O(k^{2.5})$ edges.  With minor tweaks, our algorithm produces a kernel of the same size for the pseudo-split ($\{2 K_2, C_4\}$-free graphs) edge deletion problem.  A pseudo-split graph is either a split graph or a split graph plus a $C_{5}$.  The first difficulty toward this adaptation is that it is not always safe to remove vertices not contained in any $2 K_2$ or $C_4$.   We get over this obstacle by observing that we can remove vertices not contained in any $2 K_2$, $C_4$, or $C_5$.  As we recycle the reduction rules for split edge deletion, only the arguments for their safeness need to be slightly revised.

\section{Preliminaries}\label{sec:lbfs}

All graphs discussed in this paper are undirected and simple.  The vertex set and edge set of a graph $G$ are denoted by, respectively, $V(G)$ and $E(G)$.
For a subset $U\subseteq V(G)$, denote by $G[U]$ the subgraph of $G$ induced by $U$, and by $G - U$ the subgraph $G[V(G)\setminus U]$, which is further shortened to $G - v$ when $U = \{v\}$.
The \emph{neighborhood} of a vertex $v$ in $G$, denoted by $N_{G}(v)$, comprises vertices adjacent to $v$, i.e., $N_{G}(v) = \{ u \mid u v\in E(G) \}$, and the \emph{closed neighborhood} of $v$ is $N_{G}[v] = N_{G}(v) \cup \{ v \}$.
The \emph{closed neighborhood} and the \emph{neighborhood} of a set $U\subseteq V(G)$ of vertices are defined as $N_{G}[U] = \bigcup_{v \in U} N_{G}[v]$ and $N_{G}(U) =  N_{G}[U] \setminus U$, respectively.
We may omit the subscript when there is no ambiguity on the graph under discussion.
Two vertices $u$ and $v$ are true twins in $G$ if $N[u] = N[v]$; note that true twins are necessarily adjacent.
A \emph{clique} is a set of pairwise adjacent vertices, and an \emph{independent set} is a set of pairwise nonadjacent vertices.
A graph $G$ is \emph{complete} if $V(G)$ is a clique.
A vertex $v$ is \emph{simplicial} if $N[v]$ is a clique, and a vertex $v$ is \emph{universal} if $N[v] = V(G)$.  An induced path and an induced cycle on $\ell$ vertices are denoted by $P_{\ell}$ and $C_{\ell}$ respectively.

For any two subsets $X, Y\subseteq V(G)$, we use $E(X, Y)$ to denote the set of edges of which one end is in $X$ and the other in $Y$.  Note that we do not require $X$ and $Y$ to be disjoint.  Thus, $E(X, X) = E(G[X])$, i.e., all the edges with both ends in $X$, and $E(X, V(G))$ consists of all the edges with at least one end in $X$.

Let $F$ be a fixed graph.  We say that a graph $G$ is \emph{$F$-free} if $G$ does not contain $F$ as an induced subgraph.  For a set $\mathcal{F}$ of graphs, a graph $G$ is \emph{$\mathcal{F}$-free} if $G$ is $F$-free for every $F\in \mathcal{F}$.  If every $F\in \mathcal{F}$ is minimal, i.e., not containing any $F'\in \mathcal{F}$ as a proper induced subgraph, then the set $\mathcal{F}$ of graphs are the (minimal) \emph{forbidden induced subgraphs} of this class.  See Figure~\ref{fig:small-graphs} for the forbidden induced subgraphs considered in the present paper.  For a set $E'$ of edges disjoint from $E(G)$, we denote by $G + E'$ the graph with vertex set $V(G)$ and edge set $E(G)\cup E'$; for a set $E'\subseteq E(G)$, we denote by $G - E'$ the graph with vertex set $V(G)$ and edge set $E(G)\setminus E'$.
The problems to be studied are formally defined as follows, where $\mathcal{G}$ is a graph class.

\probdef{.82\textwidth}{$\mathcal{G}$ completion}{A graph $G$ and a nonnegative integer $k$.}{Is there a set $E_+$ of at most~$k$ edges such that $G + E_+$  is in $\mathcal{G}$?}

\probdef{.82\textwidth}{$\mathcal{G}$ edge deletion} {A graph $G$ and a nonnegative integer $k$.} {Is there a set $E_-$ of at most~$k$ edges such that $G - E_-$ is in $\mathcal{G}$?}

Since it is always clear from the context what problem we are talking about, when we mention an instance $(G, k)$, we do not always explicitly specify the problem.  We use $\mathrm{opt}(G)$ to denote the size of optimal solutions of $G$ for the optimization version of a certain problem.  Thus, $(G, k)$ is a yes-instance if and only if $\mathrm{opt}(G) \le k$.

For each problem, we apply a sequence of reduction rules.  Each rule transforms an instance $(G, k)$ to a new instance $(G', k')$.  We say that a rule is \emph{safe} if $(G, k)$ is a yes-instance if and only if $(G', k')$ is a yes-instance.
Since all of our reduction rules are very simple and obviously doable in polynomial time, we omit the details of their implementation and the analysis of their running time.

\section{Cluster edge deletion and strong triadic closure}

A graph is a cluster graph if every component of this graph is a complete subgraph.
It is easy to verify that a graph is a cluster graph if and only if it is $P_{3}$-free.
Our first problem is the cluster edge deletion problem.
For a vertex set $U\subseteq V(G)$, we write $d(U) = |E(U, V(G)\setminus U)|$, i.e., the number of edges between $U$ and $V(G)\setminus U$; we write $d(v)$ instead of $d(\{v\})$ for a singleton set.

\begin{reduction}\label{rule:cluster-deletion}
  If there is a simplicial vertex $v$ such that $d(N[v]) \le d(v)$, then remove $N[v]$ and decrease $k$ by $d(N[v])$.
\end{reduction}
\begin{proof}[Safeness of Rule~\ref{rule:cluster-deletion}.]
  We show that $\textrm{opt}(G) = \textrm{opt}(G - N[v]) + d(N[v])$.
  Let $E_{-}$ be an optimal solution to the graph $G$.  We have nothing to show if $N[v]$ makes a separate component of $G - E_{-}$.   In the rest of the proof, $N[v]$ is not a component of $G - E_{-}$.
  Let $G[X]$ denote the component of $G - E_{-}$ that contains $v$.  Since $X$ is a clique and $N[v] \ne X$, we have $X\subset N[v]$.
  In other words, neither $X$ nor $N[v]\setminus X$ is empty.
  Since any induced subgraph of $G - E_{-}$ is a cluster graph, the subset of edges in $E_{-}$ with both ends in $V(G)\setminus N[v]$ is a solution to $G - N[v]$.  Noting that this solution is disjoint from $E(X, V(G)\setminus X)$, we have

  \begin{align}
  \label{eq:lower-bound}
  \textrm{opt}(G) \ge& |E_{-}\cap E(G - N[v])| + d(X) \nonumber 
  \\
  \ge& \textrm{opt}(G - N[v]) + |X| \cdot |N[v]\setminus X| \nonumber
  \\
  \ge& \textrm{opt}(G - N[v]) + |X| + |N[v]\setminus X| - 1
  \\
  =& \textrm{opt}(G - N[v]) + d(v),  \nonumber
\end{align}
where the third inequality holds because both $|X|$ and $|N[v]\setminus X|$ are positive integers.
For any solution $E'_{-}$ of $G - N[v]$, the set $E'_{-}\cup E(N[v], V(G)\setminus N[v])$ is a solution of $G$.  Thus,

\begin{equation}
  \label{eq:upper-bound}
  \textrm{opt}(G) \le
  \textrm{opt}(G - N[v]) + d(N[v]) \le \textrm{opt}(G - N[v]) + d(v).
\end{equation}                     
Therefore, all the inequalities in \eqref{eq:lower-bound} and \eqref{eq:upper-bound} are tight.  In other words, if we remove all the edges between $N[v]$ and $V(G)\setminus N[v]$, and then delete an optimal solution to $G - N[v]$, then we have an optimal solution to the graph $G$.
\end{proof}

A trivial but crucial fact is that a solution $E_-$ has at most $2 |E_-|$ ends.  If a vertex $v$ is not an end of any edge in $E_-$, then $v$ has to be simplicial.

\begin{theorem}\label{thm:cluster}
  There is a $2 k$-vertex kernel for the cluster edge deletion problem.
\end{theorem}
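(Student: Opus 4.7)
The plan is to apply Rule~\ref{rule:cluster-deletion} exhaustively to the input, producing a reduced instance $(G', k')$ with $k' \le k$ in which no simplicial vertex $v$ satisfies $d(N[v]) \le d(v)$; we then return a trivial no-instance if $|V(G')| > 2k$ and return $(G', k')$ otherwise. Since Rule~\ref{rule:cluster-deletion} is safe, it remains to show that any reduced yes-instance $(G', k')$ satisfies $|V(G')| \le 2 k' \le 2 k$.

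Fix an optimal solution $E_-$ with $|E_-| \le k'$ and consider the partition of $V(G')$ into clusters, i.e., the vertex sets of the connected components (which are cliques) of $G' - E_-$. The two endpoints of any edge in $E_-$ are non-adjacent in $G' - E_-$, so they lie in distinct clusters; hence each edge of $E_-$ contributes exactly $2$ to $\sum_{C} d(C)$, giving $\sum_C d(C) = 2 |E_-|$. The core of the argument is to prove $|C| \le d(C)$ for every cluster $C$; summing then yields $|V(G')| = \sum_C |C| \le \sum_C d(C) = 2 |E_-| \le 2 k$, as desired.

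To prove $|C| \le d(C)$, I split on whether $C$ contains a vertex untouched by $E_-$. If every vertex of $C$ is an endpoint of some edge in $E_-$, then that edge must leave $C$ (since $C$ is a clique in $G' - E_-$), so picking one such edge per vertex gives an injection from $C$ into $E(C, V(G') \setminus C)$, yielding $|C| \le d(C)$. Otherwise there is some $v \in C$ not incident to any edge in $E_-$; the remark preceding the theorem makes $v$ simplicial in $G'$, and because $v$'s incident edges all survive in $G' - E_-$ one checks that the cluster containing $v$ is exactly $N_{G'}[v]$. Since Rule~\ref{rule:cluster-deletion} is no longer applicable to $v$, we have $d(N[v]) > d(v) = |N[v]| - 1$, so $d(C) \ge |C|$.

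The step I expect to require the most care is the identification $C = N_{G'}[v]$ in the second case: the inclusion $N_{G'}[v] \subseteq C$ is immediate, but the reverse relies on noting that any $w \in C$ is adjacent to $v$ in $G' - E_-$ (as $C$ is a clique) and hence already in $G'$ (since $v$ is untouched), forcing $w \in N_{G'}[v]$. Everything else reduces to the routine counting outlined above.
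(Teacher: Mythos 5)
Your proof is correct and takes essentially the same approach as the paper's: both rely on the key inequality $d(C) \ge |C|$ for each cluster $C$, obtained via Rule~\ref{rule:cluster-deletion} non-applicability when $C$ contains an untouched simplicial vertex, and via a per-vertex edge count otherwise. The paper phrases the count as a half-cost charging argument on endpoints of $E_-$ rather than a cluster-by-cluster sum, but the two are interchangeable and yield the same bound.
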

\begin{proof}
  Let $G$ be a graph to which Rule~\ref{rule:cluster-deletion} is not applicable.  We show that if $(G, k)$ is a yes-instance, then $|V(G)| \le 2k$.
  Let $E_{-}$ be an optimal solution to $G$, and let $\{v_1, v_2, \dots, v_r \}$ be the vertices that are not incident to any edge in $E_{-}$; they have to be simplicial.  For $i = 1, \ldots, r$, the set $N[v_{i}]$ forms a component of $G -  E_{-}$.
  Note that for distinct $i, j\in \{1, \ldots, r\}$, the sets $N[v_i]$ and $N[v_j]$ are either the same (when $v_i$ and $v_j$ are true twins) or mutually disjoint: if $N[v_i] \ne N[v_j]$ and there exists $x\in N[v_i] \cap N[v_j]$, then one of $x v_{i}$ and $x v_{j}$ needs to be in $E_{-}$.
  We divide the cost of each edge $u v\in E_{-}$ and assign them to $u$ and $v$ equally.
  For $i = 1, \ldots, r$, the total cost attributed to all the vertices in $N[v_i]$ is $d(N[v_i]) / 2$, because Rule~\ref{rule:cluster-deletion} does not apply to $v_i$.  Each of the vertices not in $\bigcup_{i=1}^{r} N[v_i]$ is an end of at least one edge in $E_{-}$ and therefore bears cost at least $1/2$.  Summing them up, we get a lower bound for the total cost:

  \begin{align*}
    |E_{-}| \ge& {1 \over 2} \sum_{i=1}^{r} d(N[v_i]) + {1 \over 2} |V(G)\setminus \bigcup_{i=1}^{r} N[v_i]|
    \\
    \ge& {1 \over 2} \sum_{i=1}^{r} |N[v_i]| + {1 \over 2} |V(G)\setminus \bigcup_{i=1}^{r} N[v_i]|
    \\
    \ge& {1 \over 2} |V(G)|.
  \end{align*}
  Thus, $|V(G)|/2 \le |E_{-}| \le k$ for a yes-instance, and we can return a trivial no-instance if $|V(G)| > 2 k$.  This concludes the proof.
\end{proof}

Let us mention that the condition of Rule~\ref{rule:cluster-deletion} can be weakened to $d(N[v]) < 2 d(v) - 1$.
We do not prove the stronger statement because it does not improve the analysis of the kernel size, but let us briefly explain why it is true.
The bound $\textrm{opt}(G)\ge \textrm{opt}(G - N[v]) + 2 d(v) - 1$ holds unless $|X| = 1$ or $|N[v] \setminus X| = 1$; see the third inequality of \eqref{eq:lower-bound}.  In the first case, $v$ itself makes a trivial component, and all the vertices in $N(v)$ are in the same component; this can only happen when there exists another vertex $u$ with $N(v) \subseteq N(u)$.  In the second case, a vertex $u\in N(v)$ is incident to all the edges between $N(v)$ and $V(G)\setminus N[v]$.  If $d(N[v]) < 2 d(v) - 1$, then $\textrm{opt}(G)\ge \textrm{opt}(G - N[v]) + 2 d(v) - 1$ holds in both cases.

\begin{figure}[h]
  \centering\small
  \begin{subfigure}[b]{.30\linewidth}
    \centering
    \begin{tikzpicture}[every node/.style={filled vertex}, scale=.75]
      \foreach \i in {1, ..., 4} {
        \node (u\i) at ({90 * \i + 45}:1) {};
        \node (v\i) at ({90 * \i - 135}:2) {};
      }
      \foreach \i in {1, ..., 4} 
      \foreach \j in {1, ..., 4} {
        \ifthenelse{\i=\j}{}{\draw (u\i) -- (v\j)};
        \ifthenelse{\i=\j}{}{\draw (u\i) -- (u\j)};
      }
    \end{tikzpicture}
    \caption{}
  \end{subfigure}  
  \,
  \begin{subfigure}[b]{.30\linewidth}
    \centering
    \begin{tikzpicture}[every node/.style={filled vertex}, scale=.75]
      \foreach \i in {1, ..., 4} {
        \node (u\i) at ({90 * \i + 45}:1) {};
        \node (v\i) at ({90 * \i - 135}:2) {};
      }
      \foreach \j in {1, 2, 4} \draw (v3) -- (u\j);
      \draw (u1) -- (u2) -- (u4) -- (u1);
      \draw (v1) -- (u3);
    \end{tikzpicture}
    \caption{}
  \end{subfigure}  
  \,
  \begin{subfigure}[b]{.30\linewidth}
    \centering
    \begin{tikzpicture}[every node/.style={filled vertex}, scale=.75]
      \foreach \i in {1, ..., 4} {
        \node (u\i) at ({90 * \i + 45}:1) {};
        \node (v\i) at ({90 * \i + 45}:2) {};
        \draw (u\i) -- (v\i);
      }
      \draw (u1) -- (u2) -- (u3) -- (u4) -- (u1);
    \end{tikzpicture}
    \caption{}
  \end{subfigure}  
  \caption{The example given by Konstantinidis et al.~\cite{konstantinidis-18-strong-triadic-closure}: (a) the input graph; (b) a maximum cluster subgraph with seven edges; and (c) a maximum strong triadic closure with eight edges.}
  \label{fig:strong-triadic-closure}
\end{figure}
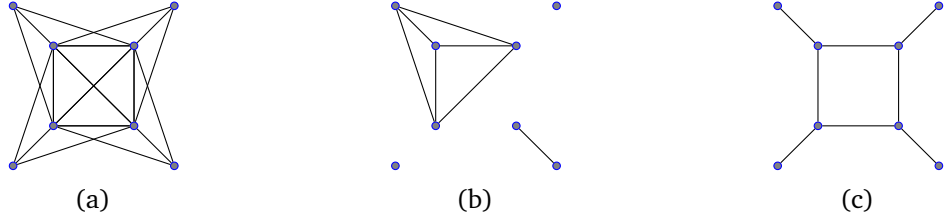

In the original definition, which was motivated by applications in social networks, the \emph{strong triadic closure} problem asks for a partition of the edge set of  the input graph into strong edges and weak ones, such that for every two vertices that are linked to a common neighbor with strong edges are adjacent.  The objective is to maximize the number of strong edges.  For our purpose, it is more convenient to define the problem as follows.

\probdef{.82\textwidth}{Strong triadic closure} {A graph $G$ and a nonnegative integer $k$.} {Is there a set $E_-$ of at most~$k$ edges such that the missing edge of every $P_3$ of $G - E_-$ is in $E(G)$?}

Thus, we call the set of weak edges as the solution to the strong triadic closure problem.
For any set $E_{-}\subseteq E(G)$, if $G - E_{-}$ is a cluster graph, then $E_{-}$ is also a solution to the strong triadic closure problem: setting all edges in $E_{-}$ weak, and all other edges strong is a feasible partition of $E(G)$.
As illustration in Figure~\ref{fig:strong-triadic-closure}, however, a strong triadic closure of a graph can have fewer weak edges than an optimal solution to the cluster edge deletion problem on the same graph.
Surprisingly, Rule~\ref{rule:cluster-deletion} works for the strong triadic closure problem without change.
\begin{lemma}\label{lem:rule:strong-triadic-closure}
  Rule~\ref{rule:cluster-deletion} is safe for the strong triadic closure problem.
\end{lemma}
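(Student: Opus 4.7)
The plan is to follow the template of Rule~\ref{rule:cluster-deletion}'s safeness proof and establish the identity $\mathrm{opt}(G) = \mathrm{opt}(G - N[v]) + d(N[v])$ for the strong triadic closure problem.  The upper bound $\mathrm{opt}(G) \le \mathrm{opt}(G - N[v]) + d(N[v])$ is immediate: augment an optimal STC solution of $G - N[v]$ with every edge of $E(N[v], V(G)\setminus N[v])$.  This disconnects $N[v]$ from the rest of the graph, and since $N[v]$ is a clique in $G$, every missing edge inside $N[v]$ already lies in $E(G)$; so the STC condition remains satisfied.

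For the lower bound, let $E_{-}$ be an optimal STC solution of $G$ and split it as $E_{-}^{\mathrm{in}} \cup E_{-}^{\mathrm{out}} \cup E_{-}^{\mathrm{cross}}$ according to whether each edge lies inside $N[v]$, inside $V(G)\setminus N[v]$, or crosses between the two sides.  Since $E_{-}^{\mathrm{out}}$ is already a valid STC solution of $G - N[v]$, it suffices to prove $|E_{-}^{\mathrm{in}}| + |E_{-}^{\mathrm{cross}}| \ge d(N[v])$.  Set $X = N_{G - E_{-}}[v] \subseteq N[v]$.  By examining the candidate $P_{3}$s of forms $v$--$y$--$w$ and $y$--$u$--$w$ for $y \in X\setminus\{v\}$, $u \in N[v]\setminus X$, and $w \in V(G)\setminus N[v]$, the STC condition forces three structural facts: (i) every $E(G)$-edge from $X \setminus \{v\}$ to $V(G) \setminus N[v]$ belongs to $E_{-}$; (ii) every edge $vu$ with $u \in N[v] \setminus X$ belongs to $E_{-}$; and (iii) for every $u \in N[v]\setminus X$, $w \in V(G)\setminus N[v]$ with $uw \in E(G)\setminus E_{-}$, and every $y \in X\setminus\{v\}$, at least one of $yu \in E_{-}$ and $yw \in E(G)$ holds.

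The main obstacle is combining these local facts with the rule's hypothesis $d(N[v]) \le d(v)$ to obtain the global lower bound.  I plan to decompose $d(N[v]) = \sigma + |F| + |K|$, where $\sigma = |E(X \setminus \{v\}, V(G)\setminus N[v])|$ and $F$, $K$ partition $E(N[v]\setminus X, V(G)\setminus N[v])$ according to membership in $E_{-}$; and write $\alpha = |E(X\setminus\{v\}, N[v]\setminus X) \cap E_{-}|$.  Facts (i) and (ii) then give $|E_{-}^{\mathrm{in}}| + |E_{-}^{\mathrm{cross}}| \ge |N[v]\setminus X| + \alpha + \sigma + |F|$, reducing the task to $|N[v]\setminus X| + \alpha \ge |K|$.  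If $K$ is empty this is trivial.  Otherwise, pick any pair $(u^{*}, w^{*})$ with $u^{*}w^{*} \in K$ and apply (iii) to $(u^{*}, w^{*}, y)$ for every $y \in X\setminus\{v\}$: each such $y$ is forced to contribute at least one edge either to $\alpha$ (via $yu^{*}$) or to $\sigma$ (via $yw^{*}$, which lies in $E_{-}$ by (i)), yielding $\sigma + \alpha \ge |X \setminus \{v\}|$.  Substituting into the rule hypothesis $\sigma + |F| + |K| \le d(v) = |X\setminus\{v\}| + |N[v]\setminus X|$ and rearranging delivers $|K| \le |N[v]\setminus X| + \alpha$, finishing the proof of the identity and hence of the rule's safeness.
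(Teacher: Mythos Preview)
Your proof is correct, but it takes a genuinely different route from the paper's.

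The paper defines $X$ structurally as the set of true twins of $v$ (all $x$ with $N_G[x]=N_G[v]$) and lets $Y\subseteq N[v]$ be the set of endpoints in $N[v]$ of the cross edges \emph{surviving} $E_-$.  It then observes that every edge between $X$ and $Y$ must lie in $E_-$, and that each vertex of $N[v]\setminus(X\cup Y)$ is incident to a deleted cross edge.  This yields
\[
|E_-^{\mathrm{in}}|+|E_-^{\mathrm{cross}}|\ \ge\ |X|\cdot|Y|+|N[v]\setminus(X\cup Y)|\ \ge\ |X|+|Y|-1+|N[v]|-|X|-|Y|\ =\ d(v),
\]
using only $|X|,|Y|\ge 1$.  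The hypothesis $d(N[v])\le d(v)$ is invoked solely at the end, to sandwich $d(v)$ and $d(N[v])$ between the matching upper and lower bounds.

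Your approach instead lets $X=N_{G-E_-}[v]$ depend on the solution, proves the three local facts (i)--(iii), and then threads the hypothesis $d(N[v])\le d(v)$ \emph{through} the lower-bound computation via the inequality $\sigma+\alpha\ge |X\setminus\{v\}|$ obtained from a single witnessing edge $u^*w^*\in K$.  This works, and the bookkeeping is sound (distinct $y$'s give distinct edges $yu^*$ and $yw^*$, so the counts add up).

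The trade-off: the paper's argument is shorter and yields the slightly stronger statement $\mathrm{opt}(G)\ge \mathrm{opt}(G-N[v])+d(v)$ unconditionally, separating the structural lower bound from the rule's numerical hypothesis.  Your argument, while longer, is perhaps more transparent about exactly how each edge of $E_-$ is being charged against the boundary $d(N[v])$.
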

\begin{proof}
  We show that $\textrm{opt}(G) = \textrm{opt}(G - N[v]) + d(N[v])$.
  Let $E_{-}$ be an optimal solution to the graph $G$.  We have nothing to show if $N[v]$ makes a separate component of $G - E_{-}$.   In the rest of the proof, $N[v]$ is not a component of $G - E_{-}$.
  Let $X$ denote the set of vertices with $N[X] = N[v]$, and $Y\subseteq N[v]$ the ends of these edges in $E(N[v], V(G)\setminus N[v])\setminus E_{-}$ (i.e., edges between $N[v]$ and $V(G)\setminus N[v]$ that are not in $E_{-}$).
  Note that $X\ne \emptyset$ because $v\in X$, and $Y\ne \emptyset$ because $E(N[v], V(G)\setminus N[v])\not\subseteq E_{-}$ (otherwise $N[v]$ is a component of $G - E_{-}$ by the minimality of $E_{-}$).

  By definition, the subset of edges in $E_{-}$ with both ends in $G - N[v]$ is a solution to $G - N[v]$.
  By the selection of $X$ and $Y$, every vertex in $N[v]\setminus (X\cup Y)$ is incident to at least one edge in $E_{-}\cap E(N[v], V(G)\setminus N[v])$.
  For every $x\in X$ and every $y\in Y$, there exists $z\in V(G)\setminus N[v]$ that is adjacent to $y$ but not $x$; hence, $x y z$ is a $P_3$.  As a result, all the edges between $X$ and $Y$ have to be in $E_{-}$.  Thus,

  \begin{align}
    \label{eq:lower-bound-2}
    \textrm{opt}(G) =& |E_{-}\cap E(G - N[v])| + |E_{-}\cap E(N[v], V(G)\setminus N[v])| + |E_{-}\cap E(N[v])|\nonumber
    \\
    \ge& \textrm{opt}(G - N[v]) + |N[v]\setminus (X\cup Y)| + |X|\cdot |Y|\nonumber
    \\
    \ge& \textrm{opt}(G - N[v]) + |N[v]| - |X| - |Y|  + |X| + |Y| - 1
    \\
    \ge& \textrm{opt}(G - N[v]) + |N[v]| - 1 \nonumber 
    \\
    =& \textrm{opt}(G - N[v]) + d(v), \nonumber
  \end{align}
  where $|X|\cdot |Y|\ge |X| + |Y| - 1$ because both $|X|$ and $|Y|$ are positive integers.
  For any solution $E'_{-}$ of $G - N[v]$, the set $E'_{-}\cup E(N[v], V(G)\setminus N[v])$ is a solution of $G$.  Thus,

\begin{equation}
  \label{eq:upper-bound-2}
  \textrm{opt}(G) \le
  \textrm{opt}(G - N[v]) + d(N[v]) \le \textrm{opt}(G - N[v]) + d(v).
\end{equation}                     
Therefore, all the inequalities in \eqref{eq:lower-bound-2} and \eqref{eq:upper-bound-2} are tight.  In other words, if we remove all the edges between $N[v]$ and $V(G)\setminus N[v]$, and then delete an optimal solution to $G - N[v]$, then we have an optimal solution to the graph $G$.
\end{proof}

The proof of the following theorem is a word-by-word copy of that for Theorem~\ref{thm:cluster}, hence omitted.
\begin{theorem}\label{thm:strong-triadic-closure}
  There is a $2 k$-vertex kernel for the strong triadic closure problem.
\end{theorem}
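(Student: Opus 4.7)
The plan is to replicate the proof of Theorem~\ref{thm:cluster} almost verbatim, with Lemma~\ref{lem:rule:strong-triadic-closure} playing the role of the safeness of Rule~\ref{rule:cluster-deletion}. Starting from a graph $G$ on which Rule~\ref{rule:cluster-deletion} is not applicable and an optimal strong-triadic-closure solution $E_-$, I list the vertices $v_1, \dots, v_r$ incident to no edge of $E_-$ and then carry out the same three-part accounting: first show each $v_i$ is simplicial, next show that the sets $N[v_i]$ are pairwise either equal or disjoint, and finally charge each edge of $E_-$ equally to its two endpoints to derive $|V(G)| \le 2 |E_-|$.

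Three small obligations have to be verified along the way, each following from the STC condition via a single forbidden $P_3$ in $G - E_-$. That $v_i$ is simplicial follows because a non-edge $u w$ between two neighbors of $v_i$ would yield a $P_3$ $u v_i w$ in $G - E_-$ whose missing edge $uw$ lies outside $E(G)$. That the $N[v_i]$'s are pairwise equal or disjoint splits into the usual two cases: if $v_i v_j \notin E(G)$ any shared neighbor $x$ gives a forbidden $P_3$ $v_i x v_j$, and if $v_i v_j \in E(G)$ a vertex in the symmetric difference yields a forbidden $P_3$ through the edge $v_i v_j$. The third obligation, needed for the charging step, is that every edge of $E(N[v_i], V(G) \setminus N[v_i])$ actually lies in $E_-$: a surviving crossing edge $u w$ with $u \in N(v_i)$ would create the $P_3$ $v_i u w$ in $G - E_-$ with missing edge $v_i w \notin E(G)$.

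Once these facts are in hand, the count proceeds exactly as in Theorem~\ref{thm:cluster}. The inapplicability of Rule~\ref{rule:cluster-deletion} gives $d(N[v_i]) \ge |N[v_i]|$, so the cost attributed to $N[v_i]$ is at least $d(N[v_i])/2 \ge |N[v_i]|/2$; each vertex outside $\bigcup_i N[v_i]$ is incident to at least one edge of $E_-$ and hence bears cost at least $1/2$; summing yields $|E_-| \ge |V(G)|/2$ and the desired $2k$ bound. The only material difference from the cluster-deletion setting is that $E_-$ may now contain edges inside some $N[v_i]$, since $N[v_i]$ need not form a whole component of $G - E_-$; but this can only increase the cost attributed to $N[v_i]$, so the estimate still goes through. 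I do not foresee a real obstacle here; all STC-specific reasoning is confined to the three short $P_3$ arguments sketched above.
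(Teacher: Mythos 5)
Your proof is correct and follows essentially the same charging argument the paper relies on (the paper omits the proof as a ``word-by-word copy'' of Theorem~\ref{thm:cluster}); you helpfully make explicit the three STC-specific verifications via forbidden $P_3$'s that the paper leaves implicit.
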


For the strong triadic closure problem, we may alternatively state Rule~\ref{rule:cluster-deletion} as follows.
\begin{reduction}\label{rule:strong-triadic-closure}
  If there is a simplicial vertex $v$ such that $d(N[v]) \le d(v)$, then set all the edges in $G[N[v]]$ strong, set all the edges between $N[v]$ and $V(G)\setminus N[v]$ weak, and delete $N[v]$.
\end{reduction}

We should remark that our kernelization algorithms for the cluster edge deletion problem and the strong triadic closure problem work for the weighted versions as well; see \cite{cao-12-kernel-cluster-editing}.

\section{Trivially perfect completion}

In this section we study the trivially perfect completion problem.  Trivially perfect graphs are $\{P_4, C_4\}$-free graphs.
If there is a pair of adjacent vertices $u, v$ such that neither $N[u]\setminus N[v]$ nor $N[v]\setminus N[u]$ is empty, then they are contained in a $P_{4}$ or $C_{4}$.
Trivially perfect graphs have many nice characterizations.  Here are two of them.
\begin{theorem}[\cite{wolk-62, yan-96-trivially-perfect}]
  \label{thm:trivially-perfect-characterizations}
  The following are equivalent for a graph $H$.
  \begin{enumerate}[i)]
  \item $H$ is a trivially perfect graph.
  \item Every connected induced subgraph of $H$ contains a universal vertex.
  \item For every pair of adjacent vertices $u$ and $v$, one of $N[u]$ and $N[v]$ is a subset of the other.
  \end{enumerate}
\end{theorem}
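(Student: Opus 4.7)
The plan is to prove the chain (ii) $\Rightarrow$ (i) $\Rightarrow$ (iii) $\Rightarrow$ (ii), each by a direct structural argument.

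First, (ii) $\Rightarrow$ (i) is an immediate observation: both $P_4$ and $C_4$ are connected and have no universal vertex (every vertex has degree at most $2$ in either), so any graph containing one of them as an induced subgraph already violates (ii) on that very subgraph.

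For (i) $\Rightarrow$ (iii), I would prove the contrapositive by a short case analysis. Suppose there exist adjacent $u, v$ together with $x \in N[u] \setminus N[v]$ and $y \in N[v] \setminus N[u]$. Using $u v \in E(H)$ one checks that $u, v, x, y$ are four distinct vertices; the definitions force $u x, v y \in E(H)$ and $v x, u y \notin E(H)$, so the only remaining free choice is the pair $x y$. If $x y \in E(H)$ then $u, v, y, x$ induces a $C_4$, and if $x y \notin E(H)$ then $x, u, v, y$ induces a $P_4$. Either way (i) fails.

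The real work is in (iii) $\Rightarrow$ (ii). Let $H'$ be a connected induced subgraph of $H$; since (iii) is inherited by induced subgraphs (the containment $N[u] \subseteq N[v]$ in $H$ restricts to the same containment in $H'$), it suffices to exhibit a universal vertex of $H'$. I would pick $u \in V(H')$ maximizing $|N_{H'}[u]|$ and argue it is universal. Otherwise, since $H'$ is connected, walking along a shortest path from $u$ to some vertex outside $N_{H'}[u]$ produces an edge $v z$ of $H'$ with $v \in N_{H'}[u]$ and $z \notin N_{H'}[u]$; in particular $v \neq u$, so $u v \in E(H')$. Now $z$ witnesses $N_{H'}[v] \not\subseteq N_{H'}[u]$, so (iii) applied to the adjacent pair $\{u, v\}$ forces $N_{H'}[u] \subseteq N_{H'}[v]$, whence $|N_{H'}[v]| > |N_{H'}[u]|$, contradicting the choice of $u$.

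The main obstacle is the last implication: one has to hit upon the right witness, namely a vertex of maximum closed neighborhood in $H'$ and a ``boundary'' edge supplied by connectedness. The other two implications are essentially bookkeeping on at most four vertices.
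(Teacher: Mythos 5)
Your proof is correct in all three implications: the contrapositive argument for (i)~$\Rightarrow$~(iii) cleanly produces the induced $P_4$ or $C_4$, and the extremal choice (maximizing the closed neighborhood, then crossing the boundary via connectivity) settles (iii)~$\Rightarrow$~(ii) without any gaps. The paper itself does not prove Theorem~\ref{thm:trivially-perfect-characterizations}; it is stated as a known result with citations to \cite{wolk-62, yan-96-trivially-perfect}, so there is no in-paper argument to compare against, but your reconstruction matches the standard proofs found in those sources.
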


We say that a trivially perfect graph $\widehat G$ is a \emph{trivially perfect completion} of $G$ if $V(G) = V(\widehat G)$ and $E(G) \subseteq E(\widehat G)$, and it is minimal if there is no other trivially perfect completion $\widehat G'$ of $G$ with $E(G) \subseteq E(\widehat G') \subset E(\widehat G)$.
The following two observations are very simple.
\begin{proposition}\label{lem:universal-vertex}

  Let $H$ be a connected graph, and let $\widehat H$ be a minimal (minimum) trivially perfect completion of $H$.  For any universal vertex $u$ of $\widehat H$, the graph $\widehat H - u$ is a minimal (minimum) trivially perfect completion of $H - u$.
\end{proposition}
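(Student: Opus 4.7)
The plan is to first prove a one-line auxiliary claim: attaching a universal vertex to a trivially perfect graph yields another trivially perfect graph. Explicitly, if $G'$ is $\{P_4, C_4\}$-free and $u$ is a new vertex made adjacent to every vertex of $G'$, then $G' + u$ is $\{P_4, C_4\}$-free. For any induced $P_4$ or $C_4$ containing $u$, the vertex $u$ would be adjacent to the other three vertices of that four-vertex subgraph; but in $P_4$ every vertex has degree at most $2$ and in $C_4$ every vertex has degree exactly $2$, so no such obstruction can contain $u$, and since $G'$ itself has none, neither does $G' + u$.

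Given this claim, I would first note that $\widehat H - u$, as an induced subgraph of $\widehat H$, is trivially perfect; it has vertex set $V(H - u)$ and contains $E(H - u)$ because $\widehat H$ contains $E(H)$. Hence $\widehat H - u$ is a trivially perfect completion of $H - u$, and its added edges are precisely the added edges of $\widehat H$ that are not incident to $u$.

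For the minimality case, suppose toward contradiction that some $e \in E(\widehat H - u) \setminus E(H - u)$ is redundant, i.e., $(\widehat H - u) - e$ is still a trivially perfect completion of $H - u$. Re-attaching $u$ as a universal vertex, the auxiliary claim gives that the resulting graph, which is exactly $\widehat H - e$, is trivially perfect; it is clearly a completion of $H$. Since $e$ is an added edge of $\widehat H$ (it is not incident to $u$ and lies in $E(\widehat H) \setminus E(H)$), this contradicts the minimality of $\widehat H$. For the minimum case, any trivially perfect completion $\widehat H'$ of $H - u$ with strictly fewer edges than $\widehat H - u$ lifts, by attaching $u$ universally, to a trivially perfect completion of $H$; the two completions of $H$ agree on the edges incident to $u$ (both contain all of them), so this lifted completion has strictly fewer added edges than $\widehat H$, contradicting minimum-ness.

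The only conceptual step is the auxiliary claim on universal vertices, and it is essentially immediate from a degree count within a four-vertex induced subgraph; the rest is a direct lift-and-drop argument that exploits the fact that removing or adding a universal vertex commutes with both minimality and edge-count. I do not foresee any genuine obstacle.
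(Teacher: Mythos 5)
Your proposal is correct and is essentially the same argument the paper uses (the paper's proof is just the one-liner that re-attaching $u$ as a universal vertex to any trivially perfect completion of $H - u$ yields a trivially perfect completion of $H$); you have simply spelled out the auxiliary degree-counting claim and the two lift-and-drop directions explicitly.
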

\begin{proof}
  Adding $u$ as a universal vertex to any trivially perfect completion of $H - u$, we end with a trivially perfect completion of $H$.
\end{proof}

\begin{lemma}\label{lem:ascendant-tpc}
  Let $\widehat G$ be a minimal trivially perfect completion of a graph $G$, and let $u$ and $v$ be two vertices of $G$.  If $N_{G}[u]\subseteq N_{G}[v]$, then $N_{\widehat G}[u]\subseteq N_{\widehat G}[v]$, 
\end{lemma}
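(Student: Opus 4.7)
My plan is to argue by contradiction, using the minimality of $\widehat G$ together with characterization (iii) of Theorem~\ref{thm:trivially-perfect-characterizations} as the main tool. If $u = v$ the claim is trivial, so assume $u \ne v$; then $u \in N_G[u] \subseteq N_G[v]$ forces $uv \in E(G) \subseteq E(\widehat G)$. Applying Theorem~\ref{thm:trivially-perfect-characterizations}(iii) to the adjacent pair $u, v$ in $\widehat G$ yields that one of $N_{\widehat G}[u], N_{\widehat G}[v]$ contains the other, so it suffices to rule out the alternative $N_{\widehat G}[v] \subsetneq N_{\widehat G}[u]$. Let $W := N_{\widehat G}[u] \setminus N_{\widehat G}[v]$ and suppose for contradiction that $W \ne \emptyset$.

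First, every $w \in W$ must satisfy $uw \notin E(G)$: if $uw$ were in $E(G)$, then $w \in N_G[u] \subseteq N_G[v]$ would force $vw \in E(G) \subseteq E(\widehat G)$, contradicting $w \notin N_{\widehat G}[v]$. Define $\widehat G' := \widehat G - \{uw : w \in W\}$; then $E(G) \subseteq E(\widehat G')$ and $E(\widehat G') \subsetneq E(\widehat G)$. The goal is now to show that $\widehat G'$ is trivially perfect, which by minimality of $\widehat G$ is the desired contradiction. I verify this through Theorem~\ref{thm:trivially-perfect-characterizations}(iii): for every edge $xy$ of $\widehat G'$, one of $N_{\widehat G'}[x], N_{\widehat G'}[y]$ should contain the other. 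Note that $N_{\widehat G'}[u] = N_{\widehat G}[u] \cap N_{\widehat G}[v]$ and $N_{\widehat G'}[w] = N_{\widehat G}[w] \setminus \{u\}$ for each $w \in W$, while all other closed neighborhoods are unchanged; so only edges incident to $u$ or to $W$ need scrutiny.

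For an edge $uy$ in $\widehat G'$ we have $y \in N_{\widehat G}[u] \cap N_{\widehat G}[v]$; the delicate case is when $N_{\widehat G}[y] \subsetneq N_{\widehat G}[u]$ in $\widehat G$, which I handle by applying Theorem~\ref{thm:trivially-perfect-characterizations}(iii) to the adjacent pair $v, y$. If $N_{\widehat G}[y] \subseteq N_{\widehat G}[v]$, then $N_{\widehat G'}[y] = N_{\widehat G}[y] \subseteq N_{\widehat G'}[u]$; if instead $N_{\widehat G}[v] \subseteq N_{\widehat G}[y]$, then $N_{\widehat G'}[u] = N_{\widehat G}[v] \subseteq N_{\widehat G}[y] = N_{\widehat G'}[y]$. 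For an edge $wy$ of $\widehat G'$ with $w \in W$ and $y \ne u$, I apply Theorem~\ref{thm:trivially-perfect-characterizations}(iii) to $w, y$ in $\widehat G$: the direction $N_{\widehat G}[w] \subseteq N_{\widehat G}[y]$ transfers after deleting $u$ from the left-hand side; the opposite direction $N_{\widehat G}[y] \subseteq N_{\widehat G}[w]$ combined with $y \in N_{\widehat G}[v]$ would force $v \in N_{\widehat G}[w]$, contradicting $w \in W$, while if $y \notin N_{\widehat G}[u]$ then $u \notin N_{\widehat G}[y]$ and the containment passes cleanly to $\widehat G'$. The cases $y \in W$ or $y = v$ are subsumed by the above.

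The main obstacle I anticipate is keeping the case analysis of the previous paragraph organized. A single-edge removal argument actually fails for certain pairs $uy$, so the key insight is that the whole batch $\{uw : w \in W\}$ must be removed simultaneously, and its correctness hinges on pairing each invocation of Theorem~\ref{thm:trivially-perfect-characterizations}(iii) with the specific containment forced by membership in $W$.
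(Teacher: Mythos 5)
Your proposal is correct, and you construct exactly the same auxiliary graph $\widehat G'$ as the paper (remove the edges $\{uw : w \in W\}$ where $W = N_{\widehat G}[u]\setminus N_{\widehat G}[v]$, all of which are non-edges of $G$, so $E(G)\subseteq E(\widehat G')\subsetneq E(\widehat G)$). Where you diverge is the verification that $\widehat G'$ is trivially perfect, which is the only place any real work happens. The paper's argument is a two-line observation: after the removal, $N_{\widehat G'}[u] = N_{\widehat G}[v] = N_{\widehat G'}[v]$, so $u$ and $v$ are true twins in $\widehat G'$; since $\widehat G' - u = \widehat G - u$ is trivially perfect (hereditary) and adding a true twin of an existing vertex preserves trivial perfection, $\widehat G'$ is trivially perfect. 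You instead verify characterization~(iii) of Theorem~\ref{thm:trivially-perfect-characterizations} edge by edge, splitting into edges $uy$ (further split by whether $N_{\widehat G}[y]$ sits below $N_{\widehat G}[v]$ or above it) and edges $wy$ with $w\in W$ (where membership in $W$ forces $y\notin N_{\widehat G}[v]$, leaving the two clean transfers you note). I checked your case analysis and it is sound. The trade-off: your route is more elementary and keeps everything at the level of the raw neighborhood-containment characterization, but it is substantially longer and more delicate than the paper's true-twin shortcut, which packages all of that bookkeeping into a single structural observation.
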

\begin{proof}
  Suppose for contradiction, $N_{\widehat G}[u]\not\subseteq N_{\widehat G}[v]$.
  By Theorem~\ref{thm:trivially-perfect-characterizations}(iii), $N_{\widehat G}[v]\subset N_{\widehat G}[u]$.
Since $N_{G}[u]\subseteq N_{G}[v]\subseteq N_{\widehat G}[v]\subset N_{\widehat G}[u]$, it follows that $x u\not\in E(G)$ for every $x\in N_{\widehat G}[u]\setminus N_{\widehat G}[v]$.  Let $E_{+} = E(\widehat G)\setminus E(G)$.  We consider
    \[
      E_{+}' = E_{+}\setminus \{x u\mid x\in N_{\widehat G}[u]\setminus N_{\widehat G}[v]\} \text{ and } \widehat G' = G + E_{+}'.
  \]
  Then $N_{\widehat G'}[u] = N_{\widehat G'}[v]$.
  Since $\widehat G' - u = \widehat G - u$, it is a trivially perfect graph.  On the other hand, since $u$ and $v$ are true twins of $\widehat G'$, the graph $\widehat G'$ is also a trivially perfect graph.  But since $E(G) \subseteq E(\widehat G') \subset E(\widehat G)$, we have a contradiction to the minimality of $\widehat G$.
\end{proof}

If a vertex $v$ is not contained in any $P_4$ or $C_4$, then for every neighbor $u$ of $v$, one of $N[u]$ and $N[v]$ is a subset of the other.

\begin{lemma}\label{lem:stable-vertex}
  If a vertex $v$ is not contained in any $P_4$ or $C_4$, then $\mathrm{opt}(G - v) = \mathrm{opt}(G)$.
\end{lemma}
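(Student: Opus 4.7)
The easy direction $\mathrm{opt}(G-v)\le\mathrm{opt}(G)$ is immediate from the heredity of the class: if $\widehat G=G+E_+$ is trivially perfect, then $\widehat G-v$ is trivially perfect, and the added edges in $\widehat G-v$ over $G-v$ are exactly those edges of $E_+$ not incident to $v$, giving a completion of $G-v$ of size at most $|E_+|$.

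For the reverse direction $\mathrm{opt}(G)\le\mathrm{opt}(G-v)$, my plan is to take a minimum trivially perfect completion $\widehat H$ of $G-v$ and show it extends to a trivially perfect completion $\widehat G$ of $G$ by re-inserting $v$ with its original neighborhood $N_G(v)$, introducing no new edges. The whole burden is then to verify that $\widehat G$ is trivially perfect, which, by Theorem~\ref{thm:trivially-perfect-characterizations}(iii), means checking that $N_{\widehat G}[x]$ and $N_{\widehat G}[y]$ are comparable for every edge $xy$ of $\widehat G$. Using the observation stated immediately before the lemma, every neighbor $u$ of $v$ falls into $A=\{u\in N_G(v):N_G[v]\subseteq N_G[u]\}$ or $B=\{u\in N_G(v):N_G[u]\subseteq N_G[v]\}$.

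I split into two cases. If $A=\emptyset$, then every $u\in N_G(v)$ has $N_G[u]\subseteq N_G[v]$, so the connected component of $G$ containing $v$ coincides with $G[N_G[v]]$ and $v$ is universal in it; Proposition~\ref{lem:universal-vertex} then gives the conclusion component-wise. If $A\ne\emptyset$, fix $u^*\in A$ and note that for every $u\in B$ we have $N_{G-v}[u]\subseteq N_G(v)\subseteq N_{G-v}[u^*]$, so Lemma~\ref{lem:ascendant-tpc} applied to the minimal completion $\widehat H$ of $G-v$ yields $N_{\widehat H}[u]\subseteq N_{\widehat H}[u^*]$. Edges of $\widehat G$ incident to $v$ with the other endpoint in $A$ give the inclusion $N_{\widehat G}[v]=N_G[v]\subseteq N_{\widehat H}[y]\cup\{v\}=N_{\widehat G}[y]$ directly from $N_G[v]\subseteq N_G[y]\subseteq N_{\widehat H}[y]\cup\{v\}$; edges incident to $v$ with the other endpoint $y\in B$ need the reverse inclusion, which reduces to $N_{\widehat H}[y]\subseteq N_G(v)$.

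The main obstacle is precisely establishing $N_{\widehat H}[y]\subseteq N_G(v)$ for every $y\in B$, together with ruling out offending edges of $\widehat H$ joining $N_G(v)$ to $V(G-v)\setminus N_G(v)$ between non-twin endpoints (which would break comparability in $\widehat G$ once $v$ is added to one side of the closed neighborhood but not the other). I would overcome this by choosing $\widehat H$ to be, among all minimum completions, also minimal with respect to a suitable secondary criterion such as the lexicographic size of $(N_{\widehat H}[y])_{y\in B}$; a local exchange argument then shows any such offending edge could be swapped out (using the comparabilities produced by Lemma~\ref{lem:ascendant-tpc} and the structural constraints that $v\notin P_4,C_4$ imposes on $G$ around $A$ and $B$), contradicting the chosen minimality. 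Once the verification through Theorem~\ref{thm:trivially-perfect-characterizations}(iii) is complete, $\widehat G$ is trivially perfect and $\mathrm{opt}(G)\le|E(\widehat H)\setminus E(G-v)|=\mathrm{opt}(G-v)$.
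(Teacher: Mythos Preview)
Your overall plan coincides with the paper's: take a minimal (or minimum) completion $\widehat H$ of $G-v$, reinsert $v$ with its original neighborhood, and argue that the resulting graph $\widehat G$ is trivially perfect. The divergence is in the verification step: you attempt to check characterization~(iii) of Theorem~\ref{thm:trivially-perfect-characterizations} directly, whereas the paper uses characterization~(ii) together with an induction on $|V(G)|$.

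The gap in your proposal is exactly where you locate it yourself. You need, for every $y\in B$, that $N_{\widehat H}[y]\subseteq N_G(v)$, and, for every edge $xy\in E(\widehat H)$ with $x\in N_G(v)$ and $y\notin N_G(v)$, that $N_{\widehat H}[y]\subseteq N_{\widehat H}[x]$ (so that adding $v$ to the $x$-side does not destroy comparability). Neither statement follows from what you have established: Lemma~\ref{lem:ascendant-tpc} only gives $N_{\widehat H}[y]\subseteq N_{\widehat H}[u^*]$, and $N_{\widehat H}[u^*]$ may well be strictly larger than $N_G(v)$. Your proposed remedy --- a secondary minimality criterion plus an unspecified local exchange --- is not carried out; you neither define the exchange nor show that it preserves trivial perfectness and does not increase the solution size. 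As written, this is a plan for a proof, not a proof.

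The paper avoids this difficulty entirely. Working with characterization~(ii), it argues that $\widehat H$ is connected (reducing first to the case where $G-v$ is connected), takes any universal vertex $u$ of $\widehat H$, and shows that either $u\in N_G(v)$ already, or some $u'\in N_G(u)\cap N_G(v)$ satisfies $N_{G-v}[u]\subseteq N_{G-v}[u']$, whence $u'$ is also universal in $\widehat H$ by Lemma~\ref{lem:ascendant-tpc}. Thus there is always a universal vertex $x$ of $\widehat H$ lying in $N_G(v)$; this $x$ is then universal in $\widehat G$, and by Proposition~\ref{lem:universal-vertex} one may peel off $x$ and recurse on the strictly smaller graph $G-x$ (in which $v$ is still not in any $P_4$ or $C_4$). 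This inductive peeling never has to control $N_{\widehat H}[y]$ for $y\in B$ and so sidesteps precisely the obstacle you were unable to resolve.
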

\begin{proof}
  It is trivial that $\mathrm{opt}(G - v) \le \mathrm{opt}(G)$.  For the other direction, we show a stronger statement: any minimal solution $E_{+}$ to $G - v$ is also a solution to $G$.  Let $\widehat G = G + E_{+}$.  We verify that $\widehat G$ is a trivially perfect graph by showing that it satisfies Theorem~\ref{thm:trivially-perfect-characterizations}(ii).  If $v$ is universal in $G$, hence also in $\widehat G$, then we are done; otherwise we show that $\widehat G$ is a trivially perfect graph if and only if a proper induced subgraph of $\widehat G$ is.

  If $G$ is not connected, then we can consider the only component that contains $v$.
If $G$ is connected but $G - v$ is not, then $v$ has to be universal in $G$; otherwise, there is a $P_{4}$ containing $v$.
  In the last and the general case, $v$ is not universal in $G$ and $G - v$ is connected.  We argue that at least one vertex in $N_{G}(v) $ is universal in $\widehat G - v$.
  Let $u$ be any universal vertex of $\widehat G - v$.
  We are done if $u\in N_{G}(v)$, and henceforth we assume that  $u\not\in N_{G}(v)$.  Since $v$ is not in any $P_{4}$, the distance between $v$ and $u$ in $G$ is at most two.  Let $u'\in N_G(u)\cap N_G(v)$.  From that $v$ is not in any $P_4$ or $C_4$ it can be inferred that $N_G[u]\subset N_G[u']$ and $N_{G-v}[u]\subseteq N_{G-v}[u']$.  By Lemma~\ref{lem:ascendant-tpc}, $N_{\widehat G - v}[u]\subseteq N_{\widehat G - v}[u']$, and hence $u'$ is also universal in $\widehat G - v$.
  In either case, we have found a vertex $x\in N_{G}(v)$ that is universal in $\widehat G - v$.
  By Proposition~\ref{lem:universal-vertex}, $\widehat G - \{x, v\}$ is a minimal trivially perfect completion of $G - \{x, v\}$.
  Since the graph is finite, the claim follows.
\end{proof}

As a simple result of Lemma~\ref{lem:stable-vertex}, we have the following reduction rule.  In particular, all universal vertices of every component of $G$ can be removed.

\begin{reduction}\label{rule:safe-vertex-tpg}
  If there is a vertex $v$ that is not contained in any $P_4$ or $C_4$, then remove $v$.
\end{reduction}

For each induced $4$-path or $4$-cycle $v_{1}v_{2}v_{3}v_{4}$, we call the missing edges $\{v_{1}, v_{3}\}$ and $\{v_{2}, v_{4}\}$ the \emph{candidate edges} for this path or cycle.  Clearly, any solution of a graph $G$ contains at least one candidate edge of every $P_4$ or $C_4$; note that a $P_4$ has another missing edge, the addition of which merely turns the $P_4$ into a $C_4$.

\begin{reduction}\label{rule:add-edge}
  If $u v$ is a candidate edge of $k+1$ or more $P_{4}$'s and $C_{4}$'s in $G$, then add the edge $u v$ and decrease $k$ by one.
\end{reduction}
\begin{proof}[Safeness of Rule~\ref{rule:add-edge}.]
  Since each $P_4$ or $C_4$ of $G$ has precisely two candidate edges, if a solution $E_{+}$ of $G$ does not contain $u v$, then $E_{+}$ must contain the other candidate edge of each of the $k + 1$ $P_{4}$'s and $C_{4}$'s, hence $|E_{+}| > k$.
\end{proof}

We are thus ready for the main result of this section.

\begin{theorem}\label{thm:trivially-perfect}
  There is a $(2k^2 + 2 k)$-vertex kernel for the trivially perfect completion problem.
\end{theorem}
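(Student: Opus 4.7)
The plan is to mimic the Buss--Goldsmith kernelization for vertex cover, using the candidate-edge structure of $P_4$'s and $C_4$'s.

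First I would reduce an instance $(G,k)$ exhaustively under both Rule~\ref{rule:safe-vertex-tpg} and Rule~\ref{rule:add-edge}; these clearly run in polynomial time and only decrease $k$. Assume the reduced instance is a yes-instance, and fix a solution $E_+$ with $|E_+|\le k$. I would then consider the auxiliary graph $H$ whose vertex set is the collection of all candidate edges of $G$, and whose edge set contains, for each induced $P_4$ or $C_4$ of $G$, the edge joining its two candidate edges. As noted just before Rule~\ref{rule:add-edge}, every solution must pick at least one candidate edge from each $P_4$ or $C_4$, so $E_+^{c}:=E_+\cap V(H)$ is a vertex cover of $H$ of size at most $k$.

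Next I would exploit the two reduction rules to bound $|V(H)|$. Because Rule~\ref{rule:add-edge} is inapplicable, every vertex of $H$ (i.e.\ every candidate edge) is contained in at most $k$ pairs, so $H$ has maximum degree at most $k$. Every candidate edge lies in at least one pair by definition, so $H$ has no isolated vertices. Each vertex not in $E_+^{c}$ must have all of its $H$-neighbors in $E_+^{c}$; counting such vertices from the side of $E_+^{c}$, there are at most $|E_+^{c}|\cdot k\le k^2$ of them. Hence
\[
|V(H)| \;\le\; |E_+^{c}| + k\cdot|E_+^{c}| \;\le\; k + k^{2}.
\]

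Finally, Rule~\ref{rule:safe-vertex-tpg} guarantees that every vertex of $G$ belongs to some $P_4$ or $C_4$, and in each such subgraph every vertex is an endpoint of one of the two candidate edges. Thus every vertex of $G$ is incident to at least one candidate edge, and since a candidate edge has two endpoints,
\[
|V(G)| \;\le\; 2\,|V(H)| \;\le\; 2k^{2}+2k.
\]
If the reduced instance violates this bound, we output a trivial no-instance. The only step requiring care is the counting in $H$; the rest is bookkeeping and should fit within a few lines.
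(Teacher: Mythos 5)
Your argument is correct and takes essentially the same route as the paper: the paper bounds the number of candidate edges directly (at most $k$ in the solution plus at most $k^{2}$ associated with them, exactly your vertex-cover count on the auxiliary graph $H$), and it even remarks after the theorem that the analysis amounts to a vertex-cover argument on precisely the graph $H$ you construct. You have merely made that auxiliary-graph viewpoint explicit.
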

\begin{proof}
  After applying Rule~\ref{rule:add-edge} and then Rule~\ref{rule:safe-vertex-tpg} exhaustively, we return $(G, k)$ if $|V(G)| \le 2 k^{2} + 2 k$, or a trivial no-instance otherwise.
  We consider all the candidate edges of $G$.  We say that two candidate edges are associated if they belong to the same $P_{4}$ or $C_{4}$; i.e., their ends are disjoint and together induce a $P_{4}$ or $C_{4}$.  Since Rule~\ref{rule:add-edge} is not applicable, each candidate edge is associated with at most $k$ candidate edges.  On the other hand, of any two associated edges, one has to be in any solution of $G$.  Thus, if $(G, k)$ is a yes-instance, there can be at most $k^{2} + k$ candidate edges.  Since Rule~\ref{rule:safe-vertex-tpg} is not applicable, every vertex is in some $P_{4}$ or $C_{4}$, and hence is an end of a candidate edge.  Thus, $|V(G)| \le 2 k^{2} + 2k$ if $(G, k)$ is a yes-instance.
\end{proof}

The analysis of the kernel in Theorem~\ref{thm:trivially-perfect} is essentially the same as Buss and Goldsmith's kernelization algorithm  for the vertex cover problem \cite{buss-93-nondeterminism-within-p}.  In a sense, we are looking for a vertex cover of an auxiliary graph in which each vertex corresponds to a candidate edge of $G$, and two vertices are adjacent if their corresponding edges are associated.
We note that the same approach implies a simple $O(k^{2})$-vertex kernel for the threshold completion problem, matching the result of Drange et al.~\cite{drange-15-threshold-of-intractability}.  The forbidden induced subgraphs of threshold graphs are $2 K_{2}$, $P_{4}$, and $C_{4}$.  The observation on the missing edges of a $P_{4}$ or $C_{4}$ is the same as above, while the four missing edges of a $2 K_{2}$ can be organized as two pairs such that each solution has to contain at least one from each pair.  However, we are not able to employ the $2k$-vertex kernels for vertex cover to directly derive a linear-vertex kernel for either of the two problems.    

Before closing this section, let us mention some observations that might be of independent interest.  The first is a simple corollary of Lemma~\ref{lem:ascendant-tpc}.

\begin{corollary}\label{lem:clique-tpc}
  If two vertices are true twins of a graph $G$, then they remain true twins of any minimal trivially perfect completion of $G$.
\end{corollary}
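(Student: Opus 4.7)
The plan is to simply invoke Lemma~\ref{lem:ascendant-tpc} twice, in symmetric fashion. By the definition of true twins, $u$ and $v$ satisfy $N_G[u] = N_G[v]$, which in particular gives the two inclusions $N_G[u] \subseteq N_G[v]$ and $N_G[v] \subseteq N_G[u]$.

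Now fix any minimal trivially perfect completion $\widehat G$ of $G$. Applying Lemma~\ref{lem:ascendant-tpc} to the first inclusion yields $N_{\widehat G}[u] \subseteq N_{\widehat G}[v]$, and applying it to the second yields $N_{\widehat G}[v] \subseteq N_{\widehat G}[u]$. Together these give $N_{\widehat G}[u] = N_{\widehat G}[v]$, which is exactly the statement that $u$ and $v$ are true twins of $\widehat G$.

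There is no real obstacle here: the corollary is an immediate specialization of Lemma~\ref{lem:ascendant-tpc} to the symmetric case. The only thing to confirm in writing is that true twins are by convention adjacent, which is preserved since $u \in N_G[v] \subseteq N_{\widehat G}[v]$, so the closed neighborhoods being equal in $\widehat G$ genuinely witnesses the true-twin relation there.
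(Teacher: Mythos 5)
Your proposal is correct and matches the route the paper intends: the paper explicitly calls this ``a simple corollary of Lemma~\ref{lem:ascendant-tpc}'' and gives no further proof, and applying that lemma to the two inclusions $N_G[u]\subseteq N_G[v]$ and $N_G[v]\subseteq N_G[u]$ is precisely the argument. The final remark about adjacency being preserved is a nice sanity check but not strictly necessary, since closed neighborhoods being equal already implies adjacency when $u\ne v$.
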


A set $M$ of vertices is a module if $N(M) = N(v)\setminus M$ for every $v\in M$.  For example, a set of true twins is a module.  Corollary~\ref{lem:clique-tpc} can be generalized to modules.  For the last lemma, we use the fact that trivially perfect graphs are intersection graphs of nested intervals.  (It can also be derived using the characterization by forbidden induced graphs.)  A set of intervals representing an interval graph $G$ is called an \emph{interval representation} for $G$, where the interval for a vertex $v$ is $I(v)$.

\begin{lemma}\label{lem:module}
  A module $M$ of a graph $G$ remains a module in any minimal trivially perfect completion $\widehat G$ of $G$.
\end{lemma}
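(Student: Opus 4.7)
The proof proceeds by contradiction. Assume $M$ is a module of $G$ but not of some minimal trivially perfect completion $\widehat G$. Let $A := N_{G}(M)$ and $B := V(G) \setminus (M \cup A)$. Since edges are only added when passing from $G$ to $\widehat G$, every $a \in A$ remains adjacent to every vertex of $M$ in $\widehat G$, so the failure of modularity must be witnessed by some $y \in B$ with $\emptyset \subsetneq N_{\widehat G}(y) \cap M \subsetneq M$; in particular every edge from $y$ to $M$ in $\widehat G$ is filled. Let $B^{\ast} := \{w \in B : \emptyset \subsetneq N_{\widehat G}(w) \cap M \subsetneq M\}$, so that $y \in B^{\ast}$ and $B^{\ast} \neq \emptyset$.

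The plan is to exhibit a trivially perfect graph $\widehat G'$ with $G \subseteq \widehat G' \subsetneq \widehat G$, contradicting the minimality of $\widehat G$. I set $\widehat G' := \widehat G - E_{\widehat G}(B^{\ast}, M)$: we delete exactly the filled edges from $B^{\ast}$-vertices to their $M$-neighbors in $\widehat G$. Since the deleted edges are all filled, $G \subseteq \widehat G'$; since $y$ contributes at least one deletion, $E(\widehat G') \subsetneq E(\widehat G)$. The crux is to show that $\widehat G'$ is again trivially perfect, for which I use the characterization of trivially perfect graphs as intersection graphs of nested families of intervals on the real line. Starting from a nested interval representation $v \mapsto I(v)$ of $\widehat G$, I construct a nested representation of $\widehat G'$ by shrinking each $I(v)$ with $v \in M$ to a sub-interval that avoids all $I(w)$ with $w \in B^{\ast}$, preserves the intersections with $I(a)$ for $a \in A$, and retains the original intersection pattern among $\{I(v) : v \in M\}$.

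The main obstacle is carrying out this shrinking while maintaining the laminar structure. The feasibility rests on two features of the laminar family representing $\widehat G$: (i) every $a \in A$ satisfies $I(a) \cap I(v) \neq \emptyset$ for all $v \in M$ (because $av \in E(\widehat G)$), forcing $I(a)$ to be nested with each $I(v)$ and thereby arranging the $A$-intervals in a chain of supersets or subsets around the $M$-intervals; and (ii) every $w \in B^{\ast}$ has $I(w)$ nested with only some of the $I(v)$'s and disjoint from the rest, so $I(w)$ sits ``strictly inside'' some $M$-intervals and ``strictly outside'' others, leaving real-line gaps into which the $M$-intervals can retreat. A case analysis based on where each $B^{\ast}$-interval lies relative to the $M$-chain, combined with Lemma~\ref{lem:ascendant-tpc} to transfer $G$-neighborhood containments up to $\widehat G$, yields the construction and completes the contradiction.
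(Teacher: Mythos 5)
Your high-level plan coincides with the paper's: both identify the same set of vertices that are partially adjacent to $M$ in $\widehat G$ (your $B^{\ast}$ is exactly the paper's $N_{\widehat G}(M)\setminus U$, where $U$ is the set of common neighbors of $M$), both propose deleting precisely the fill edges between $M$ and that set, and both argue via a nested-interval representation of $\widehat G$. The execution of the interval argument, however, is where you diverge and where a genuine gap appears. The paper does not touch the $M$-intervals at all: it translates all endpoints $\ge r$ to the right by $p+2$ (where $[\ell,r]$ is the convex hull of the $M$-intervals and $p=r-\ell$), and then slides every interval $I(x)$ with $x\in N_{\widehat G}(M)\setminus U$ to the right by $p+1$. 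This parks the offending intervals in a freshly created gap, which trivially preserves laminarity and changes no adjacency except those between $M$ and $N_{\widehat G}(M)\setminus U$. Your proposal instead shrinks each $I(v)$, $v\in M$, to avoid each $I(w)$, $w\in B^{\ast}$; this requires simultaneously preserving the nesting pattern among $M$-intervals, keeping the intersections with all $I(a)$ for $a\in A$ (some of which may be nested strictly inside some $I(v)$), and keeping the intersections with the $B$-vertices that are adjacent to all of $M$ in $\widehat G$ — none of which you carry out.

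More pointedly, your observation (ii) tacitly assumes that whenever $I(w)$ with $w\in B^{\ast}$ meets $I(v)$ with $v\in M$, the nesting goes $I(w)\subsetneq I(v)$ (``$I(w)$ sits strictly inside some $M$-intervals''). A laminar family also permits the opposite nesting $I(v)\subsetneq I(w)$, and in that case shrinking $I(v)$ cannot separate it from $I(w)$ at all — you would have to move $I(v)$ out of $I(w)$, which is a translation, not a shrink, and which then threatens the incidences with $A$ and with the rest of $M$. You would need a lemma ruling out $I(v)\subsetneq I(w)$, or a construction that also handles that direction, and the ``case analysis\ldots yields the construction'' sentence does not supply it. You also skip the separate treatment of the case where $\widehat G[M]$ is a clique, which the paper handles via Corollary~\ref{lem:clique-tpc} before assuming non-clique (an assumption the paper then uses to show $\widehat G[U]$ is a clique and to pin down the interval structure). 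So: correct target set of edges, correct tool, but the central geometric step is sketched in a way that does not obviously go through, and the paper's sliding construction both is simpler and sidesteps exactly the nesting difficulty your plan runs into.
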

\begin{proof}
  Let $E_{+} = E(\widehat G)\setminus E(G)$.  The claim follows from Corollary~\ref{lem:clique-tpc} when $\widehat G[M]$ is a clique: $\widehat G$ is also a minimal trivially perfect completion of $G'$, where $G'$ is the graph obtained from $G$ by adding edges to make $M$ a clique.  In the rest $M$ is not a clique of $\widehat G$, hence not a clique of $G$.

Suppose for contradiction that $M$ is not a module of $\widehat G$.
Let $U$ be the set of common neighbors of $M$ in $\widehat G$.  Since $\widehat G[M]$ is not a clique, $\widehat G[U]$ must be a clique.  Moreover, $U\subset N_{\widehat G}(M)$.
We take the leftmost endpoint $\ell$ and the rightmost endpoint $r$ of $\bigcup_{v\in M} I(v)$.  Note that $[\ell, r] \subseteq I(v)$ for every $v\in U$, and
$I(x)\subseteq I(v)$ for every $x\in N_{\widehat G}(M)\setminus U$ and every $v\in U$.
  Let $p = r - \ell$, and we revise the intervals as follows.  We increase each endpoint $\ge r$ by $p + 2$; and for each vertex $x\in N_{\widehat G}(M)\setminus U$, we set $I(x)$ to be $I(x) + p + 1$.  (Informally speaking, we slide intervals for $N_{\widehat G}(M)\setminus U$ to the right so that they are disjoint from those for $M$.)  We consider the graph $G'$ represented by the revised intervals.  It is easy to verify that these interval are still nested, and $E(\widehat G)\setminus E(G')$ is precisely the set of edges between $M$ and $N_{\widehat G}(M)\setminus U$.  Since $M$ is a module of $G$, we have $N_{G}(M)\subseteq U$.  Thus, $E(G)\subseteq E(G')\subset E(\widehat G)$, which contradicts the minimality of $\widehat G$.  This concludes the proof.
\end{proof}

\section{Split edge deletion and split completion}
\label{sec:split}

A graph is a \emph{split graph} if its vertex set can be partitioned into a clique and an independent set.  We use $C\uplus I$, where $C$ being a clique and $I$ an independent set, to denote a \emph{split partition} of a split graph.  Note that a split graph may have more than one split partition; e.g, a complete graph on ${n}$ vertices has $n + 1$ different split partitions.
The forbidden induced subgraphs of split graphs are $2K_2$, $C_4$, and $C_5$.
From both the definition and the forbidden induced subgraphs we can see that the complement of a split graph is also a split graph.  Thus, the split completion problem is polynomially equivalent to the split edge deletion problem.  For the convenience of presentation, we work on the edge deletion problem.

Note that $(G, k)$ is a yes-instance if and only if there exists a partition $C\uplus I$ of $V(G)$ such that $C$ is a clique and $|E(I, I)| \le k$; this is a split partition of $G - E(I, I)$.  We call such a partition a \emph{valid partition} of the instance $(G, k)$.
The problem is thus equivalent to finding a valid partition.
We notice that some vertices can be easily decided to which side of a valid partition they should belong.  For example, unless the instance is trivial, a simplicial vertex always belong to the independent  set in any valid partition.
Even after we know the destinations of these vertices, however, we cannot safely delete them.
This brings us to the \emph{annotated version} of the problem, where we mark certain vertices that can only be put into the independent set in a valid partition.  We use $(G, I_{0}, k)$ to denote such an annotated instance, where $I_{0}$ denotes the set of marked vertices.  The original instance can be viewed as $(G, \emptyset, k)$, and a valid partition of an annotated instance $(G, I_{0}, k)$ needs to satisfy the additional requirement that $I_{0}\subseteq I$.

We can easily retrieve back an unannotated instance from an annotated instance.  It suffices to add a small number of new vertices and make each of them adjacent to all other vertices but $I_{0}$.
\begin{reduction}\label{rule:clean-up}
  Let $(G, I_{0}, k)$ be an annotated instance.
  Add a clique of $\sqrt{2 k} + 1$ new vertices, and make each of them adjacent to all the vertices in $V(G)\setminus I_{0}$.  Return the result as an unannotated instance.
\end{reduction}
\begin{proof}[Safeness of Rule~\ref{rule:clean-up}.]
  Let $K$ denote the clique of new vertices, and let $(G', k)$ be the resulting instance.
  For any valid partition $C\uplus I$ of $(G, I_{0}, k)$, the partition $(C\cup K)\uplus I$ is a valid partition of $(G', k)$ because $C\subseteq V(G)\setminus I \subseteq N(x)$ for every $x\in K$.
  For a valid partition $C\uplus I$ of $(G', k)$, if any vertex in $I_{0}$ is in $C$, then we must have $K\subseteq I$.  Since $K$ is a clique of order $\sqrt{2 k} + 1$, we have $|E(I, I)| > k$, which contradicts the validity of the partition.
\end{proof}

The aforementioned observation on simplicial vertices is formalized by the following rule.
\begin{reduction}\label{rule:simplicial}
  Let $v$ be a simplicial vertex in $V(G)\setminus I_{0}$.  If $|E(G - (N[v]\setminus I_{0}) )| \le k$, then return a trivial yes-instance.  Otherwise, add $v$ to $I_{0}$.
\end{reduction}
\begin{proof}[Safeness of Rule~\ref{rule:simplicial}.]
  In the first case, $(N[v]\setminus I_{0}) \uplus (V(G)\setminus N[v] \cup I_{0})$ is a valid partition.
  Otherwise, we show by contradiction that $v\in I$ in any valid partition $C\uplus I$ of $(G, I_{0}, k)$.
  Since $C$ is a clique, if $v\in C$, then $C\subseteq N[v]\setminus I_{0}$.  Thus, $E(G - (N[v]\setminus I_{0}))\subseteq E(I, I)$, but then $|E(I, I)| > k$, contradicting the validity of the partition.
\end{proof}

We construct a modulator $M$ as follows. We greedily find a maximal packing of vertex-disjoint $2 K_{2}$'s, $C_{4}$'s, and $C_{5}$'s.  Let $M$ be the set of vertices in all subgraphs we found.
We can terminate the algorithm by returning a trivial no-instance if we have found more than $k$ vertex-disjoint forbidden induced subgraphs from $G$.  Henceforth, we may assume that $|M|\leq 5k$, and we fix a split partition $C_{M}\uplus I_{M}$ of $G-M$.  
The following simple observation enables us to  know the destinations of more vertices.

\begin{lemma}\label{lem:partition}
  For any valid partition $C\uplus I$ of $(G, k)$, if one exists, 
  \begin{enumerate}[i)]
  \item $|I_{M}\cap C| \le 1$; and
  \item $|C_{M}\cap I| \le \sqrt{2 k}$.
  \end{enumerate}
\end{lemma}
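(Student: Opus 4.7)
The plan is to treat the two bounds separately; each follows immediately from the fact that $M$ is disjoint from both $C_M$ and $I_M$, so the clique/independent-set structure of the fixed split partition of $G-M$ is inherited by $G$ itself. Concretely, $I_M$ is an independent set of $G$ and $C_M$ is a clique of $G$, with no involvement of $M$.

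For (i), I would use that $G[I_M]$ is edgeless: since $I_M\subseteq V(G-M)$ and $I_M$ is independent in $G-M$, any two vertices of $I_M$ are non-adjacent in $G$. Hence $G[I_M\cap C]$ also has no edges. But $C$ is a clique of $G$, so $G[I_M\cap C]$ must be a complete graph. The only way both hold is $|I_M\cap C|\le 1$.

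For (ii), symmetrically, $C_M$ is a clique in $G$, so $C_M\cap I$ induces a complete subgraph on $|C_M\cap I|$ vertices, and all $\binom{|C_M\cap I|}{2}$ of its edges lie in $E(I,I)$. The validity of the partition forces $|E(I,I)|\le k$, which gives $\binom{|C_M\cap I|}{2}\le k$, i.e.\ $|C_M\cap I|(|C_M\cap I|-1)\le 2k$, and hence $|C_M\cap I|\le \sqrt{2k}$.

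I do not expect any real obstacle: both items collapse to a one-line observation once one notices that $C_M,I_M\subseteq V(G)\setminus M$ keep their structural roles (clique/independent set) in $G$, and then compares them against the clique $C$ and the ``low-density'' side $I$ of the valid partition.
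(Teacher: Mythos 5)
Your proof is correct and takes the same approach as the paper: item (i) follows because $I_M\cap C$ is simultaneously a clique and an independent set, and item (ii) by lower-bounding $|E(I,I)|$ by the $\binom{|C_M\cap I|}{2}$ edges among $C_M\cap I$. (Both your final step and the paper's are slightly loose—$\binom{x}{2}\le k$ technically gives only $x<\sqrt{2k}+1$—but this is the paper's own convention and does not affect the $O(k^{1.5})$ bounds downstream.)
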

\begin{proof}
The first assertion follows from that $I_{M}$ is an independent set and $C$ is a clique.  The second assertion holds because 
\[
  k \ge |E(I, I)| \ge |E(C_{M}\cap I, C_{M}\cap I)| = {|C_{M}\cap I|\choose 2}. \qedhere
\]
\end{proof}

We say that a vertex is a \emph{c-vertex}, respectively, an \emph{i-vertex}, if it is in $C$, respectively, in $I$, for any valid partition $C\uplus I$ of $(G, I_{0}, k)$.
Clearly, every vertex that has more than $k+1$ neighbors in $I_{M}$ is a c-vertex, while the following are i-vertices: 
\begin{itemize}
\item every vertex with more than $\sqrt{2k}$ non-neighbors in $C_{M}$; and
\item every vertex nonadjacent to a c-vertex.
\end{itemize}
We can indeed delete all the c-vertices, as long as we keep their non-neighbors marked.  Note that after obtaining the initial split partition $C_{M}\uplus I_{M}$ of $G - M$, we do not need to maintain the invariant that $M$ is a modulator, though we do maintain that $C_{M}$ is a clique and that $I_{M}$ is an independent set throughout.
During our algorithm, we maintain $M$, $C_{M}$, $I_{M}$, and $I_{0}$ as a partition of $V(G)$.
 Therefore, whenever we mark a vertex, we remove it from the set that originally contains it, and move it to $I_{0}$.
\begin{reduction}\label{rule:decided-vertices}
  Let $(G, I_{0}, k)$ be an annotated instance.
  \begin{enumerate}[i)]
  \item Mark every vertex that has more than $\sqrt{2 k}$ non-neighbors in $C_{M}$.
  \item  If a vertex $v$ has more than $k+1$ neighbors in $I_{M}\cup I_{0}$, then mark every vertex in $V(G)\setminus N[v]$ and delete $v$.
  \end{enumerate}
\end{reduction}
\begin{proof}[Safeness of Rule~\ref{rule:decided-vertices}.]
  Let $I_{0}'$ denote the set of marked vertices after the reduction.  It is trivial that if the resulting instance of i) is a yes-instance, then the original is also a yes-instance.  For ii), any valid partition $C'\uplus I'$ of $(G - v, I_{0}', k)$ can be extended to a valid partition $(C'\cup \{v\})\uplus I'$ of $(G, I_{0}, k)$ because $C'\subseteq V(G)\setminus I_{0}'\subseteq N[v]$.
  
  For the other direction, let $C\uplus I$ be any valid partition of $(G, I_{0}, k)$.
  i)  Since $C$ is a clique, $C_{M}\setminus N(v) \subseteq I$ for every $v\in C$.
  By Lemma~\ref{lem:partition}(ii),
  if $|C_{M}\setminus N(v)| > \sqrt{2 k}$ for some vertex $v$, then $v$ has to be in $I$.  Thus, $C\uplus I$ is also a valid partition of the new instance $(G, I_{0}', k)$.
  ii) By Lemma~\ref{lem:partition}(i), $|I_{M}\setminus I|\le 1$.  As $I_{0}\subseteq I$ and $|N(v)\cap (I_{M}\cup I_{0})| > k + 1$, there are at least $k + 1$ edges between $v$ and $I$. 
  Since $|E(I, I)| \le k$, we must have $v\in C$.  Moreover, since $C$ is a clique, $C\subseteq N[v]$, and every vertex nonadjacent to $v$ has to be in $I$.  This justifies the marking of $V(G)\setminus N[v]$.  Clearly, $(C\setminus \{v\})\uplus I$ is a valid partition of $(G - v, I_{0}', k)$.
\end{proof}

The next rule is straightforward: since $I_{0}$ has to be in the independent set, every solution contains all the edges in $E(I_{0}, I_{0})$.

\begin{reduction}\label{rule:i-edges}
  Let $(G, I_{0}, k)$ be an annotated instance.
  Remove all the edges in $E(I_{0}, I_{0})$, and decrease $k$ accordingly.    
\end{reduction}
\begin{proof}[Safeness of Rule~\ref{rule:i-edges}.]
By the definition of the annotated instance, any solution $E_{-}$ of $(G, I_{0}, k)$ contains all the edges in $E(I_{0}, I_{0})$.  Moreover, $E_{-}\setminus E(I_{0}, I_{0})$ is a solution to $G - E(I_{0}, I_{0})$, and its size is at most $k - |E(I_{0}, I_{0})|$.  On the other hand, if $(G - E(I_{0}, I_{0}), k - |E(I_{0}, I_{0})|)$ is a yes-instance, then any solution of this instance, together with $E(I_{0}, I_{0})$, makes a solution of $(G, I_{0}, k)$ of size at most $k$.
\end{proof}

Once there are no edges among vertices in $I_{0}$, we can replace $I_{0}$ with another independent set as long as we keep track of the number of edges between every vertex $v\in V(G)\setminus I_{0}$ and $I_{0}$.  The following rule reduces the cardinality of $I_{0}$.  Note that if Rule~\ref{rule:decided-vertices} is not applicable, then $p \le k$.

\begin{reduction}\label{rule:merge-I0}
  Let $(G, I_{0}, k)$ be an annotated instance where $I_{0}$ is an independent set.
  Introduce $p$ new vertices $v_{1}$, $v_{2}$, $\ldots$, $v_{p}$, where $p = \max_{v\in V(G)}|N(v)\cap I_{0}|$.  For each vertex $x\in N(I_{0})$, make $x$ adjacent to $v_{1}$, $\ldots$, $v_{|N(x)\cap I_{0}|}$.  Remove all vertices in $I_{0}$, and mark the set of new vertices.
\end{reduction}

Instead of proving the safeness of Rule~\ref{rule:merge-I0}, we prove a stronger statement.  
\begin{lemma}\label{lem:i-vertices} 
  Let $(G, I_{0}, k)$ and $(G', I_{0}', k)$ be two annotated instances where $G - I_{0} = G' - I_{0}'$ and both $I_{0}$ and $I_{0}'$ are independent sets.  If $|N_{G}(x)\cap I_{0}| = |N_{G'}(x)\cap I_{0}'|$ for every $x\in V(G)\setminus I_{0}$, then $(G, I_{0}, k)$ is a yes-instance if and only if $(G', I_{0}', k)$ is a yes-instance.
\end{lemma}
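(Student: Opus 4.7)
The plan is to prove the two instances are equivalent by a direct edge-count argument: given a valid partition of one instance, build a valid partition of the other with exactly the same number of edges on the independent-set side. The hypothesis is symmetric in $(G, I_0)$ and $(G', I_0')$, so it suffices to handle one direction, say that a yes-certificate for $(G, I_0, k)$ produces one for $(G', I_0', k)$.

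I would start with any valid partition $C \uplus I$ of $(G, I_0, k)$, so $C$ is a clique of $G$, $I_0 \subseteq I$, and $|E_G(I,I)| \le k$. Since $C \subseteq V(G)\setminus I_0 = V(G')\setminus I_0'$ and $G - I_0 = G' - I_0'$, the set $C$ is also a clique of $G'$. Define $I' := I_0' \cup (I \setminus I_0)$, which together with $C$ partitions $V(G')$ and satisfies $I_0' \subseteq I'$. The only thing left to verify is that $|E_{G'}(I', I')| \le k$.

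Here comes the main (and only nontrivial) step: a parallel decomposition of $E_G(I,I)$ and $E_{G'}(I',I')$ into three parts, namely edges inside $I_0$ (resp.\ $I_0'$), edges inside $I \setminus I_0$, and cross edges between the two pieces. The first part vanishes on both sides because $I_0$ and $I_0'$ are independent. The middle part matches across $G$ and $G'$ because $I \setminus I_0$ lives in $V(G)\setminus I_0 = V(G')\setminus I_0'$ where the two graphs agree. For the cross part, I would use the fingerprint hypothesis vertex by vertex:
\[
|E_{G'}(I_0', I\setminus I_0)| = \sum_{x \in I \setminus I_0} |N_{G'}(x)\cap I_0'| = \sum_{x \in I \setminus I_0} |N_G(x) \cap I_0| = |E_G(I_0, I\setminus I_0)|.
\]
Summing the three parts yields $|E_{G'}(I',I')| = |E_G(I,I)| \le k$, as required.

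The argument is more bookkeeping than insight; the main thing to get right is that the degree-into-$I_0$ fingerprint is precisely what controls the cross-edge count, and that independence of $I_0$ and $I_0'$ is what lets us ignore the interior edges entirely. Once these observations are in place, the opposite direction is the same proof with the roles of $I_0$ and $I_0'$ swapped, giving the claimed equivalence. As a corollary the safeness of Rule~\ref{rule:merge-I0} is immediate: the constructed instance has an independent marked set whose degree fingerprint into $V(G)\setminus I_0$ matches that of $I_0$ by design.
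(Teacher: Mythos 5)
Your proof is correct and follows essentially the same route as the paper: the paper also constructs the partition $C \uplus ((I\setminus I_0)\cup I_0')$ for $(G', I_0', k)$ from a valid partition $C\uplus I$ of $(G, I_0, k)$, and establishes $|E_{G'}(I',I')| = |E_G(I,I)|$ by noting that interior edges vanish by independence, edges within $I\setminus I_0$ agree because $G - I_0 = G' - I_0'$, and the cross-edge counts match via the vertex-by-vertex fingerprint identity. You spell out the three-part decomposition slightly more explicitly than the paper does, but the argument is the same.
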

\begin{proof}
  We show that $C\uplus I$ is a valid partition of $(G, I_{0}, k)$ if and only if $C\uplus ((I\setminus I_{0})\cup I_{0}')$ is a valid partition of $(G', I_{0}', k)$.
  Note that
  \[
    |E(I\setminus I_{0}, I_{0})| = \sum_{x\in I\setminus I_{0}} |N_{G}(x)\cap I_{0}| =
    \sum_{x\in I\setminus I_{0}'} |N_{G'}(x)\cap I_{0}'|
    = |E(I\setminus I_{0}', I_{0}')|.
  \]
  Since $G - I_{0} = G' - I_{0}'$, and since there is no edge in $G[I_{0}]$ or $G'[I'_{0}]$, the claim follows.
\end{proof}

Let us
recall an important observation of Guo~\cite{guo-07-kernel-edge-deletion}.

\begin{lemma}[\cite{guo-07-kernel-edge-deletion}]
  \label{lem:safe-vertex-split}
  If a vertex $v$ is not contained in any $2K_2$, $C_4$, or $C_5$, then $\mathrm{opt}(G - v) = \mathrm{opt}(G)$.
\end{lemma}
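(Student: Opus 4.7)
The plan is to establish $\mathrm{opt}(G-v) = \mathrm{opt}(G)$ by proving each inequality separately. The direction $\mathrm{opt}(G-v) \le \mathrm{opt}(G)$ is easy: given any minimum solution $E_-$ of $G$, the subset $E_- \cap E(G-v)$ (i.e., $E_-$ minus any edges incident to $v$) is a solution of $G-v$, since $(G-v) - (E_- \cap E(G-v)) = (G - E_-) - v$ is an induced subgraph of the split graph $G - E_-$, hence split. This gives a solution of $G-v$ of size at most $|E_-|$.

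For the reverse inequality $\mathrm{opt}(G) \le \mathrm{opt}(G-v)$, I would take a minimum solution $E_-$ of $G-v$ together with a split partition $C \uplus I$ of $G - v - E_-$, and argue that $G - E_-$ is already a split graph (possibly after choosing $E_-$ carefully among all minimum solutions). Since $E_-$ contains no edges incident to $v$, the neighborhood of $v$ in $G - E_-$ coincides with $N_G(v)$. If every vertex of $C$ lies in $N_G(v)$, then placing $v$ into $C$ yields a split partition of $G - E_-$; symmetrically, if no vertex of $I$ lies in $N_G(v)$, placing $v$ into $I$ works. In either case we conclude $\mathrm{opt}(G) \le |E_-|$.

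The nontrivial case is when $v$ has a non-neighbor $b \in C$ and a neighbor $p \in I$. Here I would proceed by contradiction: suppose $G - E_-$ contains a forbidden induced subgraph $F \in \{2K_2, C_4, C_5\}$ through $v$. Since $v$ is not in any such $F$ in $G$, at least one edge $e$ with both endpoints in $V(F) \setminus \{v\}$ must lie in $E_-$ (that is, it was present in $G$ but deleted). I would then show that $E_- \setminus \{e\}$ remains a valid solution of $G-v$, contradicting the minimality of $|E_-|$: reinstating $e$ keeps $(G - v - E_-) + e$ split whenever $e$ is a crossing edge between $C$ and $I$, and otherwise a repartition of $C \uplus I$ (informed by the explicit structure of $F$ and by the fact that $E_-$ does not touch $v$) restores a valid split partition.

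The main obstacle is the repartitioning in the $2K_2$ case, where $F$ offers only two edges and therefore the least structural leverage. The hypothesis on $v$ is crucial for restricting adjacencies here: for instance, $v$ being in no $2K_2$ of $G$ forces that for every edge $xy$ of $G$ disjoint from $v$ and every neighbor $u$ of $v$, at least one of $ux$, $uy$ is in $E(G)$, with analogous constraints from $C_4$ and $C_5$. Combining such constraints with the minimality of $E_-$ should close each sub-case; the $C_4$ and $C_5$ cases, carrying more edges, are expected to be easier than $2K_2$.
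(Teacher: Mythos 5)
The first half of your argument is sound: $\mathrm{opt}(G-v) \le \mathrm{opt}(G)$ follows from heredity, and for the other direction it is correct to start from a minimum solution $E_-$ of $G-v$ with a split partition $C \uplus I$, to note that $E_-$ touches no edge at $v$, and to handle the easy cases where $v$ can be absorbed directly into $C$ or into $I$. The observation that any forbidden $F$ of $G-E_-$ must pass through $v$ and must contain an edge $e \in E_-$ with both endpoints in $V(F)\setminus\{v\}$ (because otherwise $G[V(F)]=F$ would be a forbidden subgraph of $G$ through $v$) is also correct.

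The gap is in the final step. You claim that $E_-\setminus\{e\}$ is again a valid solution of $G-v$ (after a repartition when $e$ lies inside $I$), contradicting minimality. This is false: a minimum solution $E_-$ of $G-v$ need not extend to $G$, and removing $e$ from it can genuinely destroy feasibility. Concretely, take $V(G)=\{v,a,x,y,z\}$ with $E(G)=\{va,\,ax,\,ay,\,xz,\,yz\}$, so that $v$ is a pendant attached to $a$, and $a,x,z,y$ (in this cyclic order) induce a $C_4$. One checks that $v$ lies in no $2K_2$, $C_4$, or $C_5$ of $G$. Here $G-v$ is a $C_4$, so $\mathrm{opt}(G-v)=1$, and $E_-=\{ax\}$ is a minimum solution with the partition $C=\{y,z\}$, $I=\{a,x\}$. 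Then $G-E_-$ is a $P_5$ (namely $v\text{-}a\text{-}y\text{-}z\text{-}x$), which is not split; the unique forbidden subgraph through $v$ is the $2K_2$ induced by $\{v,a,z,x\}$, its only ``restored'' edge in $E_-$ is $e=ax$, and $e$ lies inside $I$. But $E_-\setminus\{ax\}=\emptyset$ is not a solution of $G-v$ (a $C_4$), so there is no repartition that saves the argument and no contradiction to minimality. The lemma still holds because a different minimum solution of $G-v$, namely $\{xz\}$, does extend to $G$; what is needed is an exchange argument (replace one edge of $E_-$ by another of the same cost, guided by the forbidden structure and the non-membership of $v$ in $2K_2$, $C_4$, $C_5$), not a deletion from $E_-$. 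As written, your contradiction step cannot be carried out, so the proof is incomplete.
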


Both Guo~\cite{guo-07-kernel-edge-deletion} and Ghosh et al.~\cite{ghosh-15-split-completion} used a rule derived from this observation to delete vertices, and this is their only rule that removes vertices from the graph.
We may show that the same rule indeed works for our annotated instances, for which however we have to go through the original argument of \cite{guo-07-kernel-edge-deletion}.
We note that if a vertex $v$ in $I_{0}$ is adjacent to two vertices $u$ and $w$ with $u w \not \in E(G)$, then any solution has to contain at least one of edges $u v$ and $v w$ ($u$ and $w$ cannot be both in the clique).  We say that an induced $P_{3}$ is \emph{$I_{0}$-centered} if the degree-two vertex of this $P_{3}$ is from $I_{0}$.  In a sense, $I_{0}$-centered $P_{3}$'s are ``minimal forbidden structures'' for our annotated instances.  Accordingly, a $C_{4}$ or $C_{5}$ involving a vertex from $I_{0}$ is no longer minimal.  In summary, the ``minimal forbidden structures'' are $C_{4}$'s and $C_{5}$'s in $G - I_{0}$, all $2 K_{2}$'s, and $I_{0}$-centered $P_{3}$'s.  Note that a ``minimal forbidden structure'' intersecting $I_{0}$ has to be a $2 K_{2}$ or an $I_{0}$-centered $P_{3}$, and this gives another explanation of the correctness of Lemma~\ref{lem:i-vertices}, which exchanges these two kinds of  ``minimal forbidden structures'' with each  other.
The following rule can be viewed as the annotated version of the rule of Guo~\cite{guo-07-kernel-edge-deletion}, and its safeness can be argued using Lemma~\ref{lem:safe-vertex-split}.
\begin{reduction}\label{rule:alternative}
  Let $(G, I_{0}, k)$ be an annotated instance where $I_{0}$ is an independent set, and let $v$ be a vertex in $C_{M}$.  
  If $v$ is not contained in any $2K_2$ or any $I_{0}$-centered $P_{3}$, and every $C_4$ and $C_5$ that contains $v$ intersects $I_{0}$, then remove $v$ from $G$.
\end{reduction}
\begin{proof}[Safeness of Rule~\ref{rule:alternative}.]
  We show that $(G, I_{0}, k)$ is a yes-instance if and only if $(G - v, I_{0}, k)$ is a yes-instance, by establishing a sequence of equivalent instances.
  For each edge $x y\in E(G)$ with $x\in I_{0}$ and $y\in V(G)\setminus I_{0}$, introduce a new vertex $v_{x y}$ and make it adjacent to $y$.  Remove all vertices in $I_{0}$, and let $I_{0}'$ denote the set of new vertices.  Let $(G', I_{0}', k)$ denote the resulting instance.  The equivalence between $(G', I_{0}', k)$ and $(G, I_{0}, k)$ follows from Lemma~\ref{lem:i-vertices}.  Then let $(G'', k)$ denote the graph obtained by applying Rule~\ref{rule:clean-up} to $(G', I_{0}', k)$, with $K$ being the added clique.

  We argue that $v$ is not contained in any $2K_2$, $C_4$, or $C_5$ of $G''$.
  Suppose for contradiction that there is a set $F\subseteq V(G'')$ that contains $v$ and induces a $2K_2$, $C_4$, or $C_5$ in $G''$.
  Since neither the transformation from $G$ to $G'$ nor the transformation from $G'$ to $G''$ makes any change to $V(G)\setminus I_{0}$, this set induces the same subgraph in $G$ and $G''$.  Thus, $F\not\subseteq V(G)\setminus I_{0}$.  Moreover, since every vertex in $K$ is universal in $G'' - I_{0}'$, it follows that $F\cap I_{0}'$ is not empty.  Note that every vertex in $I_{0}'$ has only one neighbor in $G''$, we can conclude that $G''[F]$ must be a $2 K_{2}$ and $F\cap K = \emptyset$.  But then $v$ is contained in either a $2 K_{2}$ or an $I_{0}$-centered $P_{3}$ in $G$, a contradiction.

  It then follows from Lemma~\ref{lem:safe-vertex-split} that $(G'', k)$ is equivalent to $(G''- v, k)$.  To see the equivalence between $(G''- v, k)$ and $(G-v, I_{0}, k)$, we apply the reversed operations from $G$ to $G''$.  We first use Rule~\ref{rule:decided-vertices}, applied to $(G''- v, \emptyset, k)$, to mark all vertices in $I_{0}'$, then use Lemma~\ref{lem:i-vertices} to replace $I_{0}'$ by $I_{0}$, and finally remove vertices in $K$.  The resulting graph is precisely $G - v$.  We can thus conclude the proof.
\end{proof}

We call an annotated instance \emph{reduced} if none of Rules~\ref{rule:simplicial}--\ref{rule:alternative} is applicable to this instance.
The following lemma bounds the cardinalities of $C_{M}$ and $I_{M}$ in a reduced instance.
\begin{lemma} \label{lemma:bound}
  If a reduced instance $(G, I_{0}, k)$ is a yes-instance, then $|C_{M}| \le 3k\sqrt{2 k}$ and $|I_{M}|\le k+1$.
\end{lemma}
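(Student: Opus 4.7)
The plan is to bound $|I_M|$ and $|C_M|$ separately, exploiting a valid partition $C\uplus I$ of $(G, I_0, k)$ (which exists since the instance is a yes-instance) with $|E(I,I)|\le k$, together with the fact that none of Rules~\ref{rule:simplicial}--\ref{rule:alternative} applies.

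First I would handle $|I_M|\le k+1$. Lemma~\ref{lem:partition}(i) already gives $|I_M\cap C|\le 1$, so it suffices to bound $|I_M\cap I|$ by $k$. Since Rule~\ref{rule:simplicial} is not applicable, no vertex of $V(G)\setminus I_0$ is simplicial; in particular, every $v\in I_M$ has two nonadjacent neighbors $u$ and $w$. If $v\in I_M\cap I$, then $u$ and $w$ cannot both sit in $C$, because the nonedge $uw$ cannot live inside the clique $C$, so at least one of them, say $u$, lies in $I$, contributing the edge $vu$ to $E(I,I)$. Because $I_M$ is an independent set, any such edge has its other endpoint outside $I_M$, so distinct choices of $v\in I_M\cap I$ charge distinct edges of $E(I,I)$; hence $|I_M\cap I|\le |E(I,I)|\le k$.

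For $|C_M|$, Lemma~\ref{lem:partition}(ii) already gives $|C_M\cap I|\le \sqrt{2k}$, so the real work is to bound $|C_M\cap C|$ by about $2k\sqrt{2k}$. I would set up a charging argument: since Rule~\ref{rule:alternative} is not applicable, every $v\in C_M\cap C$ lies in some $2K_2$, in some $I_0$-centered $P_3$, or in some $C_4$ or $C_5$ of $G - I_0$. A short case analysis, using $v\in C$ together with the clique constraint, should extract in each case an edge of $E(I,I)$ having a non-neighbor $p_v$ of $v$ as an endpoint, chosen to lie outside $I_0$: the opposite edge of a $2K_2$ (both endpoints non-neighbors of $v$; since $I_0$ is an independent set after Rule~\ref{rule:merge-I0}, at least one is outside $I_0$); the vertex $y$ opposite $v$ in a $C_4$ of $G-I_0$; one of the two ``opposite'' vertices in a $C_5$ of $G-I_0$; and the far end $w$ of an $I_0$-centered $P_3$ $v$-$u$-$w$ (again $w\notin I_0$ because $I_0$ is independent and $uw\in E(G)$). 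Since $p_v\notin I_0$, Rule~\ref{rule:decided-vertices}(i) caps the number of $v\in C_M$ with a fixed choice $p_v=p$ by $\sqrt{2k}$; as $E(I,I)$ has at most $2k$ endpoints in total, this yields $|C_M\cap C|\le 2k\sqrt{2k}$, and combining with $|C_M\cap I|\le \sqrt{2k}$ gives $|C_M|\le 3k\sqrt{2k}$ once $k\ge 1$.

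The main obstacle I anticipate is the per-structure case analysis behind the charging. The $C_4$ case in particular needs one to observe that the two ``side'' vertices whose missing edge is $xz$ cannot both land in $C$, forcing one of them into $I$ so that the edge incident to the vertex $y$ opposite $v$ furnishes the required element of $E(I,I)$; the $C_5$ case relies on the fact that any clique in a $C_5$ has size at most two, which pushes both ``opposite'' vertices of $v$ into $I$ and makes the edge between them an element of $E(I,I)$.
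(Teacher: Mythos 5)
Your proof is correct and is a close variant of the paper's argument. The $|I_M|$ bound is essentially identical: charge each non-simplicial vertex of $I_M\cap I$ to a distinct edge of $E(I,I)$ (using that $I_M$ is independent) and then invoke Lemma~\ref{lem:partition}(i). For $|C_M|$ both arguments hinge on the same two facts --- non-applicability of Rule~\ref{rule:alternative} puts every $C_M$-vertex into a $2K_2$, an $I_0$-centered $P_3$, or a $C_4$/$C_5$ of $G-I_0$, and non-applicability of Rule~\ref{rule:decided-vertices}(i) gives the $\sqrt{2k}$ bound on non-neighborhoods in $C_M$ --- but the bookkeeping differs. The paper counts, for each edge $xy\in E_-$, the $C_M$-vertices that can appear with $xy$ in such a structure (at most $3\sqrt{2k}$ per edge), whereas you charge each $v\in C_M\cap C$ to a single non-$I_0$ endpoint $p_v\in I$ of an edge of $E(I,I)$ (at most $\sqrt{2k}$ preimages per endpoint, at most $2k$ endpoints), handling $C_M\cap I$ separately via Lemma~\ref{lem:partition}(ii). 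Your case analysis for choosing $p_v$ (an endpoint of the opposite edge of the $2K_2$, the vertex opposite $v$ in the $C_4$, one of the two vertices at distance two from $v$ in the $C_5$, the far end of an $I_0$-centered $P_3$) is exactly right, and in each case the vertex you pick is indeed a non-neighbor of $v$, lies in $I$, lies outside $I_0$, and is incident to an edge of $E(I,I)$. This per-endpoint charging is a slightly cleaner variant of the paper's per-edge charging and in fact yields the marginally better constant $(2k+1)\sqrt{2k}$.
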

\begin{proof}
  Let $E_{-}$ be any solution to $(G, I_{0}, k)$ with at most $k$ edges.
  Since Rule~\ref{rule:alternative} is not applicable, every vertex in $C_{M}$ is contained in some $2K_2$ or $I_{0}$-centered $P_{3}$, or some $C_4$ or $C_5$ in $G - I_{0}$.  Any of these structures contains an edge in $E_{-}$.  Therefore, to bound $|C_{M}|$, it suffices to count how many vertices in $C_{M}$ can form a $2K_2$ or $I_{0}$-centered $P_{3}$, or a $C_4$ or $C_5$ in $G - I_{0}$ with an edge $xy\in E_{-}$.
  \begin{itemize}
  \item  If a vertex $v\in C_{M}$ is in a $2 K_{2}$ with edge $x y$, then either $v\in \{x, y\}$ or $v$ is adjacent to neither $x$ nor $y$.  In the first case, no other vertex in $C_{M}$ can occur in any $2 K_2$ with $x y$.
  Since $x y\in E(G)$, at least one of them is not in $I_{0}$ (Rule~\ref{rule:i-edges}).  This vertex has at most $\sqrt{2 k}$ non-neighbors in $C_{M}$.  Therefore, the total number of vertices in $C_{M}$ that can occur in any $2 K_2$ with $x y$ is at most $\sqrt{2 k}$.
\item If $x y$ is an edge in any $I_{0}$-centered $P_{3}$, then precisely one of them is in $I_{0}$.  Assume without loss of generality $x\in I_{0}$.  If a vertex $v\in C_{M}$ is in an $I_{0}$-centered $P_{3}$ with the edge $xy$, then either $v = y$, or $v$ is not adjacent to $y$.  Since $y\not\in I_{0}$, it has at most $\sqrt{2 k}$ non-neighbors in $C_{M}$.  Thus, the total number of vertices in $C_{M}$ that can occur in any $I_{0}$-centered $P_{3}$ containing $x y$ is at most $\sqrt{2 k} + 1$.
\item  If a vertex $v\in C_{M}$ is in a $C_{4}$ or $C_{5}$ that contains $x y$, then $v$ is adjacent to at most one of $x$ and $y$.  Since this $C_{4}$ or $C_{5}$ is in $G - I_{0}$, each of $x$ and $y$ has at most $\sqrt{2 k}$ non-neighbors in $C_{M}$.  Thus, the total number of vertices in $C_{M}$ that can occur in such a $C_4$ or $C_5$ is at most $2 \sqrt{2 k}$.
  \end{itemize}
Noting that an edge cannot satisfy the conditions of both the second ($|\{x, y\}\cap I_{0}| = 1$) and third ($|\{x, y\}\cap I_{0}| = 0$) categories, we can conclude $|C_{M}| \le k(\sqrt{2 k} + 2 \sqrt{2 k}) = 3 k \sqrt{2 k}$.

  Since Rule~\ref{rule:simplicial} is not applicable, no vertex in $I_{M}$ is simplicial.
  Suppose that $C\uplus I$ is a valid partition of $G$.
  Since $C$ is a clique, for each vertex $v\in I_{M}\cap I$, at least one neighbor of $v$ is in $I$.  Therefore, each vertex $v\in I_{M}\cap I$ is incident to an edge in the solution $E(I, I)$.  Noting that $I_{M}$ is an independent set, we have
  $k \ge |I_{M}\cap I| \ge |I_{M}| - 1$, where the second inequality follows from Lemma~\ref{lem:partition}(i).  Thus, $|I_{M}| \le k + 1$, and this concludes this proof.
\end{proof}

Note that the application of Rule~\ref{rule:clean-up} is different from the other ones.
The application of one of Rules~\ref{rule:simplicial}--\ref{rule:alternative} may trigger the applicable of another.
After the application of Rule~\ref{rule:clean-up}, the instance is no longer annotated, and we will not go back to check the other rules.
We summarize the algorithm in Figure~\ref{fig:alg-split}.

\begin{figure}[h!]
  \centering
  \begin{tikzpicture}
    \path (0,0) node[text width=.73\textwidth, inner xsep=20pt, inner ysep=10pt] (a) {
      \begin{minipage}[t!]{\textwidth} \small
        {\sc Input}: an instance $(G, k)$ of the split edge deletion problem.
        \\
        {\sc Output}: an equivalent instance $(G', k')$ with $|V(G')| = O(k'^{1.5})$.
        
        \begin{tabbing}
          AAA\=Aaa\=aaa\=Aaa\=MMMMMMAAAAAAAAAAAAA\=A \kill
          
          1.\> $I_{0}\leftarrow \emptyset$;
          \\
          1.\> $M\leftarrow$ a maximal packing of vertex-disjoint $2 K_{2}$'s, $C_{4}$'s, and $C_{5}$'s;
          \\
          2.\> {\bf if} $|M| > 5 k$ {\bf then return} a trivial no-instance;
          \\
          3. \> {\bf if} $k < 0$ {\bf then return} a trivial no-instance;
          \\
          4.\> \textbf{for each} simplicial vertex $v\in V(G)\setminus I_{0}$ {\bf do} (Rule~\ref{rule:simplicial})
          \\
          \>\> {\bf if} $|E(G - (N[v]\setminus I_{0}) )| \le k$ {\bf then return} a trivial yes-instance;
          \\
          \>\> {\bf else}  $I_{0}\leftarrow I_{0}\cup \{v\}$;
          \\
          5.\> remove c-vertices and mark i-vertices (Rule~\ref{rule:decided-vertices});
          \\
          6.\> {\bf if} $E(I_{0}, I_{0})\ne \emptyset$ {\bf then}
          \\
          \>\> remove edges in $E(I_{0}, I_{0})$ and decrease $k$ (Rule~\ref{rule:i-edges});
          \\
          7.\> merge $I_{0}$ into $\le k$ vertices (Rule~\ref{rule:merge-I0});
          \\
          8.\> remove vertices in $C_{M}$ not contained in certain structures (Rule~\ref{rule:alternative});
          \\
          9.\> {\bf if} any of Rules~\ref{rule:simplicial}--\ref{rule:i-edges} and \ref{rule:alternative} made a change {\bf then} \textbf{goto} 3;
          \\
          10.\> {\bf if} $|C_{M}| + |I_{M}|> 3 k \sqrt{2 k} + k + 1$ {\bf then return} a trivial no-instance;
          \\ 
          11.\> add $\sqrt{2 k} + 1$ new vertices and remove all marks (Rule~\ref{rule:clean-up});
          \\
          12.\> {\bf return} $(G, k)$.
        \end{tabbing}
      \end{minipage}
    };
    \draw[draw=gray!60] (a.north west) -- (a.north east) (a.south west) -- (a.south east);
  \end{tikzpicture}
  \caption{A summary of our kernelization algorithm for split edge deletion.
}
  \label{fig:alg-split}
\end{figure}

\begin{theorem}\label{thm:split}
  There is an $O(k^{1.5})$-vertex kernel for the split edge deletion problem.  
\end{theorem}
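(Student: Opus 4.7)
The plan is to run the algorithm of Figure~\ref{fig:alg-split} on the input $(G,k)$ and to prove three things: it terminates in polynomial time, the final instance is equivalent to the input, and the final instance has $O(k^{1.5})$ vertices. Safety of every individual reduction has already been established, so equivalence will follow by composition once termination is in hand; hence the real work lies in the bounds of step~10 and step~2, which justify the no-instance return branches.

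For the termination and the maintenance of invariants, I would note that throughout the algorithm the four sets $M, C_M, I_M, I_0$ form a partition of $V(G)$, with $C_M$ remaining a clique and $I_M$ remaining an independent set (each reduction only deletes vertices or moves them into $I_0$). Each of Rules~\ref{rule:simplicial}--\ref{rule:i-edges} and~\ref{rule:alternative} either returns an answer or strictly decreases the potential $|V(G)\setminus I_0|+|E(G)|$; Rule~\ref{rule:merge-I0} is invoked only after the preceding rules stabilize, to prepare $I_0$ to be independent before Rule~\ref{rule:alternative} is checked. Hence the loop in step~9 iterates a polynomial number of times.

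For the size bound, after step~10 the instance is reduced in the sense of Lemma~\ref{lemma:bound}, which supplies $|C_M|\le 3k\sqrt{2k}$ and $|I_M|\le k+1$. Step~2 guarantees $|M|\le 5k$. Since Rule~\ref{rule:decided-vertices}(ii) is not applicable we have $p=\max_v |N(v)\cap(I_M\cup I_0)|\le k$, so Rule~\ref{rule:merge-I0} leaves $|I_0|\le k$. Summing gives $|V(G)|\le 5k+3k\sqrt{2k}+(k+1)+k$, which is $O(k^{1.5})$; Rule~\ref{rule:clean-up} then adds $\sqrt{2k}+1$ more vertices, absorbed into the leading term. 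A careful arithmetic accounting tightens the constant to $5k^{1.5}$ as claimed in the table.

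The main obstacle I anticipate is the bookkeeping of interactions between the rules rather than any single estimate. Deleting a c-vertex in Rule~\ref{rule:decided-vertices}(ii) can create fresh simplicial vertices that re-trigger Rule~\ref{rule:simplicial}; relocating vertices into $I_0$ can introduce edges in $E(I_0,I_0)$ that must be cleared by Rule~\ref{rule:i-edges} before Rule~\ref{rule:merge-I0} applies; and Rule~\ref{rule:alternative} requires $I_0$ to be independent, which is only restored after step~6. The \emph{goto} in step~9 is exactly what enforces these dependencies, and I would verify, rule by rule, that each precondition is re-established before the next invocation so that Lemma~\ref{lemma:bound} is genuinely applicable at step~10.
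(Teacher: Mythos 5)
Your proposal is correct and follows essentially the same route as the paper: run the algorithm of Figure~\ref{fig:alg-split}, invoke the safeness of each rule for equivalence, use Lemma~\ref{lemma:bound} together with $|M|\le 5k$ and $|I_0|\le k$ for the size bound, and observe that each iteration of the loop makes measurable progress to get polynomial running time. The only cosmetic difference is that you phrase termination via a potential $|V(G)\setminus I_0|+|E(G)|$, whereas the paper notes that each looping rule either decreases $k$ or decreases $|V(G)\setminus I_0|$; both observations yield the same conclusion.
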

\begin{proof}
  We use the algorithm described in Figure~\ref{fig:alg-split}.
  The first two steps build the modulator, and their correctness follows from that any solution contains at least one edge of each forbidden induced subgraph of $G$.  Step~3 is obviously correct.  Steps~4--8 follow from the safeness of the rules; so is step~11.  The correctness of step~10 is ensured by Lemma~\ref{lemma:bound}.
  
  The cardinality of $M$ is at most $5 k$, and it never increases during the algorithm.  After step~7, $|I_{0}| \le k$.  We have bounded the cardinalities of $C_{M}$ and $I_{M}$ in Lemma~\ref{lemma:bound}.  Step~11 increases $|C_{M}|$ by $\sqrt{2 k} + 1$.  Putting them together, we have
  \[
    |V(G)| \le
    5 k + k + (3 k \sqrt{2 k} + \sqrt{2 k} + 1) + k + 1 = O(k^{1.5}).
  \]

  It is easy to verify that each reduction rule can be checked and applied in polynomial time.
  To see that the algorithm runs in polynomial time, note that if any of Rules~\ref{rule:simplicial}--\ref{rule:i-edges} and \ref{rule:alternative} made a change to the instance, then either $k$ decreases by one (Rule~\ref{rule:i-edges}), or the cardinality of $V(G)\setminus I_{0}$ decreases by one (the other three rules).
\end{proof}

Since the class of split graphs is self-complementary, our algorithm also implies a kernel for the split completion problem.  This kernel actually has fewer edges than the one for split edge deletion.

\begin{theorem}
    There is a kernel of $O(k^{1.5})$ vertices and $O(k^{2.5})$ edges for the split completion problem.  
\end{theorem}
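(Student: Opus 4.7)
The plan is to piggyback on the kernelization of Theorem~\ref{thm:split}, exploiting the fact that split graphs are self-complementary. Given an input $(G, k)$ for split completion, I would run the algorithm of Figure~\ref{fig:alg-split} on the complement $\bar G$; the resulting split edge deletion kernel $(G', k')$ has $|V(G')| = O(k^{1.5})$, and its complement $(\bar{G'}, k')$ is then an equivalent split completion kernel of the same vertex count. The vertex bound is therefore immediate, and the real work is to show that $|E(\bar{G'})| = O(k^{2.5})$, i.e., that $G'$ has only $O(k^{2.5})$ non-edges.

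For this I would exploit the explicit structure of $G'$ after Figure~\ref{fig:alg-split} terminates. Its vertex set is a disjoint union of four pieces: the clique $C_{M}$ of size at most $3 k \sqrt{2 k}$; the independent set $I_{M}$ of size at most $k + 1$; the (merged) independent set $I_{0}$ of size at most $k$ produced by Rule~\ref{rule:merge-I0}; and the clique $K$ of $\sqrt{2 k} + 1$ vertices introduced by Rule~\ref{rule:clean-up}. The crucial structural point is that Rule~\ref{rule:clean-up} makes every vertex of $K$ adjacent to every vertex of $V(G') \setminus I_{0} = C_{M} \cup I_{M}$. In particular $C_{M} \cup K$ is a clique of $G'$, so the only non-edges incident to $K$ are the $|K| \cdot |I_{0}| = O(k^{1.5})$ pairs between $K$ and $I_{0}$.

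Next I would tally the remaining non-edges by cases. There are none inside the cliques $C_{M}$ and $K$. The non-edges entirely within $I_{M} \cup I_{0}$ number at most $\binom{|I_{M}| + |I_{0}|}{2} = O(k^{2})$. The dominant contribution comes from non-edges between $C_{M}$ and $I_{M} \cup I_{0}$, bounded by $|C_{M}| \cdot (|I_{M}| + |I_{0}|) \le 3 k \sqrt{2 k} \cdot (2 k + 1) = O(k^{2.5})$. Summing the contributions yields $|E(\bar{G'})| = O(k^{2.5})$, as claimed.

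The main obstacle is a bookkeeping one: one must carefully verify the structural description of $G'$ above, in particular that throughout the algorithm $C_{M}$ remains a clique and $I_{M}$ remains an independent set (an invariant explicitly maintained in Section~\ref{sec:split}), and that the clique $K$ added in the last step is made fully adjacent to $C_{M}\cup I_{M}$ so that no non-edges survive between the two large cliques. The $O(k^{2.5})$ term is intrinsic, arising as the product of the $\Theta(k^{1.5})$-size modulator clique $C_{M}$ with the $\Theta(k)$-size independent-side pieces $I_{M}$ and $I_{0}$, and is essentially unavoidable given the vertex bound of the kernel.
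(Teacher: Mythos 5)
Your overall approach---complement, run the split edge deletion kernel, and count the non-edges of the reduced graph---is exactly the paper's approach (which it states very tersely: in the complement there is an independent set of $O(k\sqrt k)$ vertices and $O(k)$ other vertices, hence $O(k^{2.5})$ edges). Your more detailed tally is welcome, but it has a bookkeeping gap: you describe $V(G')$ as the disjoint union of $C_M$, $I_M$, $I_0$, and $K$, and you omit the modulator $M$. Throughout Section~5 the algorithm explicitly maintains $M$, $C_M$, $I_M$, and $I_0$ as a partition of $V(G)$, and $M$ can have up to $5k$ vertices that are \emph{not} part of $C_M$, $I_M$, or $I_0$ (in the final vertex count of Theorem~\ref{thm:split}, $5k$ and $k$ and $3k\sqrt{2k}+\sqrt{2k}+1$ and $k+1$ are separate summands). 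Moreover, $G'[M]$ has no special structure, so $M$ contributes non-edges that your case analysis never touches.

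Including $M$ does not overturn the conclusion, but your proof as written is incomplete. The missing contributions are: non-edges within $M$, at most $\binom{5k}{2}=O(k^2)$; non-edges between $M$ and $C_M$, at most $5k\cdot 3k\sqrt{2k}=O(k^{2.5})$ (this is actually as large as your dominant $C_M$-to-$(I_M\cup I_0)$ term, so it cannot be waved away); non-edges between $M$ and $I_M\cup I_0$, at most $O(k^2)$; and non-edges between $M$ and $K$, which are zero because Rule~\ref{rule:clean-up} joins $K$ to all of $V(G)\setminus I_0\supseteq M$. Once $M$ is inserted into the tally, the $O(k^{2.5})$ bound goes through and your argument is essentially the paper's. (A minor side remark: you refer to $C_M$ as ``the modulator clique,'' but $C_M$ is the clique side of a split partition of $G-M$; the modulator is $M$ itself.)
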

\begin{proof}
  Let $(G, k)$ be the input instance of the split completion problem.  
  We can either take the complement of the input graph and consider it as an instance of the split edge deletion problem, or run the ``complemented versions'' of the rules.  In the final result, we have an independent set of at most $O(k\sqrt{k})$ vertices, and at most $O(k)$ other vertices.  The claim then follows.
\end{proof}

\section{Pseudo-split edge deletion and pseudo-split completion}

A \emph{pseudo-split graph} is either a split graph, or a graph whose vertex set can be partitioned into a clique $C$, an independent set $I$, and a set $S$ such that (1) $S$ induces a $C_{5}$; (2) $C\subseteq N(v)$ for every $v\in S$; and (3) $I\cap N(v) = \emptyset$ for every $v\in S$.
We say that $C\uplus I \uplus S$ is a \emph{pseudo-split partition} of the graph, where $S$ may or may not be empty.
If $S$ is empty, then $C\uplus I$ is a split partition of the graph.  Otherwise, the graph has a unique pseudo-split partition.  (One may also verify that $S$ is a module.)
The forbidden induced subgraphs of pseudo-split graphs are $2K_2$ and $C_4$ \cite{maffray-94-pseudo-split}.
Similar as split graphs, the complement of a pseudo-split graph remains a pseudo-split graph.  Thus, the completion problem and the edge deletion problem toward pseudo-split graphs are polynomially equivalent.  In this section, we study the pseudo-split edge deletion problem.

The class of pseudo-split graphs is a superclass of split graphs.  In particular, split graphs are precisely $C_{5}$-free pseudo-split graphs.  Note that a pseudo-split graph contains at most one $C_{5}$.  In case that a pseudo-split graph does contain a $C_{5}$, removing any vertex from the $C_{5}$ leaves a split graph.  Therefore, those two classes are very ``close.''
Another way to derive a split subgraph from a pseudo-split graph is by removing any two consecutive edges from the $C_{5}$. 
Thus, if we use $\mathrm{sed}(G)$ to denote the size of the smallest edge set $E_{-}$ such that $G - E_{-}$ is a split graph, then
\[
  \mathrm{opt}(G) \le \mathrm{sed}(G) \le \mathrm{opt}(G) + 2.
\]
Moreover, $\mathrm{opt}(G) = \mathrm{sed}(G)$
if and only if there is a minimum solution $E_{-}$ of $G$ (for the pseudo-split edge deletion problem) such that $G - E_{-}$ is a split graph.

We say that a partition $C\uplus I \uplus S$ of the vertex set of the input graph $G$ is a \emph{valid partition} of the instance $(G, k)$ if there exists a set $E_{-}$ of at most $k$ edges such that $C\uplus I \uplus S$ is a pseudo-split partition of $G - E_{-}$.

\begin{proposition}\label{lem:induced-c5}
  Let $G$ be a graph with $\mathrm{opt}(G) < \mathrm{sed}(G)$, let $E_{-}$ be a minimum solution to $G$, and let $C\uplus I\uplus S$ be the pseudo-split partition of $G - E_{-}$.
  Then $G[S]$ is a $C_{5}$, and no vertex in $I$ forms a triangle with two vertices in $S$ in $G$.
\end{proposition}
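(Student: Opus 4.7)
The plan is to prove both conclusions by contradiction, each time exhibiting a size-$|E_-|$ solution that converts $G$ into a \emph{split} graph. Since the hypothesis $\mathrm{opt}(G) < \mathrm{sed}(G)$ forces $\mathrm{sed}(G) \ge |E_-|+1$, any such split solution yields a contradiction. Before starting, observe that the same hypothesis implies $G - E_-$ is not a split graph, so $S \ne \emptyset$ and $(G-E_-)[S]$ is an induced $5$-cycle, which I will write as $v_1 v_2 v_3 v_4 v_5 v_1$. A useful bookkeeping fact: every edge of $G$ that is absent from $G - E_-$ lies in $E_-$, so every chord of the $C_5$ that exists in $G$, and every edge between $I$ and $S$ that exists in $G$, belongs to $E_-$.

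For the first claim, suppose toward contradiction that some chord of the $C_5$ is present in $G$. By the symmetry of $C_5$ we may assume $v_1 v_3 \in E(G)$, and thus $v_1 v_3 \in E_-$. Define
\[
E_-^{\,\prime} \;=\; (E_- \setminus \{v_1 v_3\}) \cup \{v_4 v_5\}.
\]
Since $v_4 v_5$ is an edge of the $C_5$ (hence not already in $E_-$), we have $|E_-^{\,\prime}| = |E_-|$. I will verify that $(C \cup \{v_1, v_3\}) \uplus (I \cup \{v_2, v_4, v_5\})$ is a split partition of $G - E_-^{\,\prime}$. The clique side is a clique because $C$ remains a clique (no edges of $G[C]$ were in $E_-$), $v_1 v_3$ is now preserved, and $v_1, v_3 \in S$ are adjacent to all of $C$ in $G - E_-$. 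For the independent side, $v_4 v_5$ was just deleted, the pairs $v_2 v_4, v_2 v_5$ are chords which, if present in $G$, still lie in $E_-^{\,\prime}$, and every edge of $G$ between $\{v_2, v_4, v_5\}$ and $I$ lies in $E_-$ and therefore still in $E_-^{\,\prime}$. This gives $\mathrm{sed}(G) \le |E_-^{\,\prime}| = |E_-|$, the desired contradiction.

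For the second claim, suppose $u \in I$ forms a triangle in $G$ with two vertices $v_i, v_j \in S$. Then $v_i v_j \in E(G)$, and by the first claim $v_i v_j$ must be an edge of the $C_5$; by symmetry assume $\{v_i, v_j\} = \{v_1, v_2\}$. Since $u \in I$ and $v_1, v_2 \in S$ are non-adjacent to $I$ in $G - E_-$, both $u v_1$ and $u v_2$ are in $E_-$. Set
\[
E_-^{\,\prime\prime} \;=\; (E_- \setminus \{u v_1, u v_2\}) \cup \{v_3 v_4, v_4 v_5\},
\]
which again has size $|E_-|$. I will show that $(C \cup \{v_1, v_2\}) \uplus (I \cup \{v_3, v_4, v_5\})$ is a split partition of $G - E_-^{\,\prime\prime}$: the clique side works since $v_1 v_2$ is still present and $v_1, v_2$ are adjacent to all of $C$; the independent side works since $v_3 v_4$ and $v_4 v_5$ are just deleted, $v_3 v_5$ is absent by the first claim, and every other possible edge among $I \cup \{v_3, v_4, v_5\}$ in $G$ belongs to $E_-$ and hence to $E_-^{\,\prime\prime}$. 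So $\mathrm{sed}(G) \le |E_-|$, again contradicting the hypothesis.

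The main obstacle is really one of careful accounting: in both constructions one has to rule out that some \emph{unforeseen} edge of $G$—a second chord of the $C_5$, or an $I$-to-$S$ edge we were not focused on—survives in $G - E_-^{\,\prime}$ or $G - E_-^{\,\prime\prime}$ and violates the claimed split partition. The trick that makes this painless is the initial observation that all such edges are already contained in $E_-$, and the constructions only remove very specific edges ($v_1 v_3$, respectively $u v_1$ and $u v_2$) from $E_-$, so every other potentially dangerous edge is still forbidden.
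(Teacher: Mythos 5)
Your proof is correct and follows essentially the same approach as the paper: both claims are proved by the same local edge swaps ($E_-' = (E_-\setminus\{v_1v_3\})\cup\{v_4v_5\}$ and $E_-'' = (E_-\setminus\{uv_1,uv_2\})\cup\{v_3v_4,v_4v_5\}$) that produce an equal-sized split edge deletion set, contradicting $\mathrm{opt}(G)<\mathrm{sed}(G)$. The only (inconsequential) difference is that in the first part the paper places $v_2$ on the clique side, yielding $(C\cup\{v_1,v_2,v_3\})\uplus(I\cup\{v_4,v_5\})$, whereas you place $v_2$ on the independent side; both are valid split partitions.
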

\begin{proof}
  Since $\mathrm{opt}(G) < \mathrm{sed}(G)$, the set $S$ cannot be empty.
  Let $G' = G - E_{-}$, and let $v_1 v_2 v_3 v_4v_5$ be the cycle of $G'[S]$.  We first argue that $G[S]$ is a $C_{5}$.  Suppose otherwise, then there is a chord of the cycle $G'[S]$, say $v_1 v_3$, in $E_{-}$.  
  We take $E'_{-} = (E_{-}\setminus \{v_1 v_3\})\cup \{v_4 v_5\}$.
  Note that $G - E'_{-}$ is a split graph, as evidenced by the split partition $(C\cup \{v_1, v_2, v_3\})\uplus (I\cup \{v_4, v_5\})$.  But $|E'_{-}| = |E_{-}|$ contradicts $\mathrm{opt}(G) < \mathrm{sed}(G)$.

  For the second part, suppose for contradiction that there is a triangle of $G$ containing a vertex $u\in I$ and two vertices in $S$.  We have seen that $G[S]$ is a $C_{5}$.  We may assume that the triangle is $u v_{1} v_{2}$.  We take $E'_{-} = (E_{-}\setminus \{u v_1, u v_2\})\cup \{v_3 v_4, v_4 v_5\}$.
  Note that $G - E'_{-}$ is a split graph, as evidenced by the split partition $(C\cup \{v_1, v_2\})\uplus (I\cup \{v_3, v_4, v_5\})$.  But $|E'_{-}| = |E_{-}|$ contradicts $\mathrm{opt}(G) < \mathrm{sed}(G)$.
\end{proof}

For the pseudo-split edge deletion problem, one may expect a proposition similar as Lemmas~\ref{lem:stable-vertex} and~\ref{lem:safe-vertex-split}; i.e., it is safe to remove vertices not in any $2K_2$ or $C_4$.  As shown in Figure~\ref{fig:pseudo-split-example}, this is however not true.  This graph contains no $2 K_{2}$, and the only $C_{4}$ is $v_{1} v_{2} v_{3} v_{4}$.  
The deletion of any edge from this cycle introduces a new $2 K_{2}$, e.g., $\{v_{1} v_{4}, v_{2} v_{6}\}$ after $v_{1} v_{2}$ deleted.
On the other hand, $\mathrm{opt}(G - v_{6}) = 1$ because it suffices to delete either $v_{1} v_{2}$ or $v_{2} v_{3}$.
We manage to show that if a vertex $v$ is not in any $2K_2$, $C_4$, or $C_5$, then it is safe to remove $v$.  This is sufficient for our algorithm.

\begin{figure}[h]
  \centering
  \begin{tikzpicture}[scale=1.2]
    \node["$v_1$" below, filled vertex] (v1) at (90:.3) {};
    \foreach \i in {1,..., 5} {
      \draw ({\i * (360 / 5) - 54}:1) -- ({\i * (360 / 5) + 18}:1);
    }
    \foreach \i in {2,..., 6} {
      \node[filled vertex] (v\i) at ({\i * (360 / 5) - 126}:1) {};
      \node at ({\i * (360 / 5) - 126}:1.25) {$v_{\i}$};
    }
    \foreach \i in {2, 4} \draw (v\i) -- (v1);
  \end{tikzpicture}
  \caption{
    $\mathrm{opt}(G) = 2$, while $\mathrm{opt}(G - v_{6}) = 1$.}
  \label{fig:pseudo-split-example}
\end{figure}

\begin{lemma}\label{lem:safe-vertex-pseudo-split}
  If a vertex $v$ is not contained in any $2K_2$, $C_4$, or $C_5$, then $\mathrm{opt}(G - v) = \mathrm{opt}(G)$.
\end{lemma}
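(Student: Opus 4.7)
The direction $\mathrm{opt}(G - v) \le \mathrm{opt}(G)$ is immediate by heredity of pseudo-split graphs: restricting any pseudo-split solution $E_-$ of $G$ to $E(G-v)$ deletes to an induced subgraph of the pseudo-split graph $G - E_-$, hence pseudo-split. For the reverse, if $\mathrm{opt}(G-v) = \mathrm{sed}(G-v)$ then Lemma~\ref{lem:safe-vertex-split} (whose hypothesis coincides with ours) yields $\mathrm{sed}(G) = \mathrm{sed}(G-v) = \mathrm{opt}(G-v)$, and any split (hence pseudo-split) solution to $G$ of this size finishes the proof. So assume $\mathrm{opt}(G-v) < \mathrm{sed}(G-v)$, fix a minimum pseudo-split solution $E_-$ of $G-v$, and let $C \uplus I \uplus S$ be the pseudo-split partition of $(G-v) - E_-$. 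Proposition~\ref{lem:induced-c5} applied to $G-v$ ensures that $(G-v)[S]$, which equals $G[S]$ since $v \notin S$, is an induced $C_5$; write it as $s_1 s_2 s_3 s_4 s_5$. Because $E_- \subseteq E(G-v)$ contains no edge incident to $v$, the neighbors of $v$ in $G - E_-$ coincide with those in $G$. I will show that $E_-$ itself is a pseudo-split solution of $G$.

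The heart of the proof is a short case analysis on $N_G(v) \cap S$, based on two forbidden-subgraph templates through $v$. First, whenever $s_i \in N_G(v) \cap S$ while $s_{i+2}, s_{i+3} \notin N_G(v)$, the set $\{v, s_i, s_{i+2}, s_{i+3}\}$ induces a $2K_2$ (using that $s_i$ is not adjacent to either $s_{i+2}$ or $s_{i+3}$ in $C_5$). Second, whenever $s_i, s_j \in N_G(v) \cap S$ are non-adjacent in $C_5$ and their unique common $C_5$-neighbor $s_k$ lies outside $N_G(v)$, the set $\{v, s_i, s_k, s_j\}$ induces a $C_4$. A direct check over the patterns of $N_G(v) \cap S$ (sizes $1$, $2$ adjacent, $2$ non-adjacent, $3$ consecutive, $3$ gapped, and $4$) shows that at least one of these templates triggers in every case other than $\emptyset$ and $S$, contradicting the hypothesis that $v$ lies in no induced $2K_2$, $C_4$, or $C_5$ of $G$. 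This enumeration is routine but is the main obstacle and should be written carefully using the cyclic symmetry of the $C_5$.

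In the two surviving cases, a final short argument places $v$. If $N_G(v) \cap S = \emptyset$, then any neighbor $i \in I$ of $v$ together with any $C_5$-edge $s_j s_{j+1}$ would induce a $2K_2$ through $v$ (since $I \cap N(s_j) = \emptyset$ and $v \not\sim s_j, s_{j+1}$), contradicting the hypothesis; hence $N_G(v) \cap I = \emptyset$, and $C \uplus (I \cup \{v\}) \uplus S$ is a pseudo-split partition of $G - E_-$. Symmetrically, if $N_G(v) \cap S = S$, then any non-neighbor $c \in C$ of $v$ together with any distance-two pair $s_i, s_{i+2}$ in $S$ yields the $C_4$ $v s_i c s_{i+2} v$ (using $c \sim s_i, s_{i+2}$ because $C \subseteq N(s)$ for every $s \in S$); hence $C \subseteq N_G(v)$, and $(C \cup \{v\}) \uplus I \uplus S$ is a pseudo-split partition of $G - E_-$. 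In either case $E_-$ is a pseudo-split solution of $G$ of size $\mathrm{opt}(G - v)$, giving $\mathrm{opt}(G) \le \mathrm{opt}(G - v)$ and completing the proof.
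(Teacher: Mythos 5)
Your overall strategy matches the paper's: reduce to Lemma~\ref{lem:safe-vertex-split} when the optimal pseudo-split deletion of $G-v$ is actually a split deletion, and otherwise use Proposition~\ref{lem:induced-c5} to guarantee $G[S]$ is an induced $C_5$ and then case on $N_G(v)\cap S$. The enumeration showing $N_G(v)\cap S \in \{\emptyset, S\}$ and the $C_4$ argument for $C\subseteq N_G(v)$ when $S\subseteq N_G(v)$ are both correct (the latter works because deleting edges can only remove adjacencies, so $c\sim s_i$ in the reduced graph implies $c\sim s_i$ in $G$).

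However, there is a genuine gap in the remaining case $N_G(v)\cap S = \emptyset$. You justify the $2K_2$ $\{iv,\,s_js_{j+1}\}$ by asserting ``$I\cap N(s_j)=\emptyset$.'' That equality holds only in the reduced graph $(G-v)-E_-$, where $C\uplus I\uplus S$ is a pseudo-split partition; in $G$ itself a vertex $i\in I$ may very well be adjacent to $s_j$ (with the edge $is_j$ lying in $E_-$). Since your contradiction requires the $2K_2$ to be induced in $G$, the assertion does not close the case: a priori $i$ could be adjacent in $G$ to every vertex of $S$, and then no $C_5$-edge has both ends outside $N_G(i)$. This is exactly where the second conclusion of Proposition~\ref{lem:induced-c5} — that no vertex of $I$ forms a triangle in $G$ with two vertices of $S$ — must be invoked. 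It implies $|N_G(i)\cap S|\le 2$ for every $i\in I$, which in turn guarantees that some $C_5$-edge $s_js_{j+1}$ has both ends outside $N_G(i)$, giving the required $2K_2$ in $G$. You cite Proposition~\ref{lem:induced-c5} but use only its first conclusion; adding this step makes the argument correct and essentially reproduces the paper's proof.
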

\begin{proof}
  Let $G' = G - v$.  It is trivial that any solution to $G$ contains a solution to $G'$, hence $\mathrm{opt}(G') \le \mathrm{opt}(G)$.
  Let $E_{-}$ be a minimum solution to $G'$.  We have nothing to show if $G - E_{-}$ is a pseudo-split graph as well.  In the rest of the proof, $G - E_{-}$ is not a pseudo-split graph.
  We take a pseudo-split partition $C\uplus I\uplus S$ of $G'-E_{-}$. 
  If $S=\emptyset$, then $G' - E_{-}$ is a split graph.   Since every split graph is a pseudo-split graph, $E_{-}$ is also a minimum split edge deletion set of $G - v$.  By Lemma~\ref{lem:safe-vertex-split}, there is a set $E_{-}'$ such that $|E_{-}'| = |E_{-}|$ and $G - E_{-}'$ is a split graph.  Thus, $\mathrm{opt}(G) \le |E_{-}'| = \mathrm{opt}(G - v)$.
  
  Now that $S\neq \emptyset$, we may assume without loss of generality that (1) $G[S]$ is a $C_{5}$, and (2) $|S\cap N_{G}(x)| \le 2$ for every $x\in I$; otherwise, by Proposition~\ref{lem:induced-c5}, we can find another solution $E_{-}'$ of $G'$ such that  $|E_{-}'| = |E_{-}|$ and $G' - E_{-}'$ is a split graph, and then we are in the previous case.
  Under these assumptions we show that $v$ is either adjacent to all vertices in $C\cup S$, or nonadjacent to any vertex in $I\cup S$.  Accordingly, either 
  $(C\cup \{v\})\uplus I\uplus S$ or $C\uplus(I\cup \{v\})\uplus S$ is a
   pseudo-split partition of $G-E_{-}$, and hence $G-E_{-}$ is also a pseudo-split graph.
  
   Let us start from the adjacency between $v$ and $S$.
   If $v$ is adjacent to only one vertex in $S$, or two or three consecutive vertices on the $C_{5}$, then $v$ is contained in a $2 K_{2}$.  On the other hand, if $v$ is adjacent to four vertices in $S$, or two or three non-consecutive vertices on the $C_{5}$, then $v$ is contained in a $C_{4}$.  See Figure~\ref{fig:c5} for illustration.
   Therefore, $v$ is adjacent to either all or none of the vertices in $S$.
   If $S\subseteq N_G(v)$, then $C\subseteq N_G(v)$ as well: $v$, a vertex $x\in C\setminus N_G(v)$, and two nonadjacent vertices in $S$ would induce a $C_{4}$.
   Now that $N_G(v)\cap S=\emptyset$, we are done if $N_G(v)\cap I=\emptyset$.  Suppose otherwise, and let $u$ be any vertex in $N_G(v)\cap I$.
  By assumption (2), $|S\cap N_{G}(u)| \le 2$.  But then an edge in $G[S]$ of which both ends nonadjacent to $u$ form an induced $2K_2$ with $uv$ in $G$.
  This concludes the proof.
\end{proof}

\begin{figure}[h]
  \centering\small
  \begin{subfigure}[b]{.15\linewidth}
    \centering
    \begin{tikzpicture}[scale=1.]
      \node["$v$" below, filled vertex] (v0) at (0, 0) {};
      \foreach \i in {1,..., 5} {
        \draw ({\i * (360 / 5) - 54}:1) -- ({\i * (360 / 5) + 18}:1);
      }
      \foreach \i in {1,..., 5} {
        \node[filled vertex] (v\i) at ({\i * (360 / 5) - 54}:1) {};
      }
      \draw[very thick, RubineRed] (v2) -- (v0) (v4) -- (v5);
    \end{tikzpicture}
    \caption{}
  \end{subfigure}  
  \,
  \begin{subfigure}[b]{.15\linewidth}
    \centering
    \begin{tikzpicture}[scale=1.]
      \node["$v$" below, filled vertex] (v0) at (0, 0) {};
      \foreach \i in {1,..., 5} {
        \draw ({\i * (360 / 5) - 54}:1) -- ({\i * (360 / 5) + 18}:1);
      }
      \foreach \i in {1,..., 5} {
        \node[filled vertex] (v\i) at ({\i * (360 / 5) - 54}:1) {};
      }
      \foreach \i in {4, 5} \draw (v\i) -- (v0);
      \draw[very thick, RubineRed] (v2) -- (v1) (v4) -- (v0);
    \end{tikzpicture}
    \caption{}
  \end{subfigure}  
  \,
  \begin{subfigure}[b]{.15\linewidth}
    \centering
    \begin{tikzpicture}[scale=1.]
      \node["$v$" below, filled vertex] (v0) at (0, 0) {};
      \foreach \i in {1,..., 5} {
        \draw ({\i * (360 / 5) - 54}:1) -- ({\i * (360 / 5) + 18}:1);
      }
      \foreach \i in {1,..., 5} {
        \node[filled vertex] (v\i) at ({\i * (360 / 5) - 54}:1) {};
      }
      \foreach \i in {1, 2, 3} \draw (v\i) -- (v0);
      \draw[very thick, RubineRed] (v2) -- (v0) (v4) -- (v5);
    \end{tikzpicture}
    \caption{}
  \end{subfigure}  
  \,
  \begin{subfigure}[b]{.15\linewidth}
    \centering
    \begin{tikzpicture}[scale=1.]
      \node["$v$" below, filled vertex] (v0) at (0, 0) {};
      \foreach \i in {1,..., 5} {
        \draw ({\i * (360 / 5) - 54}:1) -- ({\i * (360 / 5) + 18}:1);
      }
      \foreach \i in {1,..., 5} {
        \node[filled vertex] (v\i) at ({\i * (360 / 5) - 54}:1) {};
      }
      \foreach \i in {1, 3} \draw (v\i) -- (v0);
      \draw[very thick, RubineRed] (v0) -- (v1) -- (v2) -- (v3) -- (v0);
    \end{tikzpicture}
    \caption{}
  \end{subfigure}  
  \,
  \begin{subfigure}[b]{.15\linewidth}
    \centering
    \begin{tikzpicture}[scale=1.]
      \node["$v$" below, filled vertex] (v0) at (0, 0) {};
      \foreach \i in {1,..., 5} {
        \draw ({\i * (360 / 5) - 54}:1) -- ({\i * (360 / 5) + 18}:1);
      }
      \foreach \i in {1,..., 5} {
        \node[filled vertex] (v\i) at ({\i * (360 / 5) - 54}:1) {};
      }
      \foreach \i in {2, 4, 5} \draw (v\i) -- (v0);
      \draw[very thick, RubineRed] (v0) -- (v2) -- (v1) -- (v5) -- (v0);
    \end{tikzpicture}
    \caption{}
  \end{subfigure}  
  \,
  \begin{subfigure}[b]{.15\linewidth}
    \centering
    \begin{tikzpicture}[scale=1.]
      \node["$v$" below, filled vertex] (v0) at (0, 0) {};
      \foreach \i in {1,..., 5} {
        \draw ({\i * (360 / 5) - 54}:1) -- ({\i * (360 / 5) + 18}:1);
      }
      \foreach \i in {1,..., 5} {
        \node[filled vertex] (v\i) at ({\i * (360 / 5) - 54}:1) {};
      }
      \foreach \i in {1, 3, 4, 5} \draw (v\i) -- (v0);
      \draw[very thick, RubineRed] (v0) -- (v1) -- (v2) -- (v3) -- (v0);
    \end{tikzpicture}
    \caption{}
  \end{subfigure}  
  \caption{Illustration for the proof of Lemma~\ref{lem:safe-vertex-pseudo-split}.  There is a $2 K_{2}$ (a, b, c) or $C_{4}$ (d, e, f), shown as thick lines, containing $v$.}
  \label{fig:c5}
\end{figure}
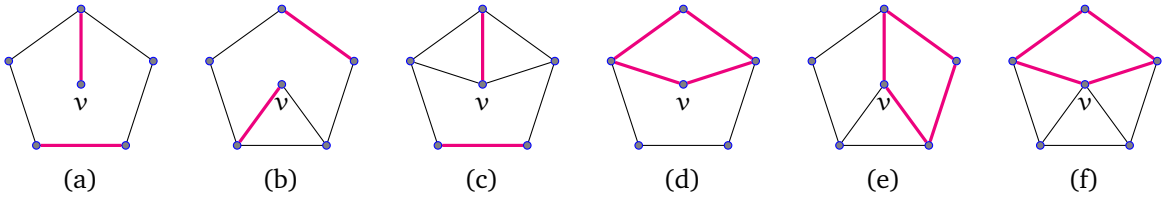

To adapt the algorithm in Figure~\ref{fig:alg-split} for the pseudo-split edge deletion problem, we only need to conduct very minor adjustments.  We use the same modulator $M$ as the previous section, i.e., vertices of a maximal packing of vertex-disjoint $2 K_{2}$'s, $C_{4}$'s, and $C_{5}$'s.  Recall that a pseudo-split graph contains at most one $C_{5}$.  Thus, if we have found $p$ vertex-disjoint $C_{5}$'s from $G$, then we need to break at least $p - 1$ of them.

\begin{lemma}
  If $(G, k)$ is a yes-instance, then $|M| \le 4 k + 5$.
\end{lemma}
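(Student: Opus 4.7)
The plan is to sharpen the naive ``one edge per forbidden subgraph'' counting by exploiting that an induced $C_{5}$ cannot be destroyed by deleting a single one of its edges. Let $a$, $b$, $c$ denote, respectively, the numbers of $2K_{2}$'s, $C_{4}$'s, and $C_{5}$'s in the packing, so that $|M| = 4a + 4b + 5c$. Because the packed subgraphs are vertex-disjoint, their edge sets are pairwise disjoint, and I will use this disjointness to add up per-subgraph edge-contributions to any solution.

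Let $E_{-}$ be any solution of $(G, k)$, so $|E_{-}| \le k$ and $G - E_{-}$ is pseudo-split. The first step is routine: each packed $2K_{2}$ and each packed $C_{4}$ must lose at least one of its own edges to $E_{-}$, for otherwise the induced subgraph persists intact in $G - E_{-}$ and is forbidden. By edge-disjointness this contributes $a + b$ distinct edges to $E_{-}$. The second step observes that pseudo-split graphs contain at most one induced $C_{5}$, while any packed $C_{5}$ with none of its five cycle edges in $E_{-}$ remains an induced $C_{5}$ in $G - E_{-}$ (edges can only vanish, and the packed $C_{5}$ had no chords in $G$). Hence, when $c \ge 1$, at least $c - 1$ of the packed $C_{5}$'s must have one of their edges in $E_{-}$.

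The main new ingredient, and the step I expect to be the crux, is to show that each \emph{destroyed} packed $C_{5}$ actually forces at least \emph{two} of its own edges into $E_{-}$. If only the single edge $v_{1}v_{2}$ of an induced $C_{5}$ on $v_{1}v_{2}v_{3}v_{4}v_{5}$ were removed, then the induced subgraph on these five vertices in $G - E_{-}$ would be exactly $P_{5}$, which contains the induced $2K_{2}$ on $\{v_{1},v_{2},v_{3},v_{5}\}$ (edges $v_{2}v_{3}$ and $v_{5}v_{1}$); since edges outside the five vertices do not affect this induced subgraph, a second $C_{5}$-edge must also lie in $E_{-}$. Combining this with the edge-disjointness across the packing yields $|E_{-}| \ge a + b + 2(c-1)$ when $c \ge 1$, and $|E_{-}| \ge a + b$ when $c = 0$.

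The final step is a short optimization using $|E_{-}| \le k$. When $c = 0$ we immediately get $|M| = 4(a+b) \le 4k$. When $c \ge 1$, the bound above rearranges to $a + b + 2c \le k + 2$, and then
\[
|M| \;=\; 4(a+b) + 5c \;=\; 4(a+b+2c) - 3c \;\le\; 4(k+2) - 3c \;\le\; 4k + 5,
\]
using $c \ge 1$ in the last step. In either case $|M| \le 4k + 5$. Without the ``two edges per destroyed $C_{5}$'' improvement, the naive count only gives $|M| \le 5(k+1) = 5k+5$, which is why the $P_{5}$-contains-$2K_{2}$ observation is the essential ingredient.
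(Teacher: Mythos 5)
Your proof is correct and follows essentially the same approach as the paper: count per-subgraph edge contributions, use that pseudo-split graphs contain at most one $C_5$, and observe that each destroyed $C_5$ must lose at least two edges (which the paper asserts without the explicit $P_5$-contains-$2K_2$ justification you give, though the surrounding text already notes one must remove two consecutive edges from a $C_5$ to reach a split graph). The arithmetic at the end matches the paper's.
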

\begin{proof}
  Let $E_{-}$ be a minimum solution to $G$.
  Suppose that the numbers of vertex-disjoint $2 K_{2}$'s, $C_{4}$'s, and $C_{5}$'s we have put into $M$ are $p$, $q$, and $r$, respectively.  In each $2 K_{2}$ or $C_{4}$, at least one edge needs to be in $E_{-}$.  At most one $C_{5}$ can be disjoint from $E_{-}$, and on each of other $C_{5}$, at least two edges are in $E_{-}$.  Thus
  $k\ge |E_{-}| \ge p + q + 2 (r - 1)$, and $M = 4 p + 4 q + 5 r = 4 p + 4 q + 5 (r - 1) + 5 \le 4 k + 5$.
\end{proof}

As a result, if $|M| > 4 k + 5$, then $(G, k)$ must be a no-instance (step~2 of the algorithm).
Again, we start from a split partition $C_{M}\uplus I_{M}$ of $G - M$, and we work on the annotated version of the problem.
In an annotated instance $(G, I_{0}, k)$, the set $I_{0}$ of marked vertices can only be put into the independent set $I$ in a valid partition.
We use the same rules as we have used for the split edge deletion problem.  We now verify that all of them remain safe for the pseudo-split edge deletion problem.
The first is simple.
\begin{reduction}\label{rule:clean-up(pseudo-split)}
  Let $(G, I_{0}, k)$ be an annotated instance.
  Add a clique of $\sqrt{2 k} + 1$ new vertices, and make each of them adjacent to all the vertices in $V(G)\setminus I_{0}$.  Return as an unannotated instance.
\end{reduction}
\begin{proof}[Safeness of Rule~\ref{rule:clean-up(pseudo-split)}.]
  Let $K$ denote the clique of new vertices, and let $(G', k)$ be the resulting instance.
  For any valid partition $C\uplus I\uplus S$ of $(G, I_{0}, k)$, the partition $(C\cup K)\uplus I\uplus S$ is a valid partition of $(G', k)$ because $C\subseteq V(G)\setminus I \subseteq N(x)$ and $S\subseteq V(G)\setminus I\subseteq N(x)$ for every $x\in K$.
  Let $E_-$ be a set of at most $k$ edges such that $G'-E_-$ is a pseudo-split graph and $C\uplus I\uplus S$ is a pseudo-split partition of $G' - E_{-}$.
  If any vertex in $I_{0}$ is in $C\cup S$, then we must have $K\subseteq I$.  
  But then $|E_-| > k$, which contradicts the validity of the partition.
\end{proof}

Rule~\ref{rule:simplicial} was safe for the split edge deletion problem because a vertex is either in $C$ or $I$.  For the pseudo-split edge deletion problem, we need to take care of the possibility that a simplicial vertex is in $S$.

\begin{lemma}\label{lem:no simplicial in S}
  Let $(G, I_{0}, k)$ be an instance of the annotated version of the pseudo-split edge deletion problem, and let $v$ be a simplicial vertex of $G$.
  If there exists a valid partition $C\uplus I \uplus S$ with $v\in S$, then there exists another valid partition $C'\uplus I' \uplus S'$ with $S' = \emptyset$. 
\end{lemma}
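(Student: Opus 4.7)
The plan is a direct ``swap'' argument: starting from a valid partition $C \uplus I \uplus S$ with $v \in S$, I will modify the witnessing edge set $E_-$ on two edges of the $C_5$ on $S$ to eliminate $S$ entirely, producing a valid split partition. The key leverage is that simpliciality of $v$ forces a particular diagonal of the $C_5$ to be present in $G$.

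Fix a set $E_-\subseteq E(G)$ with $|E_-|\le k$ such that $C\uplus I\uplus S$ is a pseudo-split partition of $G-E_-$. Since $v\in S$, the set $S$ is nonempty, so by the definition of a pseudo-split partition $G[S]$ in $G-E_-$ is a $C_5$; label it $v_1 v_2 v_3 v_4 v_5$ with $v=v_1$. Then $v_1v_2, v_1 v_5 \in E(G-E_-)\subseteq E(G)$, while $v_2 v_5$ is a non-edge of $G-E_-$. Here is the key step: because $v$ is simplicial in $G$, the set $N_G[v_1]$ is a clique of $G$; since $v_2,v_5\in N_G[v_1]$, this gives $v_2 v_5\in E(G)$, and consequently $v_2 v_5\in E_-$.

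Now define the swapped set
\[
  E_-' \;=\; (E_-\setminus\{v_2 v_5\})\cup\{v_3 v_4\},
\]
which is contained in $E(G)$ (since $v_3 v_4\in E(G-E_-)\subseteq E(G)$) and has $|E_-'|=|E_-|\le k$. I will show that the partition $C'=C\cup\{v_1,v_2,v_5\}$, $I'=I\cup\{v_3,v_4\}$, $S'=\emptyset$ is a valid partition witnessed by $E_-'$. For $C'$: the triangle on $\{v_1,v_2,v_5\}$ is present in $G-E_-'$ (the $C_5$-edges $v_1 v_2, v_1 v_5$ survive, and $v_2 v_5$ is no longer deleted), and every vertex of $C$ is adjacent to all of $S$ in $G-E_-$, so adjacency between $C$ and $\{v_1,v_2,v_5\}$ is preserved. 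For $I'$: the edge $v_3 v_4$ is deleted in $E_-'$, and no edge between $S$ and $I$ exists in $G-E_-$, so $v_3$ and $v_4$ have no $I$-neighbors in $G-E_-'$; thus $I'$ is independent.

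Finally, for the annotated constraint, I observe that any vertex of $I_0$ must lie in $I$ in a valid annotated partition, hence $I_0\cap S=\emptyset$, giving $I_0\subseteq I\subseteq I'$. I do not anticipate any real obstacle: the argument is a two-edge swap along the lines of Proposition~\ref{lem:induced-c5}, with the only subtlety being the use of simpliciality of $v$ to force $v_2 v_5\in E_-$, which in turn provides the ``credit'' needed to delete $v_3 v_4$ instead.
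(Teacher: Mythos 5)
Your proof is correct and takes essentially the same approach as the paper: the paper's argument observes that simpliciality of $v$ forces a chord $v_2v_5$ of the $C_5$ to be present in $G$, so $G[S]$ is not a $C_5$, and then delegates to Proposition~\ref{lem:induced-c5}; you instead unpack that reference and perform the two-edge swap from that proposition's proof directly. Your explicit version is arguably tidier for this use: Proposition~\ref{lem:induced-c5} is stated under the hypothesis that $E_-$ is a minimum solution, whereas the lemma only supplies a witnessing $E_-$ of size at most $k$, and it is precisely the swap construction inside the proposition's proof (not the proposition's statement) that transfers. Your additional note that $I_0\subseteq I\subseteq I'$ handles the annotated constraint, which the paper's terse reference leaves implicit.
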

\begin{proof}
  Since $v\in S$, the set $S$ is not empty.  Since $v$ is simplicial, $G[S]$ is not a $C_{5}$.  The statement follows from Proposition~\ref{lem:induced-c5}.
\end{proof}

As a result, for any simplicial vertex $v$, it suffices to look for a valid partition $C\uplus I \uplus S$ with $v\in C\cup I$.  If $v\in C$, then $S = \emptyset$; since $C\cup S$ is a subset of $N[v]$, it is a clique, and then $(C\cup S)\uplus I$ is a partition with a smaller solution.  Thus, it remains safe.

\begin{reduction}\label{rule:simplicial(pseudo-split)}
  Let $v$ be a simplicial vertex in $V(G)\setminus I_{0}$. 
  If $|E(G - (N[v]\setminus I_{0}) )| \le k$, then return a trivial yes-instance.  
  Otherwise, add $v$ to $I_{0}$.
\end{reduction}
\begin{proof}[Safeness of Rule~\ref{rule:simplicial(pseudo-split)}.]
  In the first case, $(N[v]\setminus I_{0}) \uplus (V(G)\setminus N[v] \cup I_{0})$ is a valid partition.
  Otherwise, we show that there is a minimum solution $E_{-}$ to $(G,I_0,k)$ such that $C\uplus I\uplus S$ is a pseudo-split partition of $G - E_{-}$ and $v\in I$. 
  Suppose for contrary that $v\notin I$, then by Lemma~\ref{lem:no simplicial in S}, we can suppose that $v$ is in $C$. 
  Since $C$ is a clique, then $C\subseteq N[v]\setminus I_{0}$. 
  Since every vertex in $S$ is adjacent to all vertices in $C$, then $S\subseteq N[v]\setminus I_0$.  
  Thus, $E(G - (N[v]\setminus I_{0}))\subseteq E_-$, but then $|E_-| > k$, contradicting the validity of the partition.
\end{proof}

The safeness of Rules~\ref{rule:decided-vertices}, \ref{rule:i-edges}, \ref{rule:merge-I0}, and \ref{rule:alternative} can be proved with almost the same arguments as in Section~\ref{sec:split}.

\begin{lemma}\label{lem:pseudo-split partition}
  Let $C\uplus I\uplus S$ be a valid partition of $(G,k)$, if one exists, 
    \begin{enumerate}[i)]
    \item $|I_{M}\cap C| \le 1$;
    \item $|I_M\cap S|\leq 2$; and 
    \item $|C_{M}\cap I| \le \sqrt{2 k}$.
    \end{enumerate}
  \end{lemma}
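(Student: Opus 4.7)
The plan is to follow the template of Lemma~\ref{lem:partition}, treating parts (i) and (iii) exactly as in the split case and introducing a short new argument for part (ii) that exploits the $C_5$-structure of $S$. The common setup would be to fix any witnessing deletion set $E_-$ with $|E_-|\le k$ for which $C\uplus I\uplus S$ is a pseudo-split partition of $G-E_-$. Since edge deletion cannot create cliques nor destroy independence, $C$ and $C_M$ are cliques of $G$ itself and $I_M$ remains independent in $G$; moreover, by the definition of a pseudo-split partition, when $S\neq\emptyset$ we have $|S|=5$ and $G[S]$ contains the $C_5$ on $S$ as a spanning subgraph, possibly augmented by chords lying in $E_-$.

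For (i), I would observe that $I_M\cap C$ sits inside both the independent set $I_M$ and the clique $C$, and hence contains at most one element; this is verbatim Lemma~\ref{lem:partition}(i). For (iii), the same argument carries over unchanged: $C_M\cap I$ induces a clique of $G$ whose every edge must be destroyed by $E_-$, and so $\binom{|C_M\cap I|}{2}\le |E(I,I)|\le k$, which yields $|C_M\cap I|\le\sqrt{2k}$.

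Part (ii) is the only genuinely new ingredient. I would first dispose of the trivial case $S=\emptyset$. Otherwise $G[S]$ contains a Hamilton $C_5$, so its independence number is at most~$2$; since $I_M\cap S$ is an independent set in $G$, this forces $|I_M\cap S|\le 2$. This is the single piece of reasoning not already present in Lemma~\ref{lem:partition}, and since it reduces to an elementary structural fact about $C_5$, I do not foresee any real obstacle in carrying out the proof.
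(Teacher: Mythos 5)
Your proof is correct and follows essentially the same argument as the paper: parts (i) and (iii) are inherited verbatim from Lemma~\ref{lem:partition}, and part (ii) uses exactly the observation the paper makes, namely that a $C_5$ has independence number two, so the independent set $I_M$ can contribute at most two vertices to $S$. The only cosmetic difference is that you phrase (ii) in terms of $G[S]$ containing a spanning $C_5$, whereas the paper works directly in $G-E_-$ where $G[S]$ \emph{is} a $C_5$; both formulations are sound.
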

  \begin{proof}
  The first assertion follows from that $I_{M}$ is an independent set and $C$ is a clique. 
  The second assertion holds because a $C_{5}$ does not contain an independent set of order three.
  The last assertion holds because 
  \[
    k \ge |E(I, I)| \ge |E(C_{M}\cap I, C_{M}\cap I)| = {|C_{M}\cap I|\choose 2}. \qedhere
  \]
  \end{proof}

  We say that a vertex is a \emph{c-vertex}, respectively, an \emph{i-vertex}, if it is in $C$, respectively, in $I$, for any valid partition $C\uplus I\uplus S$ of $(G, I_{0}, k)$ (we do not consider vertices in $S$ since there are only five such vertices).
  \begin{lemma}
    If a vertex $v$ has more than $k+3$ neighbors in $I_M$, then $v$ is a c-vertex. 
      \label{lemma:c-vertex(pseudo-split)}
  \end{lemma}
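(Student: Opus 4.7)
The plan is to argue by contradiction, mimicking the analogous argument for split edge deletion (Rule~\ref{rule:decided-vertices}(ii)) but with the extra adjustment required by the possible presence of the set $S$. Suppose for contradiction that $v$ is not a c-vertex. Then there is a valid partition $C\uplus I\uplus S$ of $(G, I_{0}, k)$, witnessed by some edge set $E_{-}$ with $|E_{-}|\le k$, such that $v\in I\cup S$.

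First I would invoke Lemma~\ref{lem:pseudo-split partition}, which gives $|I_{M}\cap C|\le 1$ and $|I_{M}\cap S|\le 2$, hence $|I_{M}\cap (C\cup S)|\le 3$. Because $v$ has more than $k+3$ neighbors in $I_{M}$, at least $(k+4)-3=k+1$ of its neighbors in $I_{M}$ lie in $I$. Call this set $N$.

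The next step is to observe that in a pseudo-split partition of $G-E_{-}$, the set $I$ is independent and no vertex of $S$ is adjacent to any vertex of $I$. Thus in either sub-case $v\in I$ or $v\in S$, every edge between $v$ and a vertex in $N$ must lie in $E_{-}$. This forces $|E_{-}|\ge |N|\ge k+1$, contradicting $|E_{-}|\le k$. Therefore $v\in C$ in every valid partition, i.e., $v$ is a c-vertex.

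The argument is essentially routine; no real obstacle arises beyond carefully accounting for the three extra vertices of $I_{M}$ that may sit in $C\cup S$ (versus the single extra vertex allowed in the split case, which is why the threshold is $k+3$ rather than $k+1$) and verifying that the non-adjacency between $S$ and $I$ in the pseudo-split partition lets us treat the cases $v\in I$ and $v\in S$ uniformly.
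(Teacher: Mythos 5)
Your proof is correct and follows essentially the same argument as the paper's: suppose for contradiction that $v\in I\cup S$, invoke Lemma~\ref{lem:pseudo-split partition}(i) and (ii) to find at least $k+1$ neighbors of $v$ inside $I$, and observe that all edges from $v$ to these neighbors must lie in $E_-$, contradicting $|E_-|\le k$. The only difference is that you spell out more explicitly that both sub-cases $v\in I$ and $v\in S$ force these edges into $E_-$, which the paper leaves implicit.
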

  \begin{proof}
    Let $E_-$ be a set of at most $k$ edges such that $G-E_-$ is a pseudo-split graph and $C\uplus I\uplus S$ is a pseudo-split partition of $G-E_-$.
      Suppose for contrary that $v$ is not a c-vertex, then it is in $I$ or $S$.
      By Lemma~\ref{lem:pseudo-split partition} (i) and (ii), there are at least $k+1$ neighbors of $v$ are in $I$, then $|E_-|>k$, which is a contradiction.     
  \end{proof}
\begin{lemma}
    The following are i-vertices:
    \begin{itemize}
      \item every vertex nonadjacent to a c-vertex; and
      \item every vertex with more than $\sqrt{2k}+1$ non-neighbors in $C_{M}$.
      \end{itemize}
    \label{lemma:i-vertex(pseudo-split)}
\end{lemma}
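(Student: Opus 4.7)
The plan is to establish each of the two bullets separately, by fixing an arbitrary valid partition $C\uplus I\uplus S$ of $(G, I_0, k)$ and showing that the vertex in question must lie in $I$.

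For the first bullet, let $u$ be a c-vertex that is nonadjacent to $v$. By definition, $u\in C$ in every valid partition. Since $C$ is a clique in $G-E_-\subseteq G$, we cannot have $v\in C$, for otherwise $v\sim u$. And since every vertex of $S$ is adjacent to every vertex of $C$ in $G-E_-$, and hence in $G$, the assumption $v\in S$ would likewise force $v\sim u$, a contradiction. Thus $v\in I$, and $v$ is an i-vertex.

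For the second bullet, suppose $v$ has more than $\sqrt{2k}+1$ non-neighbors in $C_M$, and assume for contradiction that in some valid partition $C\uplus I\uplus S$ the vertex $v$ lies in $C\cup S$. I would first rule out $v\in C_M$: if $v\in C_M$, the clique property of $C_M$ forces every other vertex of $C_M$ to be a neighbor of $v$, leaving at most one non-neighbor in $C_M$, contradicting the hypothesis. So $v\notin C_M$. If $v\in C$, then the clique structure of $C$ together with $S\subseteq N(C)$ in $G-E_-$ (hence in $G$) puts every non-neighbor of $v$ into $I$; consequently $C_M\setminus N(v)\subseteq C_M\cap I$, which by Lemma~\ref{lem:pseudo-split partition}(iii) has cardinality at most $\sqrt{2k}$, a contradiction. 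If instead $v\in S$, then $C\subseteq N(v)$ restricts the non-neighbors to $(C_M\cap I)\cup((C_M\cap S)\setminus N(v))$; I would then use that in $G-E_-$ the vertex $v$ has only two non-neighbors in the $C_5$ on $S$, and that $N_G(v)\supseteq N_{G-E_-}(v)$, to bound the second term.

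The main obstacle will be the precise bookkeeping in the $v\in S$ subcase: matching the threshold $\sqrt{2k}+1$ requires controlling $(C_M\cap S)\setminus N(v)$ tightly. Here I expect to invoke Proposition~\ref{lem:induced-c5}, which guarantees that $G[S]$ is literally a $C_5$ whenever $S$ is nonempty in a minimum solution, together with the clique property of $C_M\cap S$, which restricts this intersection to an edge of the $C_5$; combining these with the structural description of the non-neighbors of $v$ on a $C_5$ should close the count and complete the contradiction.
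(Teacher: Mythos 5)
Your argument for the first bullet is correct and matches the paper's in spirit. Your argument for the $v\in C$ subcase of the second bullet is also correct: all non-neighbors of $v$ in $C_M$ are forced into $I$, giving $|C_M\cap I| > \sqrt{2k}$, contradicting Lemma~\ref{lem:pseudo-split partition}(iii).

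However, your plan for the $v\in S$ subcase does not close. You are trying to bound $|C_M\setminus N(v)|$ from above by $|C_M\cap I| + |(C_M\cap S)\setminus N(v)|$, and the best you can hope for is $|C_M\cap I|\le\sqrt{2k}$ (Lemma~\ref{lem:pseudo-split partition}(iii)) plus $|(C_M\cap S)\setminus N(v)|\le 2$ (the two non-neighbors of $v$ on the $C_5$ are adjacent to each other, hence can both sit in the clique $C_M$; so $2$ is tight and cannot be brought down to $1$). That gives $|C_M\setminus N(v)|\le\sqrt{2k}+2$, which is perfectly consistent with the hypothesis $|C_M\setminus N(v)|>\sqrt{2k}+1$ and yields no contradiction. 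The vertex count simply is not strong enough here.

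The paper's proof switches to counting edges of $E_-$ rather than vertices for the $v\in S$ case. Let $T=(V(G)\setminus N[v])\cap C_M$, a clique in $G$ with $|T|>\sqrt{2k}+1$. Every vertex of $T$ lies in $I\cup S$ (none can be in $C$ since $C\subseteq N(v)$), and at most two lie in $S$. Since in $G-E_-$ there are no edges inside $I$ and no edges between $I$ and $S$, every edge of $G[T]$ other than the at most one edge inside $T\cap S$ must be in $E_-$. Hence $|E_-|\ge\binom{|T|}{2}-1>k$, a contradiction. This quadratic lower bound is exactly what the vertex-count misses; if you want to complete your proof, replace the attempted bound on $|(C_M\cap S)\setminus N(v)|$ with this edge-count in the $v\in S$ subcase.
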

\begin{proof}
    The first assertion follows from the definition of the pseudo-split graphs.
    Now we proof the second assertion. 
    Let $v$ be a vertex with more than $\sqrt{2k} + 1$ non-neighbors in $C_M$, and let $E_-$ be a set of at most $k$ edges such that $G-E_-$ is a pseudo-split graph and $C\uplus I\uplus S$ is a pseudo-split partition of $G - E_{-}$.
    Suppose for contrary that $v\notin I$, then $v$ is in $C$ or $S$.
    If $v\in C$, then all non-neighbors of $v$ in $C_M$ are in $I$, then $|C_M\cap I|>\sqrt{2k}$, which contradicts Lemma~\ref{lem:pseudo-split partition} (iii).
    If $v\in S$, then there are at most two non-neighbors of $v$ in $C_M$ that are in $S$, then all but one edges in $G[(V(G)\setminus N[v])\cap C_M]$ are in $E_-$, then $|E_-|>k$, a contradiction.  
\end{proof}

We can indeed delete all the c-vertices, as long as we keep their non-neighbors marked. 
Note that after obtaining the initial split partition $C_{M}\uplus I_{M}$ of $G - M$, we do not need to maintain the invariant that $M$ is a modulator, though we do maintain that $C_{M}$ is a clique and that $I_{M}$ is an independent set throughout.
During our algorithm, we maintain $M$, $C_{M}$, $I_{M}$, and $I_{0}$ as a partition of $V(G)$.
 Therefore, whenever we mark a vertex, we move it to $I_{0}$.
\begin{reduction}\label{rule:decided-vertices(pseudo-split)}
  Let $(G, I_{0}, k)$ be an annotated instance.
  \begin{enumerate}[i)]
  \item Mark every vertex that has more than $\sqrt{2 k}+1$ non-neighbors in $C_{M}$.
  \item If a vertex $v$ has more than $k+3$ neighbors in $I_{M}\cup I_{0}$, then mark every vertex in $V(G)\setminus N[v]$ and delete $v$.
  \end{enumerate}
\end{reduction}
\begin{proof}[Safeness of Rule~\ref{rule:decided-vertices(pseudo-split)}.]
  Let $I_{0}'$ denote the set of marked vertices after the reduction.  It is trivial that if the resulting instance of i) is a yes-instance, then the original is also a yes-instance.  For ii), any valid partition $C'\uplus I'\uplus S'$ of $(G - v, I_{0}', k)$ can be extended to a valid partition $(C'\cup \{v\})\uplus I'\uplus S'$ of $(G, I_{0}, k)$ because $C'\subseteq V(G)\setminus I_{0}'\subseteq N[v]$ and $S'\subseteq V(G)\setminus I_{0}'\subseteq N[v]$.
  
  For the other direction, let $E_-$ be a set of at most $k$ edges such that $G-E_-$ is a pseudo-split graph and $C\uplus I\uplus S$ is a pseudo-split partition of $G-E_-$ where $I_0\subseteq I$. 
  i) Since $|E_-|\leq k$, then $|C_{M}\setminus N(v)| \le \sqrt{2 k}$ for every $v\in C$ and $|C_{M}\setminus N(v)| \le \sqrt{2 k}+1$ for every $v\in S$.  
  Therefore, if $|C_{M}\setminus N(v)| > \sqrt{2 k}+1$ for some vertex $v$, then $v$ has to be in $I$. 
  Thus, $C\uplus I\uplus S$ is also a valid partition of the new instance $(G, I_{0}', k)$.
  ii) By Lemma~\ref{lem:pseudo-split partition}(i) and (ii), $|I_{M}\cap I|\le 1$ and $|I_M\cap S|\le 2$. 
  As $I_{0}\subseteq I$ and $|N(v)\cap (I_{M}\cup I_{0})| > k + 3$, there are at least $k + 1$ edges between $v$ and $I$. 
  Since $|E_-| \le k$, we must have $v\in C$.  Moreover, since $C$ is a clique, $C\subseteq N[v]$, and every vertex nonadjacent to $v$ has to be in $I$.  This justifies the marking of $V(G)\setminus N[v]$.  Clearly, $(C\setminus \{v\})\uplus I\uplus S$ is a valid partition of $(G - v, I_{0}', k)$.
\end{proof}

\begin{reduction}\label{rule:i-edges(pseudo-split)}
  Let $(G, I_{0}, k)$ be an annotated instance.
  Remove all the edges in $E(I_{0}, I_{0})$, and decrease $k$ accordingly.    
\end{reduction}
\begin{proof}[Safeness of Rule~\ref{rule:i-edges(pseudo-split)}.]
By the definition of the annotated instance, any solution $E_{-}$ of $(G, I_{0}, k)$ contains all the edges in $E(I_{0}, I_{0})$.  Moreover, $E_{-}\setminus E(I_{0}, I_{0})$ is a solution to $G - E(I_{0}, I_{0})$, and its size is at most $k - |E(I_{0}, I_{0})|$.  On the other hand, if $(G - E(I_{0}, I_{0}), k - |E(I_{0}, I_{0})|)$ is a yes-instance, then any solution of this instance, together with $E(I_{0}, I_{0})$, makes a solution of $(G, I_{0}, k)$ of size at most $k$.
\end{proof}

\begin{reduction}\label{rule:merge-I0(pseudo-split)}
  Let $(G, I_{0}, k)$ be an annotated instance where $I_{0}$ is an independent set.
  Introduce $p$ new vertices $v_{1}$, $v_{2}$, $\ldots$, $v_{p}$, where $p = \max_{v\in V(G)\setminus I_0}|N(v)\cap I_{0}|$.  For each vertex $x\in N(I_{0})$, make $x$ adjacent to $v_{1}$, $\ldots$, $v_{|N(x)\cap I_{0}|}$.  Remove all vertices in $I_{0}$, and mark the set of new vertices.
\end{reduction}

The following statement ensures the safeness of Rule~\ref{rule:merge-I0(pseudo-split)}.
Note that if Rule~\ref{rule:decided-vertices(pseudo-split)} is not applicable, then $p \le k+2$.  
\begin{lemma}\label{lem:i-vertices(pseudo-split)} 
  Let $(G, I_{0}, k)$ and $(G', I_{0}', k)$ be two annotated instances where $G - I_{0} = G' - I_{0}'$ and both $I_{0}$ and $I_{0}'$ are independent sets.  If $|N_{G}(x)\cap I_{0}| = |N_{G'}(x)\cap I_{0}'|$ for every $x\in V(G)\setminus I_{0}$, then $(G, I_{0}, k)$ is a yes-instance if and only if $(G', I_{0}', k)$ is a yes-instance.
\end{lemma}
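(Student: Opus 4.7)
The plan is to mimic the proof of Lemma~\ref{lem:i-vertices} for the split case. I will show the stronger statement that $C \uplus I \uplus S$ is a valid partition of $(G, I_0, k)$ if and only if $C \uplus I'' \uplus S$ is a valid partition of $(G', I_0', k)$, where $I'' = (I \setminus I_0) \cup I_0'$. By symmetry it suffices to argue one direction.

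First, I would verify the structural conditions for a pseudo-split partition. Since any valid partition of the annotated instance $(G, I_0, k)$ must have $I_0 \subseteq I$, it follows that $(C \cup S) \cap I_0 = \emptyset$, hence $C \cup S \subseteq V(G) \setminus I_0 = V(G') \setminus I_0'$. The hypothesis $G - I_0 = G' - I_0'$ then gives $G[C \cup S] = G'[C \cup S]$, so $C$ remains a clique in $G'$, every vertex in $S$ is still adjacent in $G'$ to all of $C$, and the $C_5$ structure on $S$ (if $S \ne \emptyset$) is intact. The requirement $I_0' \subseteq I''$ holds by construction, and $(C \cup S) \cap I_0' = \emptyset$ as observed.

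Next, I would check that the number of edges to be deleted is the same in both instances. The cost of a pseudo-split partition $(C, I, S)$ is
\[
  |E_G(I, I)| + |E_G(I, S)| + \bigl(|E_G(S, S)| - c(S)\bigr),
\]
where $c(S) \in \{0,5\}$ counts the edges of the designated $C_5$ on $S$. The third summand is unchanged when passing to $G'$ since $G[S] = G'[S]$. For the first two, I split $I = (I\setminus I_0) \cup I_0$ and $I'' = (I\setminus I_0) \cup I_0'$. Using that $G[I_0]$ and $G'[I_0']$ have no edges and that $G$ and $G'$ agree on $V(G)\setminus I_0$, the contributions from edges within $I \setminus I_0$ and from $(I\setminus I_0)$ to $S$ coincide. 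The remaining terms $|E_G(I\setminus I_0, I_0)| + |E_G(S, I_0)|$ equal
\[
  \sum_{x \in (I\setminus I_0) \cup S} |N_G(x)\cap I_0| \;=\; \sum_{x \in (I\setminus I_0) \cup S} |N_{G'}(x)\cap I_0'|,
\]
which in turn equals $|E_{G'}(I\setminus I_0, I_0')| + |E_{G'}(S, I_0')|$, using the neighbor-count hypothesis on vertices in $V(G)\setminus I_0$.

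The main obstacle is essentially bookkeeping: one must verify that the exchange of $I_0$ for $I_0'$ does not disturb the $C_5$ on $S$ (immediate from $S \cap I_0 = S \cap I_0' = \emptyset$) and that the edge counts across the two graphs split cleanly along the shared piece $V(G)\setminus I_0$ versus the swapped pieces $I_0$ and $I_0'$. Once these two pieces are handled separately and combined via the hypothesis, both directions of the equivalence follow, completing the proof.
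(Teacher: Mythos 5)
Your proof is correct and follows essentially the same approach as the paper: you establish the same bijection between valid partitions (swapping $I_0$ for $I_0'$ within the independent side) and the same edge-count identity $\sum_{x\in (I\cup S)\setminus I_0}|N_G(x)\cap I_0|=\sum_{x\in (I\cup S)\setminus I_0'}|N_{G'}(x)\cap I_0'|$, with the paper treating the structural preservation of $C$ and $S$ implicitly where you spell it out.
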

\begin{proof}
  We show that $C\uplus I\uplus S$ is a valid partition of $(G, I_{0}, k)$ if and only if $C\uplus ((I\setminus I_{0})\cup I_{0}')\uplus S$ is a valid partition of $(G', I_{0}', k)$.
  Note that
  \[
    |E((I\cup S)\setminus I_{0}, I_{0})| = \sum_{x\in (I\cup S)\setminus I_{0}} |N_{G}(x)\cap I_{0}| =
    \sum_{x\in (I\cup S)\setminus I_{0}'} |N_{G'}(x)\cap I_{0}'|
    = |E((I\cup S)\setminus I_{0}', I_{0}')|.
  \]
  Since $G - I_{0} = G' - I_{0}'$, and there is no edge in $G[I_{0}]$ or $G'[I'_{0}]$, we conclude the proof.
\end{proof}

\begin{reduction}\label{rule:alternative(pseudo-split)}
  Let $(G, I_{0}, k)$ be an annotated instance where $I_{0}$ is an independent set, and let $v$ be a vertex in $C_{M}$.  
  If $v$ is not contained in any $2K_2$ or any $I_{0}$-centered $P_{3}$, and every $C_4$ and $C_5$ that contains $v$ intersects $I_{0}$, then remove $v$ from $G$.
\end{reduction}
\begin{proof}[Safeness of Rule~\ref{rule:alternative(pseudo-split)}.]
  We show that $(G, I_{0}, k)$ is a yes-instance if and only if $(G - v, I_{0}, k)$ is a yes-instance, by establishing a sequence of equivalent instances.
  For each edge $x y$ with $x\in I_{0}$ and $y\in V(G)\setminus I_{0}$, introduce a new vertex $v_{x y}$ and make it adjacent to $y$.  
  Remove all vertices in $I_{0}$, and let $I_{0}'$ denote the set of new vertices.  
  Let $(G', I_{0}', k)$ denote the resulting instance.  
  The equivalence between $(G', I_{0}', k)$ and $(G, I_{0}, k)$ follows from Lemma~\ref{lem:i-vertices(pseudo-split)}.  
  Then let $(G'', k)$ denote the graph obtained by applying Rule~\ref{rule:clean-up(pseudo-split)} to $(G', I_{0}', k)$, with $K$ being the added clique.

  We argue that $v$ is not contained in any $2K_2$, $C_4$ or $C_5$ of $G''$.
  Suppose for contradiction that there is a set $F\subseteq V(G'')$ that contains $v$ and induces a $2K_2$, $C_4$, or $C_5$ in $G''$.
  Since neither the transformation from $G$ to $G'$ nor the transformation from $G'$ to $G''$ makes any change to $V(G)\setminus I_{0}$, this set induces the same subgraph in $G$ and $G''$.  
  Thus, $F\not\subseteq V(G)\setminus I_{0}$. 
  Moreover, since every vertex in $K$ is universal in $G'' - I_{0}'$, it follows that $F\cap I_{0}'$ is not empty.  
  Note that every vertex in $I_{0}'$ has degree one in $G''$, we can conclude that $G''[F]$ must be $2 K_{2}$ and $F\cap K=\emptyset$.  
  But then $v$ is contained in either a $2 K_{2}$ or an $I_{0}$-centered $P_{3}$ in $G$, a contradiction.

  It then follows from Lemma~\ref{lem:safe-vertex-pseudo-split} that $(G'', k)$ is equivalent to $(G''- v, k)$.  
  To see the equivalence between $(G''- v, k)$ and $(G-v, I_{0}, k)$, we apply the reversed operations from $G$ to $G''$.  
  We first use Rule~\ref{rule:decided-vertices(pseudo-split)}, applied to $(G''- v, \emptyset, k)$, to mark all vertices in $I_{0}'$, then use Lemma~\ref{lem:i-vertices(pseudo-split)} to replace $I_{0}'$ by $I_{0}$, and finally remove vertices in $K$.  
  The resulting graph is precisely $G - v$.  
  We can thus conclude the proof.
\end{proof}

We call an annotated instance \emph{reduced} if none of Rules~\ref{rule:simplicial(pseudo-split)}--\ref{rule:alternative(pseudo-split)} is applicable to this instance.
\begin{lemma} \label{lemma:bound(pseudo-split)}
  If a reduced instance $(G, I_{0}, k)$ is a yes-instance, then $|C_{M}| \le 3k\sqrt{2 k}+2k+2$ and $|I_{M}|\le k+3$.
\end{lemma}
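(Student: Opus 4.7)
The plan is to mirror the proof of Lemma~\ref{lemma:bound}, carefully tracking the additional $S$-component of any pseudo-split partition witnessed by a solution $E_-$ of size at most $k$. Fix such a partition $C \uplus I \uplus S$ of $G - E_-$; by Lemma~\ref{lem:pseudo-split partition} we have $|I_M \cap C| \le 1$, $|I_M \cap S| \le 2$, and $|C_M \cap I| \le \sqrt{2k}$. Since Rule~\ref{rule:i-edges(pseudo-split)} has been applied, $I_0$ is independent and $I_0 \subseteq I$, so $S \cap I_0 = \emptyset$.

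I will handle $|I_M|$ first. Since Rule~\ref{rule:simplicial(pseudo-split)} does not apply, no vertex of $I_M$ is simplicial. Pick any $v \in I_M \cap I$ and two non-adjacent neighbors $u_1, u_2 \in N_G(v)$. A short case analysis on where $u_1, u_2$ lie in $C \uplus I \uplus S$ rules out ``both in $C$,'' ``one in $C$ and one in $S$,'' and ``both in $S$ at adjacent positions of the $C_5$,'' since each forces $u_1 u_2 \in E(G - E_-) \subseteq E(G)$, contradicting non-adjacency. In the remaining cases, at least one of the edges $v u_1, v u_2$ crosses either into $I$ or from $I$ into $S$, both of which are forbidden in $G - E_-$, so the crossing edge lies in $E_-$. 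Since $I_M$ is independent in $G$, distinct $v$'s are charged to distinct $E_-$-edges, giving $|I_M \cap I| \le k$ and therefore $|I_M| \le k + 3$.

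For $|C_M|$, since Rule~\ref{rule:alternative(pseudo-split)} does not apply, every $v \in C_M$ lies in a $2K_2$, an $I_0$-centered $P_3$, a $C_4$ in $G - I_0$, or a $C_5$ in $G - I_0$. Every such structure contributes an edge of $E_-$, with the single exception of the (at most one) $C_5$ of $G - I_0$ that coincides with $S$ in the partition. I will split $C_M$ into the vertices whose witness structure contains an edge of $E_-$ and the vertices in $C_M \cap S$ whose only witness is this unbroken $C_5$. For the first group I replicate the per-edge four-case counting of Lemma~\ref{lemma:bound}, now using the relaxed non-neighbor bound $|C_M \setminus N(x)| \le \sqrt{2k}+1$ from Rule~\ref{rule:decided-vertices(pseudo-split)}(i), and with the $C_5$-case in $G - I_0$ contributing at most the same $2(\sqrt{2k}+1)$ vertices per edge as the $C_4$-case (the two non-$\{x,y\}$ $C_5$-positions that are non-adjacent to $x$ or to $y$). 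Cases~2 and~3 remain mutually exclusive by $|\{x,y\} \cap I_0|$, so summing over at most $k$ edges yields the $3k\sqrt{2k} + O(k)$ contribution. The second group is bounded by a small constant, obtained by observing that $C_M$ is a clique in $G$ while $G[S]$ is a supergraph of $C_5$ in $G$, which limits how many clique-vertices can reside in $S$ without paying for extra edges in $E_- \cap E(G[S])$.

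The main obstacle is the unbroken $C_5 = S$: unlike in the split setting, it consumes no budget from $E_-$, so the vertices it witnesses in $C_M$ must be charged separately and absorbed into the additive constant. The delicate part is to confirm that the extra slack of $+1$ in the pseudo-split non-neighbor bound, accumulated over the at most $k$ charged edges together with the contribution from $C_M \cap S$, fits within the $2k + 2$ additive term of the target bound.
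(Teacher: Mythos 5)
Your proof follows essentially the same route as the paper's: decompose $C_M$ by witnessing structures, treat the unbroken $C_5$ (which coincides with $S$) separately, charge the remaining $C_M$-vertices to edges of $E_-$ via the relaxed $\sqrt{2k}+1$ non-neighbor bound from Rule~\ref{rule:decided-vertices(pseudo-split)}(i) with the second and third categories mutually exclusive, and bound $|I_M \cap I|$ by observing that non-simpliciality forces each such vertex to be incident to an edge of $E_-$. The only differences are cosmetic: the paper gets $|C_M \cap S|\le 2$ cleanly from Proposition~\ref{lem:induced-c5} (where $G[S]$ may be assumed to induce a $C_5$, and a clique meets a $C_5$ in at most two vertices) whereas your ``supergraph of $C_5$'' reasoning is a bit looser (though $|S|=5$ trivially suffices for a constant), and the paper's $|I_M|$ argument is a one-liner ($N_G(v)\subseteq C$ would make $v$ simplicial, since $C$ is also a clique of $G$) where you do a case analysis on a non-adjacent neighbor pair.
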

\begin{proof}
  Let $E_{-}$ be any solution to $(G, I_{0}, k)$ with at most $k$ edges and $C\uplus I\uplus S$ a pseudo-split partition pf $G-E_-$.
  Since Rule~\ref{rule:alternative(pseudo-split)} is not applicable, every vertex in $C_{M}$ is contained in some $2K_2$ or $I_{0}$-centered $P_{3}$, or some $C_4$ or $C_5$ in $G - I_{0}$.  
  If $\mathrm{opt}(G) =\mathrm{sed}(G)$, then every minimal forbidden structure contains an edge in $E_-$.
  If $\mathrm{opt}(G) < \mathrm{sed}(G)$, then all minimal forbidden structures except for one $C_5$ (it is $S$) contain an edge in $E_-$, then $|S\cap C_M|\leq 2$ (by Proposition~\ref{lem:induced-c5}). 
 Therefore, to bound $|C_{M}|$, it suffices to count how many vertices in $C_{M}$ can form a $2K_2$ or $I_{0}$-centered $P_{3}$, or a $C_4$ or $C_5$ in $G - I_{0}$ with an edge $xy\in E_{-}$.
  \begin{itemize}
  \item  If a vertex $v\in C_{M}$ is in a $2 K_{2}$ with edge $x y$, then either $v\in \{x, y\}$ or $v$ is adjacent to neither $x$ nor $y$.  In the first case, no other vertex in $C_{M}$ can occur in any $2 K_2$ with $x y$.
  Since $x y\in E(G)$, at least one of them is not in $I_{0}$ (Rule~\ref{rule:i-edges(pseudo-split)}).  
  This vertex has at most $\sqrt{2 k}+1$ non-neighbors in $C_{M}$.  
  Therefore, the total number of vertices in $C_{M}$ that can occur in any $2 K_2$ with $x y$ is at most $\sqrt{2 k}+1$.
  \item If $x y$ is an edge in any $I_{0}$-centered $P_{3}$, then precisely one of them is in $I_{0}$.  
  Assume without loss of generality $x\in I_{0}$.  
  If a vertex $v\in C_{M}$ is in an $I_{0}$-centered $P_{3}$ with the edge $xy$, then either $v = y$, or $v$ is not adjacent to $y$. 
  If $v=y$, then there is no other vertex in $C_M$ that can be in an $I_0$-centered $P_3$ with $x y$.    
  Since $y\not\in I_{0}$, it has at most $\sqrt{2 k}+1$ non-neighbors in $C_{M}$.  
  Thus, the total number of vertices in $C_{M}$ that can occur in any $I_{0}$-centered $P_{3}$ containing $x y$ is at most $\sqrt{2 k} + 1$.
  \item  If a vertex $v\in C_{M}$ is in a $C_{4}$ or $C_{5}$ that contains $x y$, then $v$ is adjacent to at most one of $x$ and $y$.  
  Since this $C_{4}$ or $C_{5}$ is in $G - I_{0}$, then each of $x$ and $y$ has at most $\sqrt{2 k}+1$ non-neighbors in $C_{M}$.  
  Thus, the total number of vertices in $C_{M}$ that can occur in such a $C_4$ or $C_5$ is at most $2 \sqrt{2 k}+2$.
  \end{itemize}
Noting that an edge cannot satisfy both the second ($|\{x, y\}\cap I_{0}| = 1$) and third ($|\{x, y\}\cap I_{0}| = 0$) categories, we can conclude $|C_{M}| \le k(\sqrt{2 k} + 2 \sqrt{2 k}+2)+2 = 3 k \sqrt{2 k}+2k+2$.

  Since Rule~\ref{rule:simplicial(pseudo-split)} is not applicable, no vertex in $I_{M}$ is simplicial.
  Let $E_-$ be a set of at most $k$ edges such that $G-E_-$ is a pseudo-split graph and $C\uplus I\uplus S$ is a pseudo-split partition of $G-E_-$.
  Since $C$ is a clique, for each vertex $v\in I_{M}\cap I$, at least one neighbor of $v$ is in $I\cup S$.  
  Therefore, each vertex $v\in I_{M}\cap I$ is incident to an edge in the solution $E_-$.  
  Noting that $I_{M}$ is an independent set, we have
  $k \ge |I_{M}\cap I| \ge |I_{M}| - 1$, where the second inequality follows from Lemma~\ref{lem:pseudo-split partition}(i). 
  Note that $|I_M\cap S|\leq 2$ by Lemma~\ref{lem:pseudo-split partition}(ii). 
  Thus, $|I_{M}| \le k + 3$, and this concludes this proof.
\end{proof}

We use the algorithm described in Figure~\ref{fig:alg-split}.
The analysis of the kernels is the same as that of Theorem~\ref{thm:split}.
\begin{theorem}
  There is an $O(k^{1.5})$-vertex kernel for the pseudo-split edge deletion problem.  
    There is a kernel of $O(k^{1.5})$ vertices and $O(k^{2.5})$ edges for the pseudo-split completion problem.  
\end{theorem}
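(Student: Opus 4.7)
\medskip

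The plan is to reuse the algorithm of Figure~\ref{fig:alg-split} essentially verbatim, but swapping in the pseudo-split versions of each reduction rule and updating the numerical thresholds. First, I would build the modulator $M$ as a maximal packing of vertex-disjoint $2K_2$'s, $C_4$'s, and $C_5$'s, and abort with a trivial no-instance whenever $|M| > 4k+5$; the preceding lemma justifies this since any solution must destroy all but possibly one of the $C_5$'s in the packing. I would then loop through Rules~\ref{rule:simplicial(pseudo-split)}--\ref{rule:i-edges(pseudo-split)} and~\ref{rule:alternative(pseudo-split)}, each of which has already been shown safe for annotated pseudo-split instances, until none applies; finally, I would apply Rule~\ref{rule:clean-up(pseudo-split)} once to strip the annotation. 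The correctness of the output is then immediate from stacking the safeness statements.

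For the vertex bound, I would sum the contributions just after Rule~\ref{rule:clean-up(pseudo-split)} is invoked, using Lemma~\ref{lemma:bound(pseudo-split)}: $|M| \le 4k+5$, $|I_0| \le p \le k+2$ once Rule~\ref{rule:decided-vertices(pseudo-split)} is saturated (and Rule~\ref{rule:merge-I0(pseudo-split)} compresses $I_0$ to this size), $|C_M| \le 3k\sqrt{2k} + 2k + 2$, $|I_M| \le k+3$, plus the $\sqrt{2k}+1$ new vertices added by the final cleanup. Everything is $O(k^{1.5})$. For termination/running time, I would observe (as in the proof of Theorem~\ref{thm:split}) that each rule application either strictly decreases $k$ or strictly decreases $|V(G)\setminus I_0|$, so polynomially many iterations suffice and each is polynomial.

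For the completion statement, since pseudo-split graphs are self-complementary (the forbidden set $\{2K_2, C_4\}$ is closed under complementation), one may either apply the edge-deletion kernelization to $\overline G$ or run the complemented rules on $G$; in either case the resulting kernel has the same vertex count. To bound the edges, I would note that the kernel consists of a clique $C_M$ of size $O(k^{1.5})$, an independent set $I_M$ of size $O(k)$, the modulator $M$ of size $O(k)$, the compressed $I_0$ of size $O(k)$, and the added clique $K$ of size $O(\sqrt k)$. Upon complementation, $C_M$ becomes an independent set (contributing no edges), $I_M$ becomes a clique (contributing $O(k^2)$ edges), and all edges incident to $C_M$ in the complement go to the remaining $O(k)$ vertices, contributing at most $|C_M|\cdot O(k) = O(k^{2.5})$ edges; the remaining edges among $O(k)$ vertices contribute only $O(k^2)$. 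Thus $O(k^{2.5})$ edges in total.

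The main obstacle is really a bookkeeping one: checking that every constant-factor slippage caused by the presence of the size-$5$ set $S$ in a pseudo-split partition has already been absorbed by the rules and by Lemma~\ref{lemma:bound(pseudo-split)}, so that the proof can indeed mirror Theorem~\ref{thm:split} line by line. No fundamentally new combinatorial argument is needed at this stage; the non-trivial content has been packaged into Lemma~\ref{lem:safe-vertex-pseudo-split} and the safeness analyses of Rules~\ref{rule:simplicial(pseudo-split)}--\ref{rule:alternative(pseudo-split)}, which the proof can simply invoke.
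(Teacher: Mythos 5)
Your proposal is correct and follows the paper's approach essentially verbatim: the paper's own proof simply invokes the algorithm of Figure~\ref{fig:alg-split} with the pseudo-split variants of the rules and notes that the size analysis mirrors Theorem~\ref{thm:split}. Your account fills in the arithmetic (summing $|M|$, $|I_0|$, $|C_M|$, $|I_M|$, and the $\sqrt{2k}+1$ cleanup vertices) and the self-complementarity/edge-count argument for the completion variant, matching what the paper intends.
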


\bibliographystyle{plainurl}

\end{document}